\declaretheorem[name=Theorem]{thm}
\declaretheorem[name=Lemma]{lemma}
\DeclareMathOperator*{\inprob}{\stackrel{P}{\longrightarrow}}
\DeclareMathOperator*{\indist}{\stackrel{d}{\longrightarrow}}
\newcommand{\bounded}{O_P}
\newcommand{\fasterthan}{o_P}
\newcommand{\boundeddet}{O}
\DeclareMathOperator*{\logit}{logit}
\newcommand{\s}[1]{\mathcal{#1}}
\renewcommand{\d}[1]{\mathbb{#1}}
\newcommand{\n}[1]{\mathrm{#1}}
\DeclareMathOperator*{\argmin}{argmin}
\newcommand\independent{\protect\mathpalette{\protect\independenT}{\perp}}
\def\independenT#1#2{\mathrel{\rlap{$#1#2$}\mkern2mu{#1#2}}}
\newcommand{\m}{\text{\hspace{.02in}--}}
\def\namedlabel#1#2{\begingroup
    #2%
    \def\@currentlabel{#2}%
    \phantomsection\label{#1}\endgroup
}
\newcolumntype{x}[1]{%
>{\centering\hspace{0pt}}p{#1}}%
\begin{document}

\title{Inference for treatment-specific survival curves\\ using machine learning}
\author{Ted Westling \\ Department of Mathematics and Statistics \\ University of Massachusetts Amherst \\ twestling@math.umass.edu \and Alex Luedtke \\ Department of Statistics\\ University of Washington \\ aluedtke@uw.edu \and Peter Gilbert \\ Vaccine and Infectious Disease Division \\ Fred Hutchinson Cancer Research Center \\pgilbert@scharp.org \and Marco Carone\\ Department of Biostatistics\\ University of Washington \\mcarone@uw.edu}
\maketitle

\begin{abstract}
In the absence of data from a randomized trial, researchers often aim to use observational data to draw causal inference about the effect of a treatment on a time-to-event outcome. In this context, interest often focuses on the treatment-specific survival curves; that is, the survival curves were the entire population under study to be assigned to receive the treatment or not. Under certain causal conditions, including that all confounders of the treatment-outcome relationship are observed, the treatment-specific survival can be identified with a covariate-adjusted survival function. Several estimators of this function have been proposed, including estimators based on outcome regression, inverse probability weighting, and doubly robust estimators. In this article, we propose a new cross-fitted doubly-robust estimator that incorporates data-adaptive (e.g.\ machine learning) estimators of the conditional survival functions. We establish conditions on the nuisance estimators under which our estimator is consistent and asymptotically linear, both pointwise and uniformly in time. We also propose a novel ensemble learner for combining multiple candidate estimators of the conditional survival estimators. Notably, our methods and results accommodate events occurring in discrete or continuous time (or both). We investigate the practical performance of our methods using numerical studies and an application to the effect of a surgical treatment to prevent metastases of parotid carcinoma on mortality.
\end{abstract}

\doublespacing

\clearpage

\section{Introduction}\label{intro}

\subsection{Motivation and literature review}

The gold standard for assessing the causal effect of a binary treatment on a time-to-event outcome is a randomized control trial  in which participants are randomly assigned to treatment or control and followed over time. The effect of treatment may then be assessed by comparing the fraction of participants who experience the event by the end of the study in the treatment and control arms. However, some participants' outcomes may be unknown for various reasons, such as dropping out of the study or moving away from the study site, which is known as right-censoring of the event time. If the time of right-censoring is independent of the event time conditional on treatment arm, then contrasts of the stratified Kaplan-Meier estimators can be used to assess the treatment effect \citep{kaplan1958}.

Randomizing treatment status is often infeasible or unethical, or preliminary evidence may be needed to justify the cost of conducting a randomized trial. In such cases, researchers may turn to observational data --- obtained, for example, from cohort studies, registries, or electronic medical records. In such contexts, the treatment or exposure is not randomized, but instead assigned or selected according to an unknown mechanism. Assessing the causal effect of a treatment on a time-to-event outcome with observational data is challenging due to confounding of the treatment-outcome relationship. When there are confounding variables that affect the treatment selection or assignment process and also impact the outcome, simple approaches such as contrasts of stratified Kaplan-Meier estimators typically cannot be interpreted causally. Any observed differences (or lack thereof) in the outcome between those who received treatment and those who did not may be due to the confounding variables rather than the treatment. Even if the treatment is randomized, dependence of the event and censoring times can also render the Kaplan-Meier estimator inconsistent and resulting inference invalid.

If the recorded covariates are rich enough to de-confound the treatment-outcome, treatment-censoring, and outcome-censoring relationships, then a causal effect may still be recovered, and there are a variety of existing methods for doing so. The most common approach consists of fitting a time-to-event regression model, such as a Cox proportional hazards model \citep{cox1972regression}. If the Cox model holds, then the exponentiated regression coefficient corresponding to treatment can be interpreted as a conditional hazard ratio comparing treated and control patients. Alternatively, any time-to-event regression model can be marginalized using the G-formula to obtain estimated treatment-specific survival curves corresponding to the hypothetical scenarios in which all patients are assigned to treatment or control \citep{makuch1982adjusted}. However, if the model is misspecified, the treatment regression coefficient estimator may not be consistent for a scientifically meaningful quantity, and resulting marginalized survival curves will typically be inconsistent. 
As an alternative to outcome regression models, inverse probability weighting may be used \citep{cole2004adjusted}. However, their consistency hinges on consistent estimation of both the treatment assignment mechanism and the censoring distribution. Finally, if the event time of interest is known to take values on a finite grid of time-points, then methods for longitudinal data can be used --- see, e.g.\ \cite{rotnitzky2012improved} and references therein. However, using a discrete-time approximation for an event truly occurring in continuous time generally yields inconsistent estimators \citep{guerra2020discretization}.

Doubly-robust estimators combine regression and weighting estimators in such a way that the bias of the resulting estimator is a product of the biases of the outcome regression and weighting function estimators. As a result, doubly-robust estimators are consistent if \emph{either} the outcome regression function \emph{or} the weighting function estimators are consistent. Furthermore, doubly-robust estimators can converge in distribution to a normal limit at the parametric rate even when flexible (e.g.,\ machine learning) procedures are used to construct the outcome regression and weighting function estimators.

Several doubly-robust estimators of treatment-specific survival curves in continuous time have been proposed. \cite{zeng2004marginal} proposed an estimator that is consistent as long as either the conditional time-to-event or censoring distributions follows a Cox model. \cite{zhang2012dr} proposed an estimator that is consistent as long as either the conditional time-to-event distribution follows a Cox model or the treatment assignment mechanism follows a logistic regression model. Finally,  \cite{hubbard2000survival} and \cite{bai2013dr} proposed doubly-robust estimators based on semiparametric efficiency theory, and suggested using common semiparametric regression models to estimate the conditional time-to-event and censoring distributions.

\subsection{Contribution and organization of the article}

To the best of our knowledge, the use of machine learning techniques for doubly-robust estimation of treatment-specific survival curves (and contrasts thereof) permitting events occurring in continuous time has not yet been studied. 
In this article, we  fill this gap in the literature. Specifically, in this paper, we make the following contributions:
\begin{enumerate}[(1)]
\item we derive a nonparametric identification of the treatment-specific counterfactual survival;
\item we propose a novel cross-fitted one-step estimator of the treatment-specific survival curve that permits the use of machine learning for nuisance function estimation;
\item we provide general conditions under which our estimator is (uniformly) consistent and (uniformly) asymptotically linear;
\item we propose methods for pointwise and uniform inference;
\item we propose a novel ensemble learner for combining multiple candidate estimators of the conditional survival functions.
\end{enumerate}In addition, we conduct a numerical study and apply our results to assess the effect of  elective neck dissection on all-cause mortality using an observational cohort of patients with parotid carcinomas.

For ease of use, we have made the estimator and associated inferential procedures proposed here available through the \texttt{R} package \texttt{CFsurvival} (https://github.com/tedwestling/CFsurvival), and we have implemented the method proposed in Section~\ref{sec:nuisance} for estimating conditional survival functions in the \texttt{R} package \texttt{survSuperLearner} (https://github.com/tedwestling/survSuperLearner).


\section{Statistical setting and parameters of interest}

\subsection{Ideal and observed data structures}

We now define the ideal data structure we consider in temporal order. As we discuss below, we only observe a coarsening of  this ideal data structure. First, we record a vector $W$ of baseline covariates taking values in $\s{W} \subseteq \d{R}^d$. After recording $W$, but prior to time $t=0$, we observe a binary exposure $A\in\{0,1\}$. Adopting the Neyman-Rubin potential outcomes framework \citep{neyman1923, rubin1974estimating}, we let $T(a)$ be the event time of interest under assignment to exposure $A = a$. We assume that for $a \in \{0,1\}$, $T(a)$ takes values in $(0, \infty]$. Since we assume that $T(a) > 0$, all patients start the study without having experienced the event of interest, and since we allow $T(a) = \infty$, some patients may never experience the event. We then let $C(a)$ be a right-censoring time under assignment to exposure $A=a$, and we assume that $C(a) \in [0,\infty]$. Since we allow $C(a) = 0$, patients may be censored immediately if, for instance, a patient is lost to follow-up just after the exposure $A$ is recorded.  We define $O_F := (W, A, T(0), T(1), C(0), C(1))$ to be the ideal data unit, and denote by $P_{0,F}$ the distribution of $O_F$. Throughout, we assume that each patient's potential event and censoring times are independent of all other patients' exposures. 

We now describe the coarsened version of $O_F$ actually observed. We denote by $T := T(A)$ and $C:=C(A)$ the event and censoring times corresponding to the exposure received. We assume that the right-censored time $Y:=\min\{T, C\}$ and the event indicator $\Delta := I(T \leq C)$ are observed for each patient. Thus, the available data consist of $n$ independent and identically distributed observations $O_1, O_2,\dotsc, O_n$ of the observed data unit $O := \left(W, A, Y, \Delta\right)$. We denote by $P_0$ the distribution of the observed data unit, as induced by the distribution $P_{0,F}$ of the ideal data unit.

Throughout, we denote summaries of $P_0$ with the subscript $0$, e.g.,\ $E_0[ f(O)] := E_{P_0}[f(O)]$, and summaries of $P_{0,F}$ with subscript $0,F$. In cases where $f$ is a random function, the expectation $E_0[f(O)]$  should be understood as being taken with respect to the distribution of the random unit $O$, but not the function $f$. In addition, we let $a \wedge b$ denote $ \min\{a, b\}$,  $\d{P}_n$ be the empirical distribution corresponding to $O_1, O_2,\dotsc, O_n$, and $Pf := \int f(o) \, dP(o)$ for any probability measure $P$ and $P$-measurable function $f$. Finally, throughout, we use the convention $0/0 := 1$.

\subsection{Causal parameter of interest and identification}

In this article, we are interested in the causal survival curves $t\mapsto \theta_{0,F}(t,a) := P_{0,F}(T(a) > t)$ for $a \in \{0,1\}$ and $t \in [0,\tau]$ for some positive $\tau<\infty$. Thus defined, $\theta_{0,F}(t,0)$ represents the population probability that a patient would experience the event later than time $t$ if, contrary to fact, all patients were assigned to receive the control exposure $(A = 0)$, and $\theta_{0,F}(t,1)$ represents the same if, contrary to fact, all patients were assigned to receive the treatment under study $(A = 1)$. In addition to the exposure-specific survival curves $t \mapsto \theta_{0,F}(t,0)$ and $t \mapsto \theta_{0,F}(t,1)$, we are interested in survival contrasts including the survival difference $t \mapsto \theta_{0,F}(t,1) - \theta_{0,F}(t,0)$, survival ratio $t \mapsto \theta_{0,F}(t,1) / \theta_{0,F}(t,0)$, and risk ratio $t \mapsto [1-\theta_{0,F}(t,1)] / [1-\theta_{0,F}(t,0)]$ functions.

Under certain conditions, we can identify the causal parameter $\theta_{0,F}(t,a)$ in terms of the distribution $P_0$ of the observed data unit. If the conditions (i) $T(a) \independent A$, (ii) $C(a) \independent A$, (iii) $T(a) \independent C(a) \,|\, A$, (iv) $P_0(A =a) > 0$, and (v) $P_{0,F}(C(a) \geq t) > 0$ hold, then  \[\theta_{0,F}(t, a) = \bar{S}_0(t \,|\, a) := \prodi_{(0,t]} \{ 1- \Lambda_0(du \,|\, a)\}\ ,\] where $\Lambda_0(t \,|\, a) := \int_0^t F_{0,1}(du \,|\, a) / R_0(u\,|\,a)$ with $F_{0,1}(u \,|\, a) := P_0(Y \leq u, \Delta = 1 \,|\, A=a)$ and $R_0(u \,|\, a) = P_0(Y \geq u \,|\, A =a)$, and where $\prodi$ denotes the Riemann-Stieltjes product integral \citep{gill1990product}.  These are the standard parameters estimated by the treatment group-specific Kaplan-Meier estimators in the context of randomized control trials. Indeed, when $A$ is randomized, conditions (i), (ii) and (iv) are automatically satisfied. If patients are uncensored at time $\tau$ with probability one, then conditions (iii) and (v) are satisfied as well. If some patients are censored before time $\tau$, then conditions (iii) and (v) represent the \emph{independent censoring} assumption, which we note may fail to hold even in the context of a randomized trial.

We note that the product integral reduces to a product $\prod_{t_j \leq t} \{1 - \lambda_0(t_j \, | \, a)\}$ in the case of events occurring in discrete time, and reduces to the exponential $\exp\{-\Lambda_0(t \,|\, a)\}$ in the case of events occurring in continuous time. Here and throughout, we use the product integral in order to permit either of these more familiar cases, or a time-to-event distribution including both discrete and continuous components.

In observational studies, where the exposure $A$ is not randomized, the exposure-outcome, exposure-censoring, or outcome-censoring relationships are often confounded. In such cases, the parameter $\bar{S}_0(t \,|\, a)$ of the observed data distribution  may no longer coincide with $\theta_{0,F}(t,a)$. However, if $W$ is a sufficiently rich set of recorded pre-exposure covariates,  then $\theta_{0,F}(t,a)$ may still be identified as a function of the observed data distribution. Specifically, we introduce the following identification conditions, which are specific to the fixed values of $\tau \in (0, \infty)$ and $a \in \{0,1\}$:
\begin{description}[style=multiline,leftmargin=2cm, labelindent=.9cm]
	\item[\namedlabel{itm:t_exchang}{(A1)}] $T(a)I(T(a) \leq \tau) \independent A \,|\, W$;\vspace{-.075in}
	\item[\namedlabel{itm:c_exchang}{(A2)}] $C(a) I(C(a) \leq \tau) \independent A \,|\, W$;\vspace{-.075in}
	\item[\namedlabel{itm:tc_exchang}{(A3)}] $T(a) I(T(a) \leq \tau) \independent C(a)I(C(a) \leq \tau) \,|\, A=a, W$;\vspace{-.075in}
	\item[\namedlabel{itm:a_pos}{(A4)}] $P_0(A = a\,|\, W) > 0$ $P_0$-almost surely;\vspace{-.075in}
	\item[\namedlabel{itm:c_pos}{(A5)}] $P_{0,F}(C(a) \geq \tau \,|\, W) > 0$  $P_0$-almost surely.
\end{description}
Defining $F_{0,1}(t \,|\, a, w) := P_0(Y \leq t, \Delta = 1 \,|\, A = a, W = w)$, $R_0(t \,|\, a, w) := P_0(Y \geq t\,|\, A = a, W = w)$, and $\Lambda_0(t \,|\, a, w) := \int_0^t F_{0,1}(du \,|\, a, w) / R_0(u \,|\, a, w)$ for each $(t,a,w)$, we have the following identification result.
\begin{thm}\label{thm:ident}
If  conditions (A1)--(A5) hold for some $a \in \{0,1\}$ and $\tau \in (0, \infty)$, then $P_{0,F}(T(a) > t \,|\, W) = S_0(t \,|\, a, W)$ $P_0$-almost surely for all $t \in [0, \tau]$, where
$S_{0}(t \,|\, a, w) := \prodi_{(0,t]} \left\{ 1 - \Lambda_0(du \,|\, a, w) \right\}$, and so  $\theta_{0,F}(t, a) = E_0\left[ S_0(t \,|\, a, W)\right]$.
\end{thm}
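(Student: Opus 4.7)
The plan is to prove the identification in two stages: first, reduce the counterfactual conditional survival $P_{0,F}(T(a) > t \,|\, W)$ to the observable conditional survival $P_0(T > t \,|\, A = a, W)$ using conditions \ref{itm:t_exchang} and \ref{itm:a_pos}; second, express $P_0(T > t \,|\, A = a, W)$ in terms of the observed-data functionals $F_{0,1}$ and $R_0$ using conditions \ref{itm:c_exchang}, \ref{itm:tc_exchang}, and \ref{itm:c_pos}, together with the Volterra product-integral representation of a survival function in terms of its cumulative hazard. The marginal identification $\theta_{0,F}(t, a) = E_0[S_0(t \,|\, a, W)]$ will then follow by integrating out $W$.

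For the first stage, I would observe that for $t \in [0,\tau]$ the indicator $I(T(a) > t)$ is a measurable function of the truncated random variable $T(a) I(T(a) \leq \tau)$: using $T(a) > 0$ gives $\{T(a) > \tau\} = \{T(a) I(T(a) \leq \tau) = 0\}$ and $\{t < T(a) \leq \tau\} = \{T(a) I(T(a) \leq \tau) \in (t,\tau]\}$. Combined with \ref{itm:t_exchang} and the positivity \ref{itm:a_pos}, this yields $P_{0,F}(T(a) > t \,|\, W) = P_{0,F}(T(a) > t \,|\, A = a, W)$, and the consistency identity $T = T(A)$ then gives $P_0(T > t \,|\, A = a, W)$.

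For the second stage, an analogous truncation argument applied to $C(a) I(C(a) \leq \tau)$, combined with \ref{itm:c_exchang} and \ref{itm:c_pos}, shows that $P_0(C \geq u \,|\, A = a, W = w)$ is strictly positive for $u \in [0,\tau]$ $P_0$-almost surely. Condition \ref{itm:tc_exchang}, together with consistency, further allows factoring, for $u \in (0,\tau]$, the joint conditional laws into marginals: $P_0(Y \geq u \,|\, A = a, W = w) = P_0(T \geq u \,|\, A = a, W = w)\,P_0(C \geq u \,|\, A = a, W = w)$, and likewise $F_{0,1}(du \,|\, a, w) = P_0(T \in du \,|\, A = a, W = w)\,P_0(C \geq u \,|\, A = a, W = w)$. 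The strictly positive censoring factor then cancels in the ratio defining $\Lambda_0(du \,|\, a, w)$, identifying $\Lambda_0(\cdot \,|\, a, w)$ as the genuine conditional cumulative hazard of $T$ given $A = a, W = w$. Applying the product-integral representation of \cite{gill1990product}, which handles the discrete, continuous, and mixed cases uniformly, then gives $P_0(T > t \,|\, A = a, W = w) = \prodi_{(0,t]}\{1 - \Lambda_0(du \,|\, a, w)\} = S_0(t \,|\, a, w)$; combining with the first stage and taking expectation over $W$ yields the theorem.

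The main bookkeeping obstacle I anticipate is tracking how the truncation $I(\cdot \leq \tau)$ present in \ref{itm:t_exchang}--\ref{itm:tc_exchang} propagates through the argument: since these assumptions only govern truncated potential times, each appeal to conditional independence must be preceded by verifying that the relevant event on $[0,\tau]$ (such as $\{T > t\}$, $\{C \geq u\}$, $\{Y \geq u\}$, or $\{Y \in du, \Delta = 1\}$) is measurable with respect to the appropriate truncated $\sigma$-algebra. Once $\Lambda_0(\cdot \,|\, a, w)$ has been identified as a bona fide conditional cumulative hazard on $[0,\tau]$, invoking the product-integral representation is a textbook step.
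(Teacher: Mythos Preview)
Your proposal is correct and follows essentially the same route as the paper: reduce $P_{0,F}(T(a)>t\,|\,W)$ to $P_{0,F}(T>t\,|\,A=a,W)$ via the truncation-measurability argument and \ref{itm:t_exchang}--\ref{itm:a_pos}, then use \ref{itm:c_exchang}--\ref{itm:c_pos} to factor $R_0$ and $F_{0,1}$, cancel the (positive) censoring factor in $\Lambda_0$, and invoke the product-integral representation of \cite{gill1990product}. The paper carries out exactly these steps, with the same care about truncation that you flag as the main bookkeeping point.
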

Theorem~\ref{thm:ident} is a combination of the G-formula (also known as the backdoor or the regression standardization formula) from causal inference \citep{robins1986, gill2001} and the identification of a survival function in the context of dependent censoring \citep{beran1981censored, dabrowska1989uniform}.  Condition~\ref{itm:t_exchang} is a restricted form of exchangeability of the exposure-outcome relationship. The restriction to the event $T(a) \leq \tau$ allows aspects of the event mechanism occurring after time $\tau$ to depend on $A$, which is permitted if we only want to identify the survival probability up to time $\tau$. Conditions~\ref{itm:c_exchang} and~\ref{itm:tc_exchang} are analogously restricted forms of exchangeability of the exposure-censoring and outcome-censoring relationships, respectively. Notably, condition~\ref{itm:tc_exchang} permits that the event and censoring times be dependent, as long as they are conditionally independent given $A$ and $W$. Condition~\ref{itm:a_pos} is the usual positivity condition for the exposure, and condition~\ref{itm:c_pos}  requires that there is a positive probability of remaining uncensored in almost every stratum defined by $W$. Hence, if it is known that all patients will be censored with probability one by time $\tau$, then we can only estimate $\theta_{0,F}(t,a)$ for $t \leq \tau$. If it is known that patients within some strata of $W$ will be censored with probability one prior to $t_0 < \tau$, we can either estimate $\theta_{0,F}(t,a)$ only up to time $t_0$, or exclude  such patient subpopulations from the target population.


If conditions (A1)--(A5) hold for both $a =0$ and $a = 1$, then Theorem~\ref{thm:ident} implies that $\theta_{0,F}(t,a) = \theta_0(t,a)$ for every $t \in [0, \tau]$ and $a \in \{0,1\}$, where
\begin{equation} \theta_0(t, a) := E_0 [ S_0(t \,|\, a, W)]\ . \label{eq:obs_param}\end{equation}
This latter parameter is referred to as the \emph{G-computed probability} that the event $T$ occurs after time $t$ given that exposure $A$ is set to $a$. This parameter measures the survival probability under exposure $A =a$ while adjusting for potential confounding between the exposure and the event of interest and for dependence between the event and censoring times. The curves $\{ \theta_0(t, 0) : t \in [0, \tau]\}$ and $\{ \theta_0(t, 1) : t \in [0, \tau]\}$, and contrasts thereof, are the  observed-data statistical parameters we focus on.

Even when conditions (A1)--(A5) do not strictly hold, so that a causal interpretation of $\theta_0(t,a)$ may not be appropriate, $\theta_0(t,a)$ may still be of greater scientific interest than the unadjusted survival probability $\bar{S}_0(t \,|\, a)$. The first reason for this is that $\theta_0$ allows for the adjustment of covariates related to both $A$ and $T$. As a result, $\theta_0$ can always be interpreted as the average probability that $T$ exceeds $t$ in a hypothetical population of patients with $A = a$ but with a distribution of the covariate vector $W$ identical to that in the target population. The second reason is that adjusting for $W$ allows the relaxation of the marginal independent censoring assumption $T(a) \independent C(a) \,|\, A$ to a conditional independent censoring assumption $T(a) \independent C(a) \,|\, A, W$. This relaxation can be important in contexts where the event and censoring times may be dependent, but the recorded covariate vector $W$ at least partly explains the dependence between them.



\section{Estimation}\label{survival}

\subsection{Efficiency calculations}

In this section, we describe the proposed methodology for nonparametric efficient estimation of the treatment-specific G-computed survival function $\{\theta_0(t, a) : t \in [0, \tau]\}$. The definition of our estimator involves  three steps. First, we characterize the behavior of nonparametric efficient estimators of the estimand of interest; notably, this is done by deriving the nonparametric efficient influence function (EIF) of $\theta_0(t,a)$ for each $t$ and $a$. Second, we use the explicit form of the EIF obtained to construct an efficient estimator of $\theta_0(t,a)$ for each $t$ and $a$. Finally, we describe a simple procedure to ensure that the resulting survival curves are monotone.

We define $\pi_0(a \,|\, w) := P_0(A = a \,|\, W = w)$. Below, we make use of the fact that, for any $(a,w)$ such that $S_0(u- \,|\, a, w) > 0$, $P_{0,F}(C \geq u \,|\, A = a, W =w)$ can be identified as a mapping of the distribution of the observed data under conditions (A1)--(A5) as
\[ P_{0,F}(C \geq u \,|\, A = a, W =w)= G_0(u\,|\, a, w) := \Prodi_{[0,u)} \left\{1 - H_0(ds \,|\, a, w) \right\},\]
where we define $H_0(u \,|\, a, w) := \int_{[0,u]} \left\{\frac{S_0(s\m \,|\, a, w)}{S_0(s \,|\, a, w)}\right\}\frac{F_{0,0}(ds \,|\, a, w)}{R_0(s\,|\, a, w)}$ and $F_{0,0}(u \,|\, a, w) := P_0(Y\leq u,\Delta = 0 \,|\, A = a, W = w)$. We emphasize that $G_0$ is defined as the left-continuous conditional survival function of $C$, whereas $S_0$ is defined as the right-continuous conditional survival function of $T$.

In Theorem~\ref{thm:eif}, we present the nonparametric efficient influence function of $\theta_0(t, a_0)$, where we use $a_0$ rather than $a$ to denote the exposure value of interest in order not to confuse values of the random variable $A$ with the specific $a_0$ at which we want to evaluate $\theta_0$. 
\begin{thm}\label{thm:eif}
If there exists $\eta > 0$ such that $\min\{\pi_0(a_0 \,|\, w),G_0(t \,|\, a_0, w)\} \geq \eta$ for $P_0$-almost every $w$ such that $S_0(t\,|\, a_0, w) > 0$, then $\theta_{0}(t, a_0)$ is a pathwise differentiable parameter in a nonparametric model with efficient influence function  $\phi_{0,t, a_0}^*:= \phi_{0,t,a_0} -  \theta_{0}(t, a_0)$, where $\phi_{0,t, a_0}(y, \delta, a, w)$ equals
\begin{align*}
S_0(t \,|\, a_0, w)\left[ 1 - \frac{I(a = a_0)}{ \pi_0(a \,|\, w)}\left\{ \frac{I(y \leq t, \delta = 1)}{S_0( y \,|\, a, w) G_0(y \,|\, a,  w)} - \int_{0}^{t \wedge y}\frac{\Lambda_0(du \,|\, a, w)}{S_0(u \,|\, a, w)G_0(u \,|\, a, w)} \right\} \right].
\end{align*}\end{thm}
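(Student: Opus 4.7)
The plan is to verify the stated formula directly using the pathwise-derivative characterization of nonparametric canonical gradients. In a nonparametric model, any mean-zero finite-variance element of $L_2(P_0)$ that satisfies the pathwise-differentiability identity for $\theta_0(t,a_0)$ is automatically the efficient influence function, so the proof reduces to: (a) constructing a rich enough family of regular one-dimensional submodels $\{P_\epsilon\}$ through $P_0$, (b) computing $\dot\theta_0 := \tfrac{d}{d\epsilon}\theta_\epsilon(t,a_0)\big|_{\epsilon=0}$ in terms of the submodel score $s$, and (c) checking that $\dot\theta_0=E_0[\phi_{0,t,a_0}^*(O)\,s(O)]$ for every such $s$.

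The likelihood of $O=(W,A,Y,\Delta)$ factorizes variationally into four independent pieces: the marginal density of $W$, the propensity $\pi_0(a\,|\,w)$, the conditional cumulative hazard $\Lambda_0(\cdot\,|\,a,w)$ of the event time, and the conditional cumulative hazard of the censoring time. A nonparametric score thus decomposes orthogonally as $s=s_W+s_\pi+s_T+s_C$, with each summand ranging freely over an appropriate mean-zero subspace; the event-time and censoring components admit the familiar martingale representation $s_T=\int h_T(u;A,W)\,dM_0^T(u)$ and $s_C=\int h_C(u;A,W)\,dM_0^C(u)$, where $M_0^T$ and $M_0^C$ are the counting-process martingales associated with $T$ and $C$ under $P_0(\cdot\,|\,A,W)$. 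Because $\theta_0$ depends on $P_0$ only through the marginal law of $W$ and through $\Lambda_0(\cdot\,|\,a_0,\cdot)$, only $s_W$ and $s_T$ contribute to $\dot\theta_0$, and the corresponding perturbation of $\Lambda_0$ satisfies $\dot\Lambda_\epsilon(du\,|\,a_0,w)\big|_{\epsilon=0}=h_T(u;a_0,w)\,\Lambda_0(du\,|\,a_0,w)$.

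Applying the product rule together with the Duhamel-style differentiation formula for the product integral yields
\begin{align*}
\dot\theta_0 = E_0\bigl[(S_0(t\,|\,a_0,W)-\theta_0)\,s_W(W)\bigr] - E_0\!\left[S_0(t\,|\,a_0,W)\int_{(0,t]}\frac{h_T(u;a_0,W)\,\Lambda_0(du\,|\,a_0,W)}{1-\Delta\Lambda_0(u\,|\,a_0,W)}\right].
\end{align*}
To bring this into the claimed form I would introduce the factor $I(A=a_0)/\pi_0(a_0\,|\,W)$ (whose conditional mean on $W$ equals $1$), then write the integral as an $L_2$ inner product against $s_T$ by choosing the predictable integrand $\tilde h(u;a,w)=-\tfrac{I(a=a_0)}{\pi_0(a\,|\,w)}\tfrac{S_0(t\,|\,a_0,w)}{S_0(u\,|\,a_0,w)\,G_0(u\,|\,a_0,w)}I(u\leq t)$; the Doob inner-product formula $E_0[\int\tilde h\,dM_0^T\cdot\int h_T\,dM_0^T\,|\,A,W]=\int\tilde h\,h_T\,I(Y\geq u)\,d\Lambda_0$ together with the identity $P_0(Y\geq u\,|\,A=a,W=w)=S_0(u\m\,|\,a,w)\,G_0(u\,|\,a,w)$ and $S_0(u\m)=S_0(u)/(1-\Delta\Lambda_0(u))$ converts the denominators into the displayed $S_0\,G_0$ form. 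A routine rearrangement then gives $\dot\theta_0=E_0[\phi_{0,t,a_0}(O)\,s(O)]$; the contributions of $s_\pi$ and $s_C$ vanish automatically because $E[\tilde h\,\cdot\, dM_0^T\,|\,A,W]=0$ and because $M_0^T\perp M_0^C$ under $P_0(\cdot\,|\,A,W)$. Subtracting $\theta_0(t,a_0)$ (a constant, hence $L_2$-orthogonal to any score) yields the mean-zero representative $\phi_{0,t,a_0}^*$.

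The main obstacle is the product-integral differentiation together with the bookkeeping required to accommodate an arbitrary mixed (continuous + atomic) $\Lambda_0$: the denominator $1-\Delta\Lambda_0(u\,|\,a_0,w)$ and the careful distinction between $S_0(u\,|\,a,w)$ and $S_0(u\m\,|\,a,w)$ at atoms must be tracked exactly for the formula to hold uniformly across discrete, continuous, and mixed event-time laws. A secondary, more algebraic, difficulty is the manipulation that recasts the inner product in the compact ``jump term minus compensator'' shape displayed in the theorem statement; the positivity hypothesis $\min\{\pi_0,G_0\}\geq\eta$ enters precisely at this step, guaranteeing that all denominators appearing in $\phi_{0,t,a_0}^*$ are bounded away from zero on the relevant support and hence that $\phi_{0,t,a_0}^*\in L_2(P_0)$, which is needed to conclude that $\phi_{0,t,a_0}^*$ is genuinely the efficient influence function rather than merely a formal gradient.
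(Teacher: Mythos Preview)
Your plan is correct and would yield a valid proof, but it proceeds along a different route than the paper. The paper differentiates $\theta_\epsilon(t,a_0)=\int S_\epsilon(t\,|\,a_0,w)\,dQ_\epsilon(w)$ directly along a generic regular path, expanding $\Lambda_\epsilon$ through the observable sub-distribution $F_{\epsilon,1}(\cdot\,|\,a_0,w)$ and at-risk function $R_\epsilon(\cdot\,|\,a_0,w)$, and then simplifies the resulting expression using $R_0(u\,|\,a_0,w)=S_0(u\m\,|\,a_0,w)G_0(u\,|\,a_0,w)$ until the displayed $\phi_{0,t,a_0}$ emerges. There is no explicit tangent-space decomposition or counting-process martingale in the paper's argument; the mean-zero property needed to pass from a conditional score $\dot\ell_0(y,\delta\,|\,a_0,w)$ to the full score $\dot\ell_0(y,\delta,a,w)$ is verified by a direct computation showing that the bracketed term has conditional mean zero given $(A,W)$.

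Your approach instead exploits the likelihood factorization $s=s_W+s_\pi+s_T+s_C$ from the outset and represents the event-time score in martingale form, so the Riesz representer is built componentwise and the vanishing of the $s_\pi$ and $s_C$ contributions is structural rather than computational. This is the classical counting-process route and is arguably cleaner conceptually; it also makes the role of the positivity condition (boundedness of the integrand $\tilde h$, hence $L_2$-membership of $\phi^*$) more transparent. The paper's route, by contrast, avoids ever writing down the martingale $M_0^T$ and works entirely with distribution functions, which keeps the argument self-contained relative to the product-integral calculus of Gill and Johansen already invoked elsewhere.

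One small caution on your intermediate display: in the fully general mixed case the relationship between the martingale score direction $h_T$ and the hazard perturbation is $\dot\Lambda_0(du)=h_T(u)(1-\Delta\Lambda_0(u))\,\Lambda_0(du)$, not $h_T(u)\,\Lambda_0(du)$, so the factor $1/(1-\Delta\Lambda_0)$ in your formula for $\dot\theta_0$ cancels and the Duhamel derivative is simply $-S_0(t)\int_{(0,t]}h_T\,d\Lambda_0$. This is precisely the bookkeeping hazard you flag as the main obstacle; once corrected, the predictable-covariation identity (with its own $(1-\Delta\Lambda_0)$ factor) produces your $\tilde h$ exactly as written, so your final answer is unaffected.
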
 We note that \cite{hubbard2000survival} and \cite{bai2013dr} also derived the efficient influence function of $\theta_0(t,a_0)$ in a nonparametric model, but the form of the influence function presented in Theorem~\ref{thm:eif} is somewhat simpler than the forms previously published because it is parametrized in terms of variation independent nuisance functions.
We also note that if there is no covariate vector $W$ to adjust for, so that $\theta_0(t, a) =  \bar{S}_0(t \,|\, a)$ is the unadjusted conditional survival function, then $\phi_{0,t,a_0}^*$ reduces to the influence function of the stratified Kaplan-Meier estimator \citep{reid1980influence}.

\subsection{Cross-fitted one-step estimator}

The efficient influence function $\phi_{0,t, a_0}^*$ involves three variation-independent nuisance functions: $S_0$, $G_0$ and $\pi_0$. We  discuss estimation of these functions in Section~\ref{sec:nuisance}. We note that $\Lambda_0$ and $S_0$ are in one-to-one correspondence with one another, so estimating $S_0$ gives an estimator of $\Lambda_0$ and vice-versa. Given estimators $S_n$, $G_n$ and $\pi_n$ of $S_0$, $G_0$ and $\pi_0$, respectively, there are multiple possible asymptotically linear and efficient estimators of $\theta_{0}(t, a)$. Denoting by $\phi_{n,t,a}$ the function $\phi_{0,t,a}$ with $S_0$, $\Lambda_0$, $G_0$ and $\pi_0$ replaced by their respective estimators, the standard one-step estimator would be $\d{P}_n \phi_{n,t,a}$, which is also an estimating equations-based estimator in this case because the influence function is linear in $\theta_{0}(t, a)$. Asymptotic linearity of estimators of this type depend on nuisance estimators in two important ways. First, negligibility of a so-called \emph{second-order} remainder term requires that the nuisance parameters converge at fast enough rates to their true counterparts. Second, negligibility of an \emph{empirical process} remainder term can be guaranteed if the nuisance estimators fall in sufficiently small function classes with probability tending to one. In observational studies, researchers can rarely specify correct parametric models for nuisance parameters a priori, which motivates the use of data-adaptive  estimators. However, data-adaptive estimators --- especially the ensemble estimators that we propose in Section~\ref{sec:nuisance} --- typically fail to fall in small function classes. This poses a challenge in simultaneously achieving negligibility of these two remainder terms. Cross-fitting has been found to resolve this challenge by removing this constraint on the complexity of nuisance estimators (see, e.g.\ \citealp{bickel1982,robins2008higher, zheng2011cvtmle}, among many others). Therefore, we will use a cross-fitted version of the one-step estimator stated above, which we now define.

For a deterministic integer $K \in \{2, 3, \dotsc, \lfloor n/2\rfloor \}$, we randomly partition the indices $\{1,2, \dotsc, n\}$ into $K$ disjoint sets $\s{V}_{n,1}, \s{V}_{n,2}, \dotsc, \s{V}_{n,K}$ with cardinalities $n_1,n_2, \dotsc, n_K$. We require that these sets be of as close to equal sizes as possible, so that $|n_k - n/K| \leq 1$ for each $k$, and that the number of folds $K$ be bounded as $n$ grows. For each $k \in \{1, 2,\dotsc, K\}$, we define $\s{T}_{n,k} := \{ O_i : i \notin \s{V}_{n,k}\}$ as the \emph{training set} for fold $k$. We then define $S_{n,k}$, $G_{n,k}$, $\pi_{n,k}$ and $\Lambda_{n,k}$ as nuisance estimators estimated using only the observations from the training set $\s{T}_{n,k}$, and $\phi_{n,k, t,a}$ as the function $\phi_{0,t,a}$ in which these nuisance estimators have substituted their true counterparts. We then define the cross-fitted one-step estimator $\theta_n(t,a)$ of $\theta_0(t,a)$ pointwise as
\begin{align}
\theta_{n}(t,a) &:= \frac{1}{n} \sum_{k=1}^K \sum_{i\in \s{V}_{n,k}}\phi_{n,k, t,a}(O_i)\ . \label{eq:one_step}
\end{align}
We note that, once the nuisance functions are estimated, $\theta_n(t,a)$ can be efficiently  computed for many time-points $t$ because the same nuisance function estimators can be re-used for each $t$.

How the integral in $\phi_{n,k, t,a}$  is computed depends on the form of $S_{n,k}$. If $S_{n,k}$ is defined as a step function, then the integral reduces to a sum.  If $S_{n,k}$ is defined to be absolutely continuous, then the integral can be computed  with $\Lambda_{n,k}(du \,|\, \cdot) := \lambda_{n,k}(u \,|\, \cdot)\, du$.  Alternatively, the integral can be approximated as a sum on a finite grid.

\subsection{Enforcing monotonicity of the proposed estimator}

The function $t \mapsto \theta_0(t,a)$ is necessarily monotone non-increasing for each $a \in \{0,1\}$ and takes values in $[0,1]$. However, the proposed estimator $\theta_{n}(t,a)$ is generally neither guaranteed to lie in $[0,1]$ nor to be monotone in $t$ in any finite sample. 

We ensure that our final estimator satisfies the above parameter constraints as follows. First, we construct $\theta_n(t,a)$ as defined above for each $t \in \s{T}_n$, where $\s{T}_n$ is the set of unique values of $Y_1, Y_2,\dotsc, Y_n$. Second, for each $t \in \s{T}_n$ and $a \in \{0,1\}$, we define $\theta_n^+(t,a) = \theta_n(t,a)$ if $\theta_n(t,a) \in [0,1]$, $\theta_n^+(t,a)  = 1$ if $\theta_n(t,a) > 1$, and $\theta_n^+(t,a)  = 0$ if $\theta_n(t,a) < 0$. Next, for each $a \in \{0,1\}$, we define $\{\theta_n^\circ(t,a): t \in \s{T}_n\}$ as the projection of $\{\theta_{n}^+(t,a) : t \in \s{T}_n\}$ onto the space of non-increasing functions using isotonic regression. For any $t \in (0,\tau]$, we then define $\theta_n^\circ(t,a)$ as the evaluation of the right-continuous stepwise interpolation of $\{\theta_n^\circ(t,a): t \in \s{T}_n\}$. The projected estimator $\theta_n^\circ$ is guaranteed to be no farther from $\theta_0$ than $\theta_{n}$ in every finite sample, and if the true function is strictly decreasing, then the initial and projected estimators are asymptotically equivalent \citep{westling2020correcting}. Therefore, in what follows, we focus on providing large-sample results for $\theta_n$, since results for the isotonized estimator $\theta_n^\circ$ are identical in view  of the general results of \cite{westling2020correcting}.

\section{Large-sample properties}

\subsection{Consistency}

In this section, we study the large-sample properties of the proposed estimator. First, we provide conditions under which $\theta_n(t, a)$ is consistent for $\theta_0(t, a)$ for fixed $t$ and uniformly over $t$.

\begin{description}[style=multiline, labelindent=.9cm, leftmargin=2cm]
\item[\namedlabel{itm:conv}{(B1)}] There exist $\pi_{\infty}$, $G_{\infty}$ and $S_{\infty}$ such that:
\begin{enumerate}[(a), leftmargin=1cm, labelindent=1cm]
\item $\max_k E_{0} \left[ \frac{1}{\pi_{n,k}(a \, | \, W)} - \frac{1}{\pi_{\infty}(a \, | \,W)} \right]^2 \inprob 0$;
\item $\max_k E_{0} \left[  \sup_{u \in [0, t]} \left| \frac{1}{G_{n,k}(u \,|\, a, W)} - \frac{1}{G_{\infty}(u \,|\, a, W)} \right| \right]^2 \inprob 0$;
\item $\max_k E_{0} \left[ \sup_{u \in [0, t]}\left| \frac{S_{n,k}(t \,|\, a, W)}{S_{n,k}(u \,|\, a, W)} -\frac{S_{\infty}(t \,|\, a, W)}{S_{\infty}(u\,|\, a, W)} \right|  \right]^2 \inprob 0$.
\end{enumerate}
\item[\namedlabel{itm:bound}{(B2)}] There exists $\eta \in (0, \infty)$ such that, with probability tending to one, for $P_0$-almost all $w$, $\pi_{n,k}(a \, | \, w) \geq 1/\eta$, $\pi_\infty(a \, | \, w) \geq 1/\eta$, $G_{n,k}(t \,|\, a, w) \geq 1/\eta$, and $G_\infty(t \,|\, a, w) \geq 1/\eta$.
\item[\namedlabel{itm:dr}{(B3)}] For $P_0$-almost all $w$, there exist measurable sets $\s{S}_w, \s{G}_w \subseteq [0, t]$ such that $\s{S}_w \cup \s{G}_w = [0, t]$ and  $\Lambda_0(u \,|\, a, w) = \Lambda_{\infty}(u \,|\, a, w)$ for all $u \in \s{S}_w$ and $G_0(u \,|\, a, w) = G_{\infty}(u \,|\, a, w)$ for all $u \in \s{G}_w$. In addition, if $\s{S}_w$ is a strict subset of $[0,t]$, then $\pi_0(a \, | \, w) = \pi_{\infty}(a \, | \, w)$ as well. 
\item[\namedlabel{itm:unif_conv}{(B4)}] It holds that
\[ \max_k E_{0} \left[ \sup_{u \in [0,t]} \sup_{v\in [0,u]}\left|\frac{S_{n,k}(u \,|\, a, W)}{S_{n,k}(v \,|\, a, W)} -\frac{S_{\infty}(u \,|\, a, W)}{S_{\infty}(v\,|\, a, W)} \right|  \right]^2 \inprob 0.\]
\end{description}

\begin{thm}[Consistency]
\label{thm:consistency}
If conditions (B1)--(B3) hold, then $\theta_n(t, a) \inprob \theta_0(t, a)$. If condition (B4) also holds, then  $\sup_{u \in [0,t]} | \theta_n(u, a) -\theta_0(u, a)| \inprob 0$.
\end{thm}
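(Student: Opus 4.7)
The plan is to decompose $\theta_n(t,a) - \theta_0(t,a)$ into an empirical-process term, a nuisance-drift term, and a population-level plug-in bias, then control each separately. Writing $\d{P}_{n,k}$ for the empirical measure on the validation fold $\s{V}_{n,k}$ and $\phi_{\infty,t,a}$ for the function $\phi_{0,t,a}$ with $(S_0, G_0, \pi_0, \Lambda_0)$ replaced by $(S_\infty, G_\infty, \pi_\infty, \Lambda_\infty)$, we have
\begin{align*}
\theta_n(t,a) - \theta_0(t,a) = \sum_{k=1}^K \frac{n_k}{n}\Bigl\{ (\d{P}_{n,k} - P_0)\phi_{n,k,t,a} + P_0(\phi_{n,k,t,a} - \phi_{\infty,t,a}) + [P_0\phi_{\infty,t,a} - \theta_0(t,a)]\Bigr\}.
\end{align*}

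For the empirical-process term, I condition on $\s{T}_{n,k}$, so that $\phi_{n,k,t,a}$ becomes deterministic and $(\d{P}_{n,k} - P_0)\phi_{n,k,t,a}$ is a centered sum of $n_k$ i.i.d.\ terms under $P_0$. Condition (B2) ensures a uniform bound on $\phi_{n,k,t,a}$, so conditional Chebyshev together with the tower property gives $\bounded(n^{-1/2})$ for fixed $t$. For the uniform statement, observe that $u \mapsto \phi_{n,k,u,a}(o)$ is a finite sum of products of functions each monotone in $u$ (namely $S_{n,k}(u\,|\,a,w)$, $I(y \leq u, \delta = 1)$, and $\int_0^{u \wedge y}$), and each factor is uniformly bounded under (B2). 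Thus the conditional class $\{\phi_{n,k,u,a} : u \in [0,t]\}$ lies in a uniformly-bounded class of functions of uniformly bounded total variation, which is Glivenko--Cantelli; hence $\sup_{u \in [0,t]} |(\d{P}_{n,k} - P_0)\phi_{n,k,u,a}| \inprob 0$ conditionally on $\s{T}_{n,k}$, and therefore unconditionally as well.

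The central step is to show that the population plug-in bias $P_0\phi_{\infty,t,a} - \theta_0(t,a)$ vanishes under (B3). By conditioning on $(A, W)$ and integrating $(Y, \Delta)$ out against the true conditional law via Fubini together with the product-integral identity $dS_0 = -S_0(\cdot\m)\, d\Lambda_0$, I expect the bias to rearrange into
\begin{align*}
E_0 \int_{(0,t]}\left[\frac{\pi_0(a\,|\,W)\,G_0(u\,|\,a,W)}{\pi_\infty(a\,|\,W)\,G_\infty(u\,|\,a,W)} - 1\right] K_\infty(u, W)\, d[\Lambda_0 - \Lambda_\infty](u\,|\,a,W),
\end{align*}
for a bounded kernel $K_\infty$ depending only on $S_\infty$. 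Condition (B3) then forces the integrand to vanish $P_0$-almost everywhere: at each $(u,w)$ either $u \in \s{S}_w$, so that $d\Lambda_0(u\,|\,a,w) = d\Lambda_\infty(u\,|\,a,w)$, or $u \in \s{G}_w \setminus \s{S}_w$, in which case $G_\infty = G_0$ at $(u,w)$ and, by the second part of (B3), also $\pi_\infty(a\,|\,w) = \pi_0(a\,|\,w)$, zeroing the bracketed factor.

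Finally, for the drift term $P_0(\phi_{n,k,t,a} - \phi_{\infty,t,a})$, I would expand it as a finite sum of $P_0$-expectations of products in which exactly one factor is a difference of the form $1/\pi_{n,k} - 1/\pi_\infty$, $1/G_{n,k} - 1/G_\infty$, or $S_{n,k}(t\,|\,a,\cdot)/S_{n,k}(u\,|\,a,\cdot) - S_\infty(t\,|\,a,\cdot)/S_\infty(u\,|\,a,\cdot)$, with all remaining factors uniformly bounded by (B2). Cauchy--Schwarz together with the $L^2$-rates in (B1)(a)--(c) yields $P_0(\phi_{n,k,t,a} - \phi_{\infty,t,a}) \inprob 0$ for fixed $t$, completing the proof of pointwise consistency. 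For the uniform statement, (B4) upgrades the $S$-ratio rate to hold uniformly over its second argument, so the same Cauchy--Schwarz bound delivers $\sup_{u \in [0,t]} |P_0(\phi_{n,k,u,a} - \phi_{\infty,u,a})| \inprob 0$. The hard part will be the plug-in bias calculation: unwinding the EIF into the product-of-errors representation above requires careful book-keeping of the product-integral identities connecting $S$, $\Lambda$, and $G$, and verifying that the argument accommodates discrete and continuous components of $\Lambda_0$ simultaneously; the remaining pieces are routine given boundedness and the $L^2$-rate hypotheses.
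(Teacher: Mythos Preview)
Your proposal is correct and follows essentially the same route as the paper: the same three-way split (the paper centers the empirical piece around $\phi_{\infty,t,a}$, isolating a leading $\d{P}_n\phi^*_{\infty,t,a}$ term from an $o_P(n^{-1/2})$ cross-fitted remainder, whereas you bound $(\d{P}_{n,k}-P_0)\phi_{n,k,t,a}$ directly---for mere consistency this is only a regrouping), the same product-of-errors identity for the plug-in bias (the paper's Lemma~1, obtained via the Duhamel equation), and the same Cauchy--Schwarz control of the drift through (B1)--(B2) (the paper's Lemma~5), with the uniform empirical-process step handled in the paper by an explicit monotone-bracketing bound (Lemmas~4 and~7) that formalizes your bounded-variation heuristic. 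One small imprecision: the kernel in the bias representation is $S_\infty(t\,|\,a,W)\,S_0(u-\,|\,a,W)/S_\infty(u\,|\,a,W)$, so it depends on $S_0$ as well as on $S_\infty$; this does not affect your (B3) argument, since only boundedness of the kernel is used there.
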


Condition~\ref{itm:conv} stipulates that the estimated functions must converge in an appropriate sense to fixed limit functions, a requirement  used to control certain empirical process terms. Condition~\ref{itm:unif_conv} requires a slightly stronger condition on the convergence of $S_{n,k}$ to its limit $S_\infty$ for uniform consistency. We note that the expectations in conditions~\ref{itm:conv} and~\ref{itm:unif_conv} are with respect to $W$, and not with respect to the randomness of the nuisance estimators.  Condition~\ref{itm:bound} ensures that the estimated propensity and censoring functions are bounded uniformly away from zero in all subpopulation of patients defined by $W$. In practice, this can be guaranteed by truncating the estimated propensities and censoring probabilities. We note that there is no restriction on complexity of these nuisance function estimators; the lack of such a condition is due to the use of cross-fitting. 

Condition~\ref{itm:dr} requires that, for almost all $(t, w)$, either $S_{\infty}(t \,|\, a, w) = S_0(t \,|\, a, w)$ or \emph{both} $G_{\infty}(t \,|\, a, w) = G_0(t\,|\, a, w)$ and $\pi_\infty(a \, | \, w) = \pi_0(a \, | \, w)$. In combination with condition~\ref{itm:conv}, this implies that either $S_{n}$ or \emph{both} $G_{n}$ and $\pi_{n}$ are consistent for almost all $(t, w)$. Thus, none of the limit functions need to be identically equal to their true counterparts. This is a form of \emph{double-robustness} of the estimator $\theta_n$ to estimation of the nuisances $S_{0}$ and $(G_{0}, \pi_{0})$, because, in particular, $\theta_n$ is consistent when either $S_{n}$ is consistent everywhere or both $G_{n}$ and $\pi_{n}$ are consistent everywhere. However, condition~\ref{itm:dr} constitutes a more relaxed form of doubly-robustness akin to \emph{sequential doubly-robustness} or \emph{$2^K$-robustness} in longitudinal studies, where there are $2^K$ possible ways to achieve consistency for a $G$-computation parameter in a longitudinal study with $K$ time-points \citep{tchetgen2009drug,molina2017biometrika,luedtke2017sequential, rotnitzky2017multiply}. In our setting, there are infinitely many ways to achieve consistency. 

\subsection{Asymptotic linearity}

We now present additional conditions under which $\theta_n(t,a)$ is asymptotically linear for fixed $t$ and uniformly over $t$. We define
\begin{align*}
r_{n,t,a,1} &:= \max_k E_{0} \left| \{\pi_{n,k}(a \,|\, W) -  \pi_0(a \,|\, W) \} \{ S_{n,k}(t \,|\, a, W) - S_0(t \,|\, a, W)\}\right|;\\
r_{n,t,a,2} &:= \max_k E_0 \left| S_{n,k}(t \,|\, a, W) \int_0^t \left\{ \frac{G_0(u\,|\, a, W)}{G_{n,k}(u \,|\, a, W)}  - 1 \right\}  \left(\frac{S_0}{S_{n,k}} - 1\right)(du \,|\, a, W)  \right| .
\end{align*}
Based on these quantities, we introduce additional conditions for asymptotic linearity:
\begin{description}[style=multiline,leftmargin=2cm, labelindent=1cm]
\item[\namedlabel{itm:pointwise_rate}{(B5)}] It holds that $r_{n,t,a,1}= \fasterthan(n^{-1/2})$ and $r_{n,t,a,2} = \fasterthan(n^{-1/2})$.
\item[\namedlabel{itm:unif_rate}{(B6)}] It holds that $\sup_{u \in [0,t]} r_{n,u,a,1}  = \fasterthan(n^{-1/2})$ and $\sup_{u \in [0,t]}r_{n,u,a,2} = \fasterthan(n^{-1/2})$.
\end{description}
We have the following result concerning the asymptotic linearity of $\theta_n(t,a)$.
\begin{thm}[Asymptotic linearity]
\label{thm:asymptotic_linearity}
If conditions (B1)--(B2) hold with $S_\infty = S_0$, $G_\infty = G_0$ and $\pi_\infty = \pi_0$ and condition~\ref{itm:pointwise_rate} also holds, then $\theta_n(t,a)= \theta_0(t,a) + \d{P}_n \phi_{0,t,a}^* + \fasterthan(n^{-1/2})$. In particular, $n^{1/2}[\theta_n(t,a) - \theta_0(t,a)]$ then converges in distribution to a normal random variable with mean zero and variance $\sigma^2_{0}(t,a):=P_0\phi_{0,t,a}^{*2}$. If in addition conditions (B4) and~\ref{itm:unif_rate} also hold, then 
\[\sup_{u \in [0,t]}\left| \theta_n(u,a) - \theta_0(u,a)-  \d{P}_n \phi_{0,u,a}^*\right| = \fasterthan(n^{-1/2})\ .\]
In particular, $\left\{n^{1/2}[\theta_n(u,a) - \theta_0(u,a)] : u \in [0,t]\right\}$ then converges weakly as a process in the space $\ell^\infty([0,t])$ of uniformly bounded functions on $[0,t]$ to a tight mean zero Gaussian process with covariance function $(u,v) \mapsto P_0( \phi_{0,u,a}^*\phi_{0,v,a}^*)$.
\end{thm}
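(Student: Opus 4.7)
The plan is to use the standard cross-fitted one-step decomposition. Since $P_0\phi_{0,t,a} = \theta_0(t,a)$, I write
$$\theta_n(t,a) - \theta_0(t,a) = \d{P}_n \phi_{0,t,a}^* + R_n^{\mathrm{EP}}(t,a) + R_n^{\mathrm{B}}(t,a),$$
where
$$R_n^{\mathrm{EP}}(t,a) := \frac{1}{n}\sum_{k=1}^K \sum_{i \in \s{V}_{n,k}} \bigl\{(\phi_{n,k,t,a} - \phi_{0,t,a})(O_i) - P_0(\phi_{n,k,t,a} - \phi_{0,t,a})\bigr\}$$
is an empirical-process remainder and $R_n^{\mathrm{B}}(t,a) := \sum_{k=1}^K (n_k/n)\, P_0(\phi_{n,k,t,a} - \phi_{0,t,a})$ is a second-order bias remainder. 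I will show that each remainder is $\fasterthan(n^{-1/2})$, both pointwise and (under the stronger conditions) uniformly in $u \in [0,t]$, and then invoke the classical CLT and the functional CLT for $\d{P}_n \phi_{0,\cdot,a}^*$ to obtain the limit laws.

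Conditional on the training set $\s{T}_{n,k}$, the inner summands of $R_n^{\mathrm{EP}}(t,a)$ are i.i.d.\ with mean zero, so Chebyshev's inequality shows that each fold contributes at most order $n^{-1/2}\|\phi_{n,k,t,a} - \phi_{0,t,a}\|_{L^2(P_0)}$. Using the explicit form of $\phi_{0,t,a}$ from Theorem~\ref{thm:eif}, conditions~\ref{itm:conv} and~\ref{itm:bound} (with $\pi_\infty = \pi_0$, $G_\infty = G_0$, $S_\infty = S_0$) imply that this $L^2$ norm vanishes in probability, so $R_n^{\mathrm{EP}}(t,a) = \fasterthan(n^{-1/2})$. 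The key technical step is a von Mises-type expansion for the bias of the form
$$P_0(\phi_{n,k,t,a} - \phi_{0,t,a}) = D_{n,k,1}(t,a) + D_{n,k,2}(t,a),$$
where $|D_{n,k,1}(t,a)|$ is bounded by a constant multiple of the $\pi$-by-$S$ cross-product $r_{n,t,a,1}$, and $D_{n,k,2}(t,a)$ is obtained by applying the Duhamel equation to expand $S_0/S_{n,k} - 1$ as a product integral in the $G$-error and is then bounded by a constant multiple of $r_{n,t,a,2}$. Condition~\ref{itm:pointwise_rate} then yields $R_n^{\mathrm{B}}(t,a) = \fasterthan(n^{-1/2})$, so the pointwise asymptotic linearity follows; limiting normality with variance $\sigma_0^2(t,a) = P_0 \phi_{0,t,a}^{*2}$ is then immediate from the CLT, the variance being finite by~\ref{itm:bound}.

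For the uniform statement, I carry out the same decomposition indexed by $u \in [0,t]$ and take suprema. Uniform control of the bias, $\sup_{u \in [0,t]} |R_n^{\mathrm{B}}(u,a)| = \fasterthan(n^{-1/2})$, is delivered directly by condition~\ref{itm:unif_rate}. For the empirical-process remainder, conditioning on the training sets reduces the task to a conditional stochastic equicontinuity argument for the process indexed by $u \in [0,t]$; because $u \mapsto \phi_{0,u,a}(o)$ and $u \mapsto \phi_{n,k,u,a}(o)$ each decompose into a fixed finite collection of monotone functions of $u$ with uniformly bounded variation (invoking~\ref{itm:bound}), the relevant classes have uniformly bounded bracketing entropy, and conditions~\ref{itm:unif_conv} and the uniform versions of~\ref{itm:conv}(b)--(c) supply an envelope whose $L^2(P_0)$-norm vanishes. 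A conditional maximal inequality then yields $\sup_u |R_n^{\mathrm{EP}}(u,a)| = \fasterthan(n^{-1/2})$. Weak convergence of $\{n^{1/2}\d{P}_n \phi_{0,u,a}^* : u \in [0,t]\}$ to the stated tight mean-zero Gaussian process with covariance $(u,v) \mapsto P_0(\phi_{0,u,a}^*\phi_{0,v,a}^*)$ then follows from the functional CLT, as the class $\{\phi_{0,u,a}^* : u \in [0,t]\}$ is Donsker by the same monotone-in-$u$ decomposition. The main obstacle will be establishing the von Mises bias expansion with the remainders in the claimed product form, and then propagating this bound uniformly in $u$; both steps rely on careful product-integral calculus, repeated use of the Duhamel equation, and the uniform integrability guaranteed by~\ref{itm:bound}.
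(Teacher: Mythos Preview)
Your proposal is correct and follows essentially the same route as the paper: the cross-fitted one-step decomposition into a leading $\d{P}_n\phi_{0,t,a}^*$ term, an empirical-process remainder controlled by conditioning on the training folds together with $L^2(P_0)$-convergence of $\phi_{n,k,t,a}-\phi_{0,t,a}$, and a second-order bias remainder reduced via the Duhamel equation to the two product-form terms $r_{n,t,a,1}$ and $r_{n,t,a,2}$. The paper packages these steps into separate lemmas (an influence-function mean identity, an $L^2$ bound on $\phi_{n,k,t,a}-\phi_{\infty,t,a}$, a polynomial entropy bound for $\{\phi_{S,\pi,G,u,a}:u\in[0,t]\}$ built from monotone pieces, and a conditional maximal inequality), but the ideas coincide with what you outline.
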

Condition~\ref{itm:pointwise_rate} requires roughly that the rates of convergence of $(S_n - S_0)(\pi_n - \pi_0)$ and $(S_n - S_0)(G_n - G_0)$ to zero be faster than $n^{-1/2}$. One approach to satisfying this condition is to assume that these nuisance functions fall in known parametric or semiparametric families such that existing estimators achieve the stipulated rates. For instance, if $S_0$ and $G_0$ follow the Cox proportional hazard model \citep{cox1972regression} and $\pi_0$ the logistic regression model, and model-based maximum likelihood estimators are used to obtain $S_n$, $G_n$ and $\pi_n$, the required rates will be achieved irrespective of the dimension of $W$. However, in practice, we recommend combining multiple candidate parametric, semiparametric and nonparametric estimators using cross-validation, as we discuss in Section~\ref{sec:nuisance}. 

\section{Pointwise and uniform inference}\label{sec:inference}

\subsection{Pointwise inference}

Theorem~\ref{thm:asymptotic_linearity} can be used to conduct asymptotically valid pointwise and uniform inference for $\theta_0(t,0)$, $\theta_0(t,1)$ and contrasts thereof. Specifically, $\theta_n^\circ(t,a) \pm z_{1-\alpha/2} \sigma_n(t,a) / \sqrt{n}$ is a Wald-type asymptotic $(1-\alpha)$-level confidence interval for $\theta_0(t,a)$, where $z_{p}$ denotes the $p$-quantile of the standard normal distribution and $\sigma_n^2(t,a) := \frac{1}{n}\sum_{k=1}^K \sum_{i \in \s{V}_{n,k}} [\phi_{n,k,t,a}(O_i) - \theta_n^\circ(t,a)]^2$ is a cross-fitted influence function-based estimator of the asymptotic variance $\sigma^2_{0}(t,a)$. However, since constructing Wald-type intervals on the logistic probability scale has been found to improve finite-sample coverage in classical settings \citep{anderson1982approximate}, we suggest this approach as well. To be precise, defining the function $\n{expit}(x) := \exp(x)/\{1 + \exp(x)\}$ for any $x \in \d{R}$ and its inverse $\n{logit}(u) := \log(u) - \log(1-u)$ for any $u \in (0,1)$, and defining $\tilde\sigma_n(t,a) := \sigma_n(t,a)/ \{\theta_n^\circ(t,a)[1-\theta_n^\circ(t,a)]\}$, we propose to use the transformed Wald-type interval 
$[\ell_n(t,a), u_n(t,a)] := \n{expit}\{ \n{logit} [\theta_n^\circ(t,a)] \pm z_{1-\alpha/2}\tilde\sigma_n(t,a) / \sqrt{n} \}$
 for any $(t,a)$ for which $\theta_n^\circ(t,a) \in (0,1)$.  If $\theta_n^\circ(t,a) = 0$, we set $[\ell_n(t,a), u_n(t,a)] := [0,\min_s\{u_n(s,a) : u_n(s,a) > 0\}]$, whereas if $\theta_n^\circ(t,a) = 1$, we set $[\ell_n(t,a), u_n(t,a)] := [\max_s\{\ell_n(s,a) : \ell_n(s,a) < 1\}, 1]$. The endpoints of this interval will be strictly contained between 0 and 1 for any $(t,a)$ such that $\theta_n^\circ(t,a) \in (0,1)$.

\subsection{Uniform inference}

If  the uniform statement of Theorem~\ref{thm:asymptotic_linearity} holds with $t = \tau$, then it can be used to construct asymptotically valid uniform confidence bands for $t\mapsto \theta_0(t, a)$ over $t \in [0,\tau]$, that is, to construct functions $t \mapsto \ell_n(t,a)$ and $t\mapsto u_n(t,a)$ such that $P_0\left\{ \ell_n(t,a) \leq \theta_0(t,a) \leq u_n(t,a)\mbox{ for all }t \in [0,\tau]\right\}$ converges to $1-\alpha$. The simplest such band is a fixed-width band with endpoints  $\theta_n^\circ(t,a) \pm c_{n,a,\alpha} / \sqrt{n}$. Here, $c_{n,a,\alpha}$ is any consistent estimator of the $(1-\alpha)$-quantile $c_{0, a,\alpha}$ of the supremum of the absolute value of the Gaussian process to which $\{ n^{1/2}[\theta_n(t,a) - \theta_0(t,a)] : t \in [0,\tau]\}$ converges weakly, that is, a mean zero Gaussian process with covariance function $(u,v)\mapsto \Sigma_0(u,v,a):=P_0( \phi_{0,u,a}^*\phi_{0,v,a}^*)$. To obtain $c_{n,a,\alpha}$, we simulate sample paths of a Gaussian process on $[0,\tau]$ with covariance function given by the cross-fitted covariance estimator 
\[(u,v) \mapsto \Sigma_n(u,v, a) := \frac{1}{n}\sum_{k=1}^K \sum_{i \in \s{V}_{n,k}} [\phi_{n,k,u,a}(O_i) - \theta_n^\circ(u,a)][\phi_{n,k,v,a}(O_i) - \theta_n^\circ(v,a)]\ .\]
We then set $c_{n,a,\alpha}$ as the sample $(1-\alpha)$-quantile of the uniform norm over $[0,\tau]$ of these sample paths. Finally, we ensure monotonicity of these bands using isotonic regression, which can only increase their coverage, as established in \cite{westling2020correcting}. While this fixed-width band is appealing in its simplicity, it does not reflect the variability in the uncertainty around $\theta_n^\circ(t,a)$ for different $t$. For instance, there is typically less uncertainty near $t = 0$, when few patients have been censored and the survival probability remains close to one, than elsewhere. An equal-width band will not reflect this.

An alternative confidence band that adapts to the variability in uncertainty over $[0, \tau]$ and is guaranteed to lie strictly within $(0,1)$ can be formed by use of standard error scaling. The proposed variable-width confidence band is given by $\n{expit}\{ \n{logit}[\theta_n^\circ(t,a)] \pm    \tilde{c}_{n,\alpha} \tilde\sigma_n(t,a) / \sqrt{n} \}.$ Here, $\tilde{c}_{n,\alpha}$ is the $(1-\alpha)$-quantile of the uniform norm over $[0,\tau]$ of the sample paths of a mean zero Gaussian process with covariance function $(u,v)\mapsto\tilde{\Sigma}_n(u,v,a):=\Sigma_n(u,v,a) / \{\tilde\sigma_n(u, a) \tilde\sigma_n(v,a)\}$. However, since $\lim_{t \to 0} \sigma_n(t,a) =\lim_{t \to t^+} \sigma_n(t,a) = 0$ for $t^+ = \inf\{t : \theta_n(t,a) =0\}$ , these sample paths are unbounded near $t = 0$ and $t = t^+$. Therefore, this method of constructing confidence bands can only produce asymptotically valid bands on intervals of the form $[t_0, t_1]$ for $t_0 > 0$ and $t_1 < t^+$. Given $[t_0, t_1]$, we then proceed in constructing the band using the approximate critical value $\tilde{c}_{n,\alpha}$ obtained as the sample $(1-\alpha)$-quantile of the uniform norms over $[t_0, t_1]$ of the above sample paths.
 As before, we ensure monotonicity of these bands using isotonic regression. In practice, we suggest choosing $t_0$ and $t_1$ based on the quantiles of the observed event times.

\subsection{Inference on causal effects}

If the pointwise statement of Theorem~\ref{thm:asymptotic_linearity} holds for both $a = 0$ and $a = 1$, then $n^{1/2}[\theta_n(t,0) - \theta_0(t,0)]$ and $n^{1/2}[\theta_n(t,1) - \theta_0(t,1)]$ converge jointly to a mean zero bivariate normal distribution. This fact can be used in conjunction with the delta method to perform inference on causal effects of the form $h(\theta_0(t,0), \theta_0(t,1))$ for any differentiable $h$. Similarly, if the uniform statement of Theorem~\ref{thm:asymptotic_linearity} holds for both $a = 0$ and $a = 1$, then the processes $\{n^{1/2}[\theta_n(t,0) - \theta_0(t,0)]:t\in[0,\tau]\}$ and $\{n^{1/2}[\theta_n(t,1) - \theta_0(t,1)]:t\in[0,\tau]\}$ converge jointly as processes to correlated Gaussian process limits, so that uniform confidence bands can be constructed for causal effects in much the same way as described above. For risk and survival ratios, these confidence bands are only valid on intervals over which the denominator is bounded away from zero.

To test the null hypothesis $H_0: \theta_0(t,0) = \theta_0(t,1)$ for all $t \in [0,\tau]$ against the complementary alternative, we propose using a test statistic of the form $n^{1/2}\int_0^\tau \left| \theta_n^\circ(t,1) - \theta_n^\circ(t,0) \right| W_n(dt)$, where $W_n$ is a user-specified, possibly data-dependent weight function. Under the null hypothesis, this test statistic converges in distribution to $\int_0^\tau \left| G_0(t)\right| W_0(dt)$ by the continuous mapping theorem for $G_0$ denoting the limiting Gaussian process of $\{n^{1/2}[\{\theta_n(t,1) - \theta_0(t,1)\} - \{\theta_n(t,0) - \theta_0(t,0)\}]  : t \in [0,\tau]\}$ and $W_0$ the deterministic in-probability limit of $W_n$. This limit distribution can be estimated by simulating Gaussian processes using the estimated covariance matrices in a similar manner as discussed above, which can then be used to find a $p$-value for the test using the observed test statistic. The user-specified weight function $W_n$ can be chosen to improve power against particular alternatives that may be expected based on the scientific context, such as early or late differences in survival, as has been done in the context of logrank tests for uninformative censoring (see, e.g.\ \citealp{harrington1982class,wu2002weighted}).

Our results can also be used to make inference on functionals of the treatment-specific survival functions. For example, a natural estimator of the treatment-specific restricted mean survival time $r_{0,a}:=\int_{0}^\tau\theta_0(t, a) \, dt$ is given  by $r_{n,a} := \int_0^\tau\theta_n^\circ(t,a)\,dt$. Uniform consistency of $\theta_n^\circ(\cdot, a)$ on $[0,\tau]$, as implied by Theorem~\ref{thm:consistency}, implies consistency of $r_{n,a}$. In view of an application of the functional delta method, the weak convergence of $\{n^{1/2}[\theta_n^\circ(t, a) - \theta_0(t, a)]:t\in[0,\tau]\}$, as implied by Theorem~\ref{thm:asymptotic_linearity},  implies that $n^{1/2} (r_{n,a} - r_{0,a})$ is asymptotically linear with influence function $o\mapsto\int_{0}^\tau \phi_{0,t,a}^*(o) \,dt$. Inference for contrasts of treatment-specific restricted mean survival times can be obtained analogously.

\section{Data-adaptive estimation of nuisance functions}\label{sec:nuisance}

As discussed above, our proposed estimator requires estimation of three nuisance parameters: the conditional survival functions $S_0$ and $G_0$ of the event time and censoring distributions, respectively, given exposure and covariates, and the propensity $\pi_0$ of exposure given covariates. We note that $\pi_0$ can be estimated using any regression estimator for a binary outcome. In practice, we recommend leveraging multiple parametric, semiparametric and nonparametric regression strategies using the SuperLearner algorithm \citep{breiman1996stacked, vanderlaan2007super}. 

There are several existing strategies for estimating $S_0$ and $G_0$. The most widely-used regression model for survival outcomes is the Cox proportional hazard model \citep{cox1972regression}, which can be used in conjunction with the Breslow estimator \citep{breslow1972estimator} or parametric estimators of the baseline cumulative hazard function to obtain estimates of $S_0$ and $G_0$. The accelerated failure time model can also be used as a semiparametric estimator of $S_0$ and $G_0$ \citep{wei1992aft}. Alternatively, various other semiparametric and nonparametric regression techniques for survival data have been proposed, including, to name a few, additive Cox models \citep{hastie1990generalized}, piecewise constant hazard models \citep{friedman1982piecewise}, and survival random forests \citep{ishwaran2008rf}. In practice, it may not be \emph{a priori} clear to the researcher which of these or other algorithms are most appropriate in a given setting. Here, we propose an iterative SuperLearner ensemble approach for combining multiple candidate nuisance estimators of $S_0$ and $G_0$.

We recall that, in view of Theorem~\ref{thm:ident},  if conditions (A1)--(A5) hold for some $a \in \{0,1\}$ and $\tau \in (0,\infty)$, then $S_0(t \, | \, a, w) = P_{0,F}(T(a) > t \, | \, W=w)$ and $G_0(t \, | \, a, w) = P_{0,F}(C(a) \geq t\, | \, W=w)$ for any $t \in [0, \tau]$. Central to our ensemble method are representations of $S_0$ and $G_0$ as minimizers of oracle loss functions, as stated in the next result. For this result, we define $\s{C}_\tau$ as the set of  functions from $[0,\tau] \times \{0,1\} \times \s{W}$ to $[0,1]$.
\begin{thm}\label{thm:loss}
Let $S^*$ be a minimizer of $S \mapsto P_0 L_{S, G_0}$ over $S \in \s{C}_\tau$ and $G^*$ be a minimizer of $G \mapsto P_0 M_{G, S_0}$ over $G \in \s{C}_\tau$, where we define the loss functions
\begin{align*}
L_{S, G}:\ &(w,a,y,\delta)\mapsto  \int_0^\tau  S(t \, | \, a,w)  \left[S(t \, | \, a,w) - 2 \left\{ 1 - \frac{\delta I(y \leq t)}{G(y \,|\, a, w)}\right\}\right]  dt \ ; \\
M_{G, S}:\ &(w, a,y, \delta)\mapsto\int_0^\tau G(t \, | \, a,w)  \left[G(t \, | \, a,w) - 2 \left\{ 1 - \frac{(1-\delta) I(y < t)}{S(y \,|\, a, w)}\right\}\right]  dt\ .
\end{align*}
If  conditions (A1)--(A5) hold for each $a \in \{0,1\}$, then $S^*(t \, | \, a,w) = S_0(t \, | \, a, w)$ for $P_0$-almost every $(a,w)$ and all $t \leq \tau$, and $G^*(t \, | \, a,w) = G_0(t \, | \, a, w)$ for $P_0$-almost every $(a,w)$ and all $t \leq \tau$ such that $S_0(t\m \, | \, a,w) > 0$.
\end{thm}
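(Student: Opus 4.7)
The plan is to reduce each loss functional to a pointwise-in-$(t,a,w)$ quadratic in $S$ (respectively $G$) via Fubini and iterated expectation, and then identify the minimizer using the conditional independence (A3) together with the fact that, under (A1)--(A5), $S_0$ and $G_0$ coincide with the relevant conditional survival functions. For $L_{S, G_0}$, pushing the $t$-integral outside and iterating the expectation gives
\begin{align*}
P_0 L_{S, G_0} = \int_0^\tau E_0\!\left\{ S(t \,|\, A,W)^2 - 2\, S(t \,|\, A,W)\, \psi_0(t, A, W) \right\} dt,
\end{align*}
where $\psi_0(t,a,w) := E_0\!\left[ 1 - \Delta\, I(Y \leq t)/G_0(Y \,|\, a, w) \,\big|\, A = a, W = w\right]$. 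The core step is the pointwise identity $\psi_0(t,a,w) = S_0(t \,|\, a, w)$.

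To verify that identity, I would use that, under (A1)--(A5), the conditional distribution of $(T,C)$ given $A=a, W=w$ is a product measure (via (A3) combined with $Y = T \wedge C$ and $\Delta = I(T \leq C)$) and that $P_{0,F}(C \geq s \,|\, A = a, W = w) = G_0(s\,|\, a, w)$ with $G_0$ left-continuous. Writing $\mu_T, \mu_C$ for the conditional laws of $T, C$ given $A=a, W=w$, Fubini gives
\begin{align*}
E_0\!\left[\Delta\, I(Y \leq t)/G_0(Y\,|\, a, w) \,\big|\, A=a, W=w\right] = \int_0^t \frac{P_{0,F}(C \geq s\,|\, a, w)}{G_0(s\,|\, a, w)}\, \mu_T(ds) = \mu_T((0,t]),
\end{align*}
which equals $1 - S_0(t \,|\, a, w)$; the convention $0/0 = 1$ handles the $\mu_T$-null set on which $G_0$ can vanish. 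Hence $\psi_0 = S_0$, and completing the square yields
\begin{align*}
P_0 L_{S, G_0} - P_0 L_{S_0, G_0} = \int_0^\tau E_0\!\left\{ \left[S(t\,|\,A,W) - S_0(t\,|\,A,W)\right]^2 \right\} dt \geq 0,
\end{align*}
showing that $S_0$ is a minimizer and that any minimizer agrees with $S_0$ up to a $dt \otimes dP_0$-null set, which gives the first half of the theorem.

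The argument for $M_{G, S_0}$ is strictly parallel. The analogous linear coefficient in the induced quadratic is
\begin{align*}
E_0\!\left[(1-\Delta)\, I(Y < t)/S_0(Y \,|\, a, w) \,\big|\, A=a, W=w\right] = \int_{[0,t)} \frac{P_{0,F}(T > c \,|\, a, w)}{S_0(c \,|\, a, w)}\, \mu_C(dc),
\end{align*}
which equals $P_{0,F}(C < t \,|\, a, w) = 1 - G_0(t\,|\,a,w)$ whenever $S_0(c\,|\,a,w) > 0$ for every $c < t$, i.e., on $\{S_0(t{-}\,|\,a, w) > 0\}$; completing the square on this set gives the stated conclusion for $G^*$. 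The main technical obstacle is getting the pointwise identities right: one must match the non-strict indicator $I(Y \leq t)$ in $L$ against the left-continuous $G_0$ and the strict indicator $I(Y < t)$ in $M$ against the right-continuous $S_0$, so that the ratio $P_{0,F}(C \geq s)/G_0(s)$ (resp.\ $P_{0,F}(T > c)/S_0(c)$) simplifies to $1$ on the relevant support. Once those matchings are in place, the remainder of the proof is a routine completion of the square.
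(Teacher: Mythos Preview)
Your proposal is correct and follows essentially the same route as the paper: exchange the $t$-integral with the expectation via Fubini and the tower property, compute the linear coefficient in the resulting pointwise quadratic as $S_0$ (respectively $G_0$) using the conditional-independence identity, and then minimize (the paper phrases the last step as direct quadratic minimization rather than completing the square, but these are equivalent). The only detail the paper supplies that you omit is an explicit verification of the integrability hypothesis for Fubini, bounding the absolute integrand by a constant times $\tau$; you should add that check.
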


Were $G_0$ known, an optimal weighted combination of $p$ candidate estimators $S_{n}^{(1)},S_n^{(2)},\dotsc, S_n^{(p)}$ of $S_0$ could be found by minimizing the cross-validated empirical risk $\d{P}_n L_{S, G_0}$ over $S$ in the set $\Pi_S$ of convex combinations $\sum_{j=1}^p \alpha_j S_{n}^{(j)}$ for $\alpha$ in the $p$-dimensional simplex. Here, by \emph{cross-validated} we mean that the sample is split into $K$ folds, candidate estimators are each trained holding out each fold, evaluated on the held-out fold, and these held-out evaluations are used to compute the empirical mean $\d{P}_n L_{S, G}$  (see, e.g., \citealp{vanderlaan2007super} or \citealp{vanderlaan2011tmle} for additional details). Were $S_0$ known, an analogous procedure could be used to find an optimal weighted combination of $q$ candidate estimators $G_{n}^{(1)},G_n^{(2)},\dotsc, G_n^{(q)}$ of $G_0$ in the set $\Pi_G$ of convex combinations $\sum_{j=1}^q \alpha_j G_{n}^{(j)}$ for $\alpha$ in the $q$-dimensional simplex. Since $S_0$ and $G_0$ are not known in practice, we propose the following iterative strategy:
\begin{description}[style=multiline,leftmargin=2.2cm, labelindent=.5cm]
\item[\namedlabel{itm:step0}{Step 0:}] Obtain an initial estimator $G_{n,0}^*$ of $G_0$ using a nonparametric procedure.
\item[\namedlabel{itm:step1}{Step 1:}] Compute $S_{n,1}^* := \argmin_{S \in \Pi_S} \d{P}_n L_{S, G_{n,0}^*}$ and $G_{n,1}^* :=  \argmin_{G \in \Pi_G} \d{P}_n M_{G, S_{n,1}^*}.$
\item[\namedlabel{itm:stepj}{Step k:}] Compute $ S_{n,k}^* := \argmin_{S \in \Pi_S} \d{P}_n L_{S, G_{n,k-1}^*}$ and $G_{n,k}^* := \argmin_{G \in \Pi_G} \d{P}_n M_{G, S_{n,k}^*}.$
\end{description}
The procedure can be terminated, for example, when $\| S_{n,k}^* - S_{n,k-1}^*\|_{\infty}$ and $\| G_{n,k}^* - G_{n,k-1}^*\|_{\infty}$ both fall below some pre-specified threshold. In practice, we can evaluate the integrals in $L_{S,G}$ and $M_{G,S}$ using a Riemann sum over a fine grid of $t$ values.

The procedure proposed above builds on prior work, such as \cite{laan2003unified}, \cite{hothorn2005ensemble} and \cite{polley2011survsl}. However, our procedure simultaneously accomplishes several goals that, to the best of our knowledge, previous work has not. First, we do not require that the event occur on either a fully discrete or fully continuous scale, but rather allow both of these possibilities as well as mixed distributions. Second, we target the entire survival functions rather than the survival at a single point $t$. Third, we target both $S_0$ and $G_0$ rather than one or the other by iterating between optimization of $S_{n}^*$ and $G_n^*$, which has the potential to improve estimation of both.

We note that obtaining the cross-validated estimates $S_{n}^{(1)},S_n^{(2)},\dotsc, S_n^{(p)}$ and $G_{n}^{(1)},G_n^{(2)},\dotsc, G_n^{(q)}$,  a requirement for any ensemble learner, is typically the most computationally expensive step of the above procedure. In our proposed procedure, these estimates only need to be obtained once. The only computational burden of the algorithm beyond that of an ordinary SuperLearner is the possibly multiple optimization steps to find the optimal convex combinations of the candidate learners, which is typically much less computationally expensive than obtaining the cross-validated estimates of the candidate learners. Therefore,  the algorithm outlined above is not substantially more computationally expensive than an ordinary SuperLearner.

\section{Numerical studies}


We conducted a numerical study to assess the finite-sample performance of the methods proposed here. We designed our simulation procedure to mimic an observational study. We simulated a vector $W:=(W_1,W_1,W_3)$ of three continuous baseline confounders as follows. First, we simulated $W_1$, representing age, as $20 + 60 \times \n{Beta}(1.1, 1.1)$. Then, conditionally on $W_1=w_1$, we simulated independent covariates $W_2$, representing BMI, and $W_3$, representing a risk score for the event of interest, as  $18 + 32 \times \n{Beta}(1.5 + \frac{w_1}{20}, 6)$ and $10 \times \n{Beta}(1.5+ \frac{|w_1-50|}{20}, 3)$, respectively. Here, Beta$(a, b)$ represents a beta-distributed random variable with mean $a / (a + b)$. We then set $\logit P_0(A = 1 \,|\, W = w) = -1 +\log\left\{1 + \exp(-20+ \frac{w_1}{10}) + \exp(-3+\frac{w_3}{2}) \right\}$. Given $A = a$ and $W =w$, we generated the censoring variable $C$ from an exponential distribution with rate $\exp[ \beta_1+0.3 a + \log\{1 + \exp(\frac{30 -w_1}{4})\} + \frac{w_3}{ 4} ]$, where we set $\beta_1=-5.5$ to  yield an average censoring rate $E_0[P(C_0 \leq 12 \,|\, A =0, W)] =  0.2$. Here, time is considered to be measured in months, so that $t = 12$ corresponds to one year post-treatment. Given $A=0$ and $W=w$, we simulated $T$ from an exponential distribution with rate $\lambda_0(w) := \n{exp}\{\beta_0- \frac{|w_1-60|}{10}   + 2\log(w_2)+ \frac{w_3}{2}\}$.
Thus, all three covariates are predictors of risk under control. We set $\beta_0=-5.6$ to yield an average observed event rate $E_0[ P_0(T \leq C \,|\, A = 0, W)]=0.15$. Given $A = 1$ and $W = w$, we simulated the event time $T$ from a non-proportional hazards model designed to mimic the situation in which the treatment's effectiveness at preventing the event improves over a period of $r = 1.5$ months to a covariate-dependent maximal effectiveness of $\gamma(w)$, stays constant for a covariate-dependent period of time (i.e.,\ a durability) $\iota(w)$, and finally vanishes away. Parameters were chosen so that the effectiveness and durability of treatment is higher for patients with younger age and lower BMI but otherwise unrelated to the risk score. Constants in the choice of $\gamma(w)$ and $\iota(w)$ were set to produce a counterfactual risk ratio of $0.7$ at $t = 12$. The exact data-generating mechanism we used is detailed in Supplementary Material.

%

We simulated 1000 datasets using the above process for $n=500,750,\ldots,1500$. For each simulated dataset, we estimated the counterfactual survival curves using the \texttt{CFsurvival} package in \texttt{R} implementing the methods developed here. To estimate the conditional survival functions, we used the iterative SuperLearner described in Section~\ref{sec:nuisance}, implemented in the \texttt{R} package \texttt{survSuperLearner}, with a combination of parametric survival models, semiparametric proportional hazard models, and generalized additive Cox models. To estimate the propensity score, we used SuperLearner with a library consisting of generalized linear models, generalized additive models, extreme gradient boosting, and multivariate adaptive regression splines. Additional details on the candidate algorithms used for nuisance estimation are provided in Supplementary Material.

 We considered two comparator methods. First, we considered marginalizing a main-terms Cox proportional hazards model for the event time, and using the nonparametric bootstrap for inference. Second, we used the \texttt{survtmle} package \citep{benkeser2017survtmle}, which implements a state-of-the-art method for discrete-time survival analysis adjusting for baseline covariates, developed in \cite{benkeser2017improved}. For  \texttt{survtmle}, we discretized time into twelve equally-spaced intervals and used SuperLearner with generalized linear models, generalized additive models, and multivariate adaptive regression splines for nuisance estimation. For each of the three methods considered (\texttt{CFsurvival}, marginalized Cox, and \texttt{survtmle}), we recorded the estimated control and treatment survival probabilities as well as the risk ratio contrast and corresponding confidence intervals at time $t = 12$. For the proposed method, we also computed uniform confidence bands over $t\in [0,12]$.

The top row of Figure~\ref{fig:bias_var_mse} shows the bias of the three methods for each of the three parameters as a function of $n$. The bias of the proposed method was within Monte Carlo error of zero for all three parameters and all sample sizes. The bias of the marginalized Cox model estimator was relatively constant as a function of $n$, suggesting that the method is inconsistent. This was expected because the true conditional survival curves do not satisfy the proportional hazards assumption. Finally, \texttt{survtmle} demonstrated a relatively large finite-sample bias, but its bias decreased as $n$ grew. The middle and bottom rows of Figure~\ref{fig:bias_var_mse} show the variance and mean squared error (MSE) of the three methods. All variances decreased with $n$, and the variance of the proposed estimator was between that of the marginalized Cox model estimator and of \texttt{survtmle}. The MSE of the proposed method was smallest among the three methods considered for all sample sizes for both the treatment survival probability and risk ratio. For the control survival probability, our method had the smallest MSE for $n \geq 750$ and comparable MSE for $n = 500$.

\begin{figure}[h!]
\centering
\includegraphics[width = 6.5in]{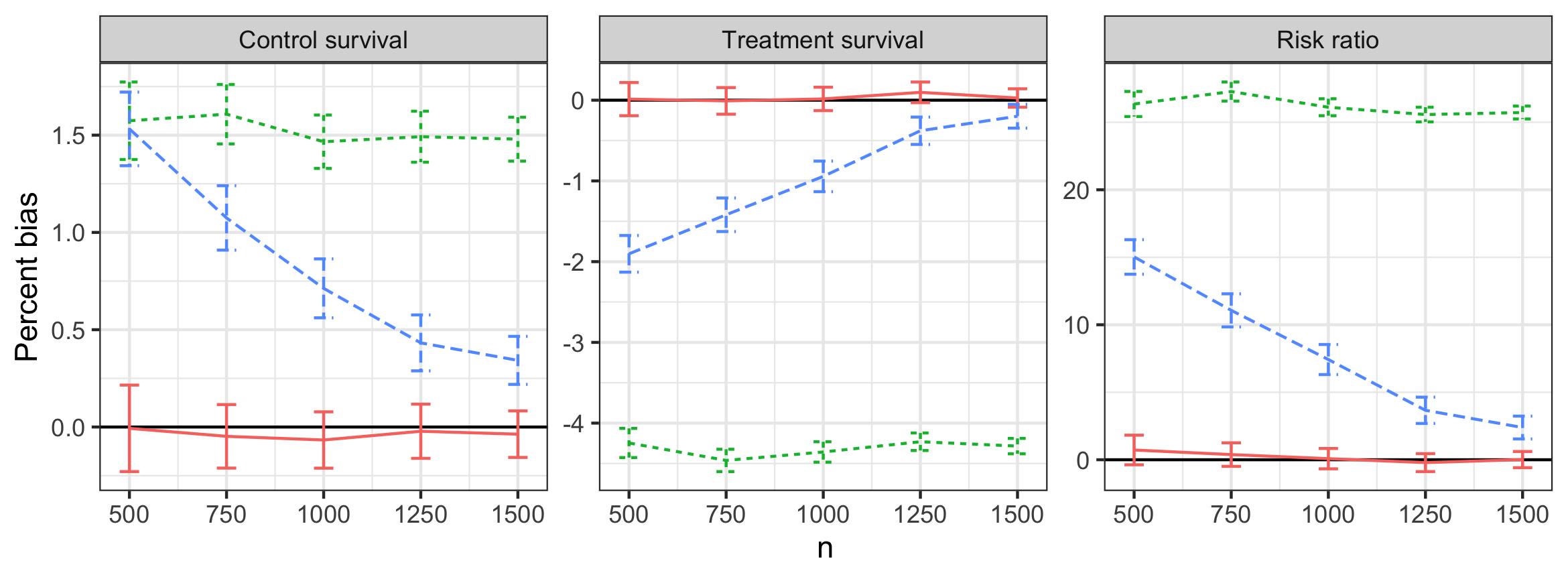}
\includegraphics[width = 6.5in]{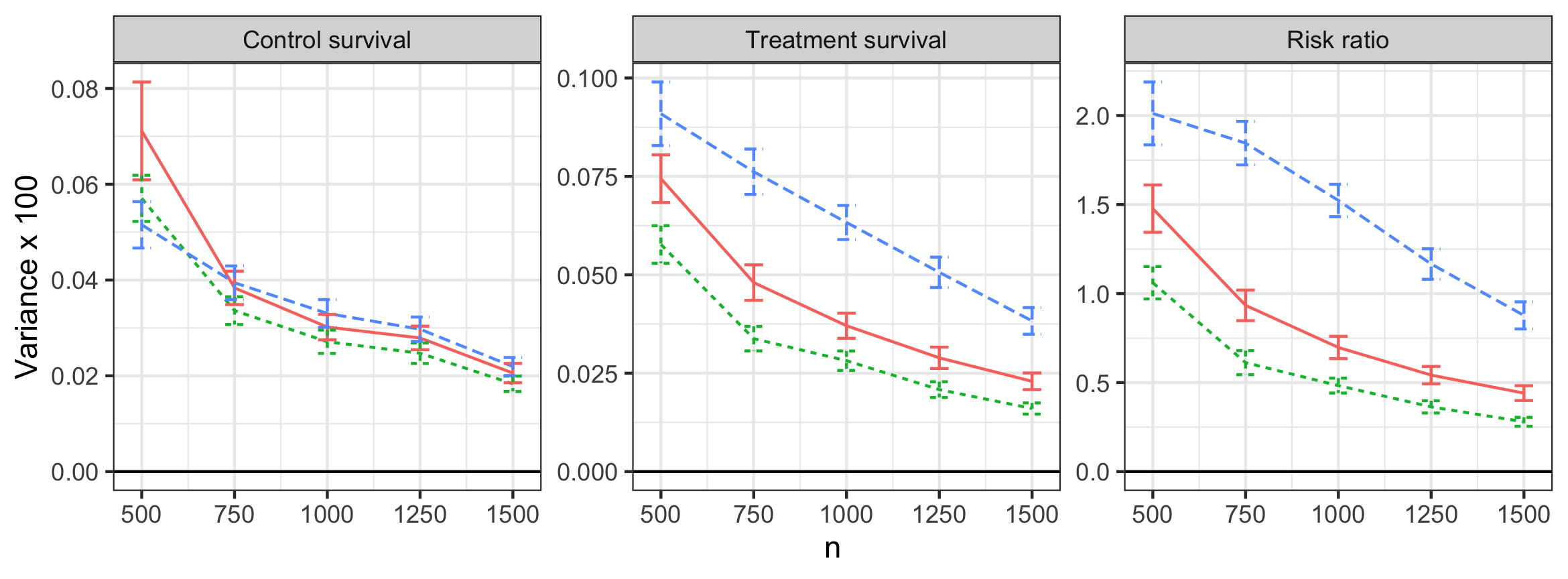}
\includegraphics[width = 6.5in]{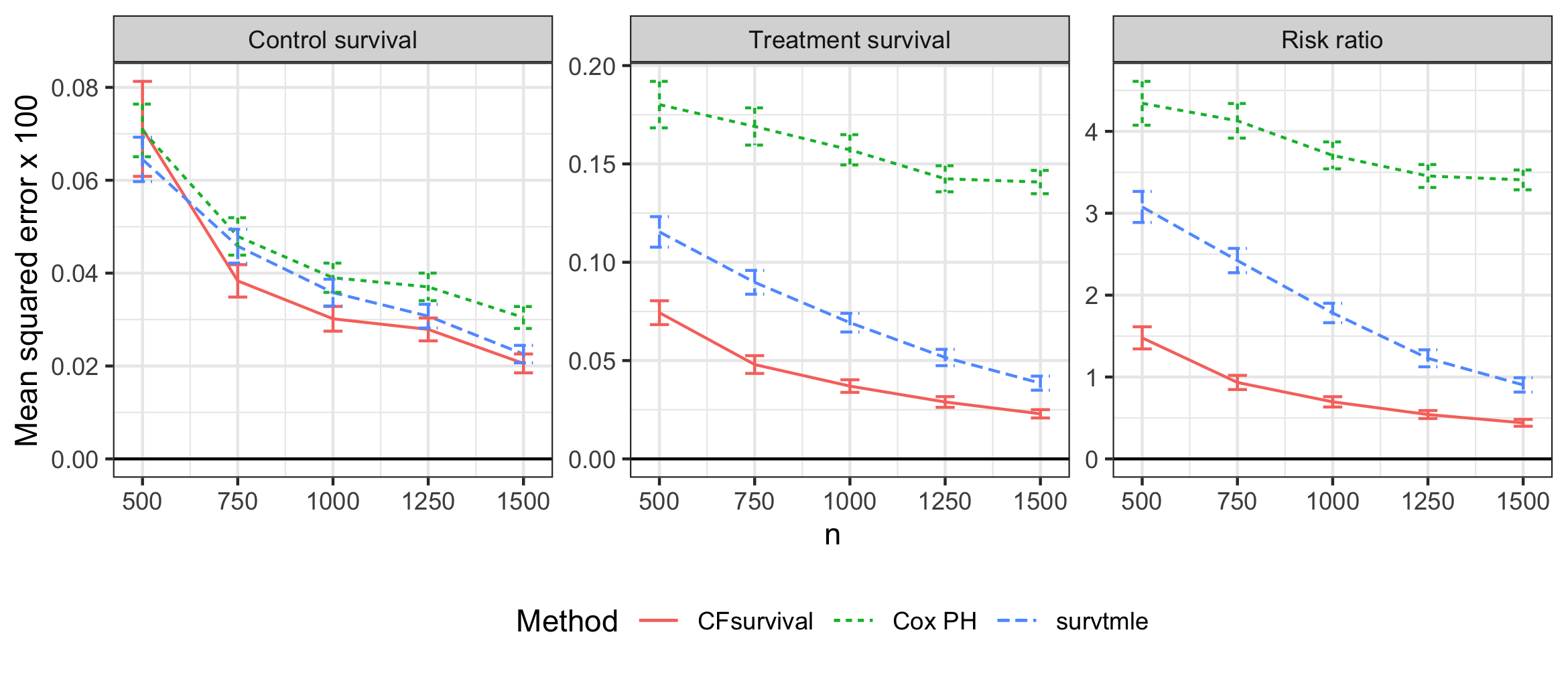}
\caption{Percent bias (top), variance (middle), and mean squared error (bottom) of the three methods considered for each of the three parameters as a function of $n$. ``CFsurvival" is the method developed here. Vertical bars represent 95\% confidence intervals taking into account Monte Carlo error induced by conducting a finite number of simulations.}
\label{fig:bias_var_mse}
\end{figure}

Figure~\ref{fig:coverage} shows the coverage of 95\% pointwise confidence intervals (CIs) at time $t = 12$ for each method and parameter considered as a function of $n$ as well as\ the coverage of 95\% uniform confidence bands over the interval $[0,12]$ based on the proposed method. The proposed method had excellent pointwise coverage for all sample sizes and parameters.  The uniform coverage for the individual survival curves was slightly anti-conservative for smaller sample size ($n\leq 750$)  but otherwise within Monte Carlo error of the nominal rate. The coverage of the bootstrap CIs for the marginalized Cox model-based  estimator was well below the nominal rate and deteriorated with growing $n$, especially for the treatment survival probability and risk ratio (for which they fall below the lower 0.7 limit of the plot for all $n$). The coverage of the \texttt{survtmle} CIs at first decreased, but eventually improved --- this may have been due to an interplay between the discretization and sample size. We emphasize that \texttt{survtmle} is designed for events occurring in discrete time, and in our simulation, the event and censoring times are both continuous. Hence, the relatively poor performance of \texttt{survtmle} reflects the insufficiency of discrete approximations for continuous events rather than inherent issues with the method. Increasing the number of grid points used in the discretization may improve the performance of \texttt{survtmle}, though it is not typically clear in practice how fine the grid should be, and there may be a bias-variance tradeoff in the grid mesh. Our method avoids these issues by allowing events to occur on an arbitrary time scale.

\begin{figure}[ht!]
\centering
\includegraphics[width = 6.5in]{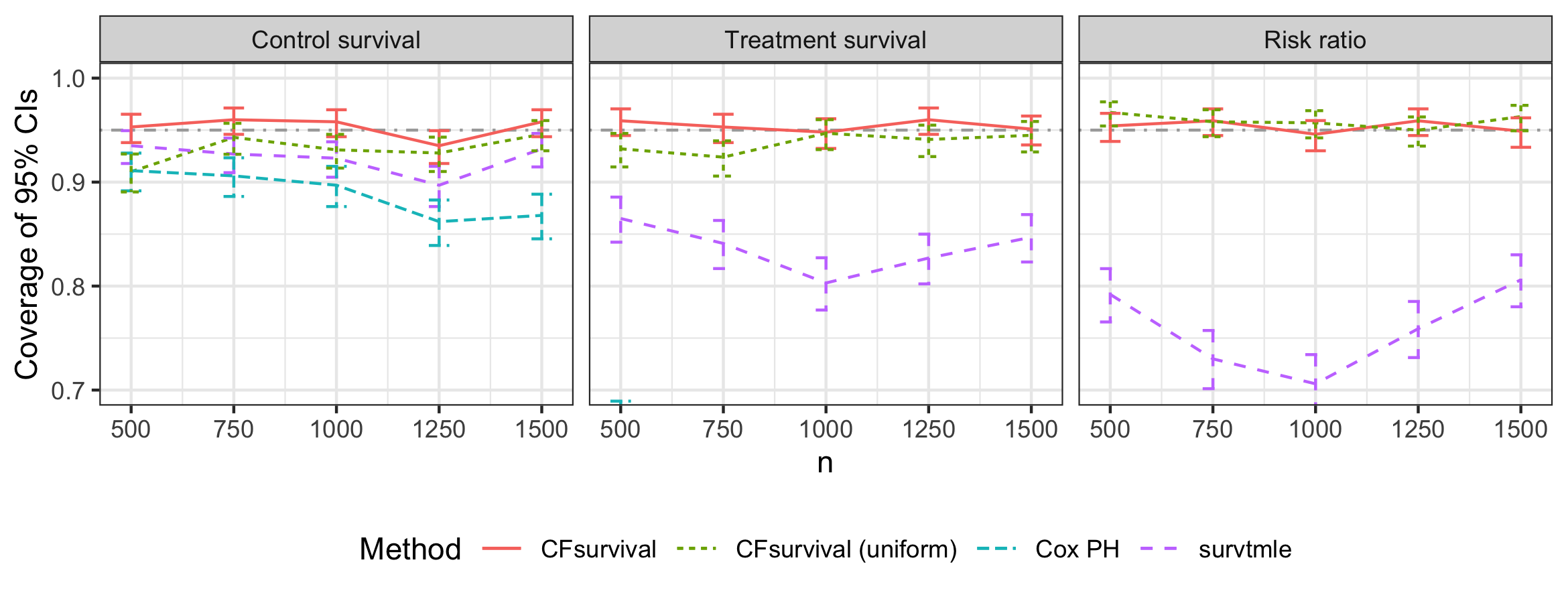}
\caption{Empirical coverage of nominal 95\% confidence intervals constructed using each of the three methods and for each of the three parameters as a function of $n$. Vertical bars represent 95\% confidence intervals taking into account Monte Carlo error induced by conducting a finite number of simulations}
\label{fig:coverage}
\end{figure}

\section{Effect of elective neck dissection on mortality}

In this section, we use the methods developed in this article to assess the effect of elective neck dissection (END) on survival among patients with clinically node-negative, high-grade parotid carcinoma. END consists of surgical removal of lymph nodes to prevent metastatic spread via the lymphatic system, and has been the subject of controversy among surgeons and oncologists. On one hand, lymph node metastases are common among patients with high-grade oral carcinomas, and END is an effective treatment for preventing these metastases. On the other hand, END is more invasive and leads to higher morbidity than radiation therapy, which can also be used to treat and prevent metastases. We refer the reader to \cite{jalisi2005management} and \cite{kowalski2007elective} for a more detailed discussion of END.

We analyzed a retrospective cohort consisting of $n = 1547$ patients in the National Cancer Database who were diagnosed with clinically node-negative, high-grade parotid cancer between January 1, 2004 and December 31, 2013, and followed until the latter date.  The exposure level $A=1$ here corresponded to receipt of END at diagnosis, and the outcome of interest was all-cause mortality up to five years post-diagnosis. Mortality was subject to right-censoring because patients could be lost to follow-up or still alive on December 31, 2013. The baseline covariate vector $W$ consisted of patient age, sex, race, tumor stage, histology, comorbidity, and payor, as well as the average income, education, county of residence, and treatment facility type. Additional details of the cohort construction and demographics may be found in \cite{harbison2020role}.

An unadjusted analysis yielded stratified Kaplan-Meier survival estimates of 56.4\% (95\% CI: 52.8--60.3) for patients receiving END and 48.6\% (43.4--54.5) for those not receiving end at $t=5$ years post-diagnosis. The survival curves were deemed to be significantly different using a log-rank test ($p < 0.0001$). These results suggest that END has a significant positive association with survival. However, since the data are observational, these results cannot be interpreted causally. By using the methods proposed here, we can adjust for baseline confounding flexibly while still reporting survival curves and contrasts thereof, which provide a simple interpretation that is familiar for many clinicians and scientists.

We used the methods presented here to estimate the treatment-specific G-computed survival functions $\theta_0(t, 0)$ and $\theta_0(t, 1)$. If the untestable causal conditions (A1)--(A5) hold, then these curves correspond to the counterfactual survival functions under assignment of all patients in the target population to no END and END, respectively. In particular, (A1)--(A5) require that the covariate vector $W$ be sufficient to control for confounding between receipt of END and mortality, and that $A$ and $W$ together be sufficient to control for the dependence between mortality and censoring. We also estimated the survival difference, survival ratio, and risk ratio functions.

We estimated the treatment propensity using SuperLearner \citep{vanderlaan2007super} with a library consisting of generalized linear models, generalized additive models, multivariate adaptive regression splines, random forests, and extreme gradient boosting. We estimated the conditional survival and censoring functions using the novel SuperLearner defined in Section~\ref{sec:nuisance} with a library consisting of the treatment group-specific Kaplan-Meier estimators, parametric survival models, Cox proportional hazard models, generalized additive models, and piecewise constant hazard models. Additional details on the libraries used for nuisance estimation and the estimated SuperLearner coefficients may be found in the Supplementary Material.

We note that the same scientific question addressed here was studied in \cite{harbison2020role} using a preliminary version of the methods developed here. However, the estimator used for the analysis presented in \cite{harbison2020role} did not use cross-fitting, and only used random forests to estimate the conditional survival and censoring functions. In addition, in \cite{harbison2020role}, uniform confidence bands or contrasts of the survival functions, which are both important for comparing the survival functions uniformly in time, were not provided.

\begin{figure}[h!]
\centering
\includegraphics[width = 6.5in]{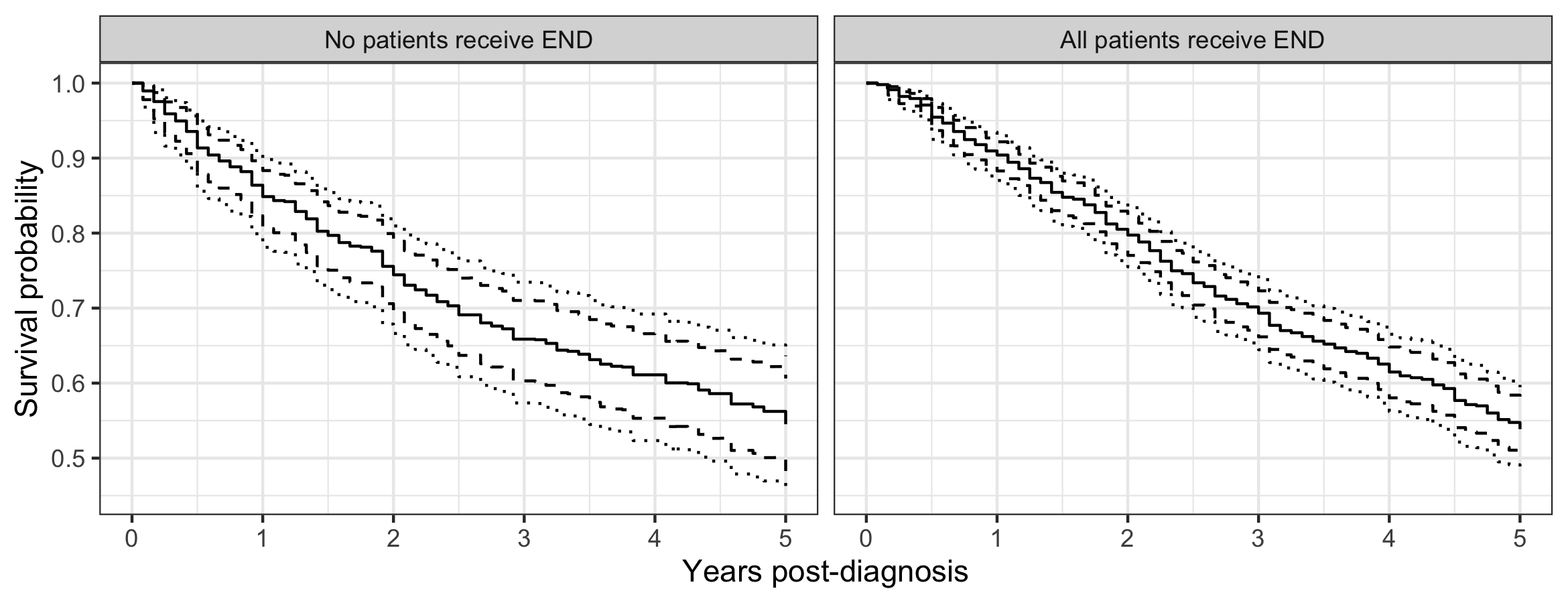}
\includegraphics[width = 3.22in]{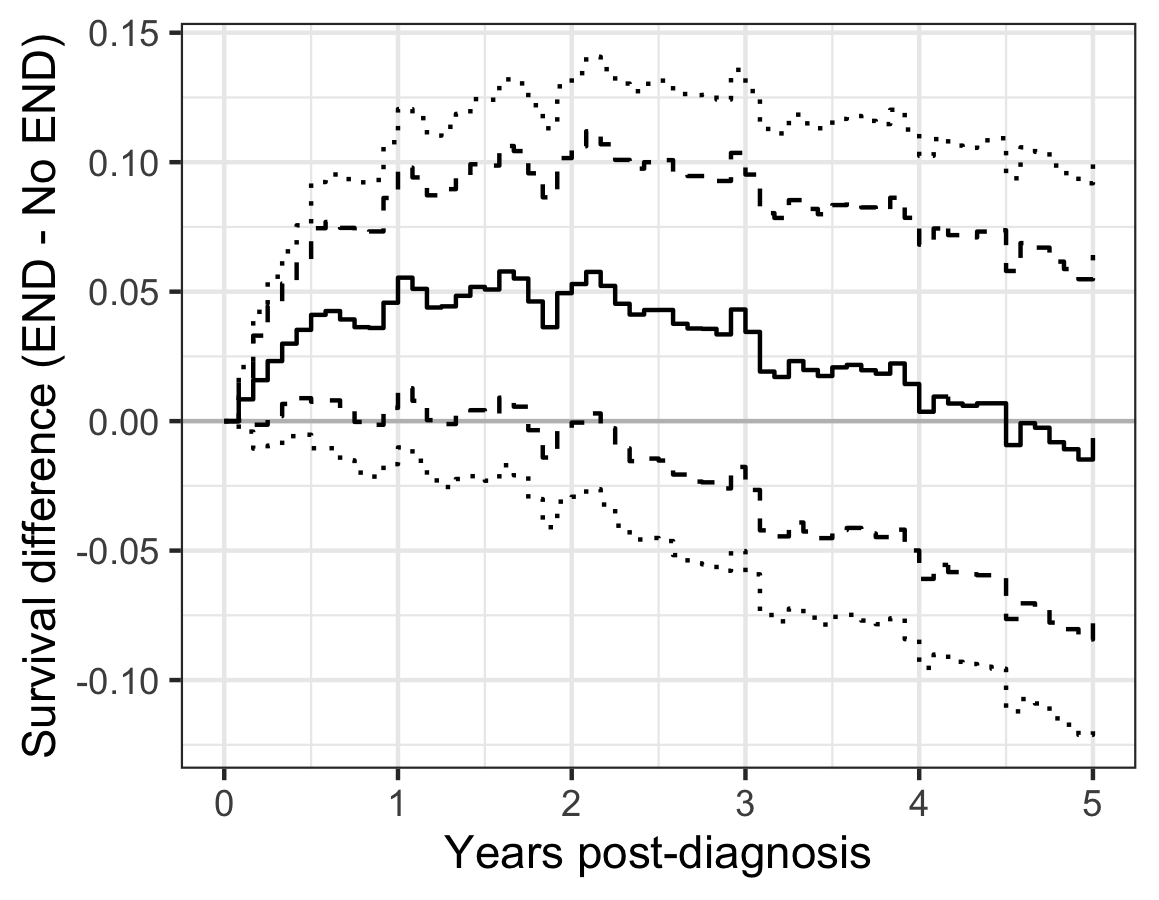}
\includegraphics[width = 3.22in]{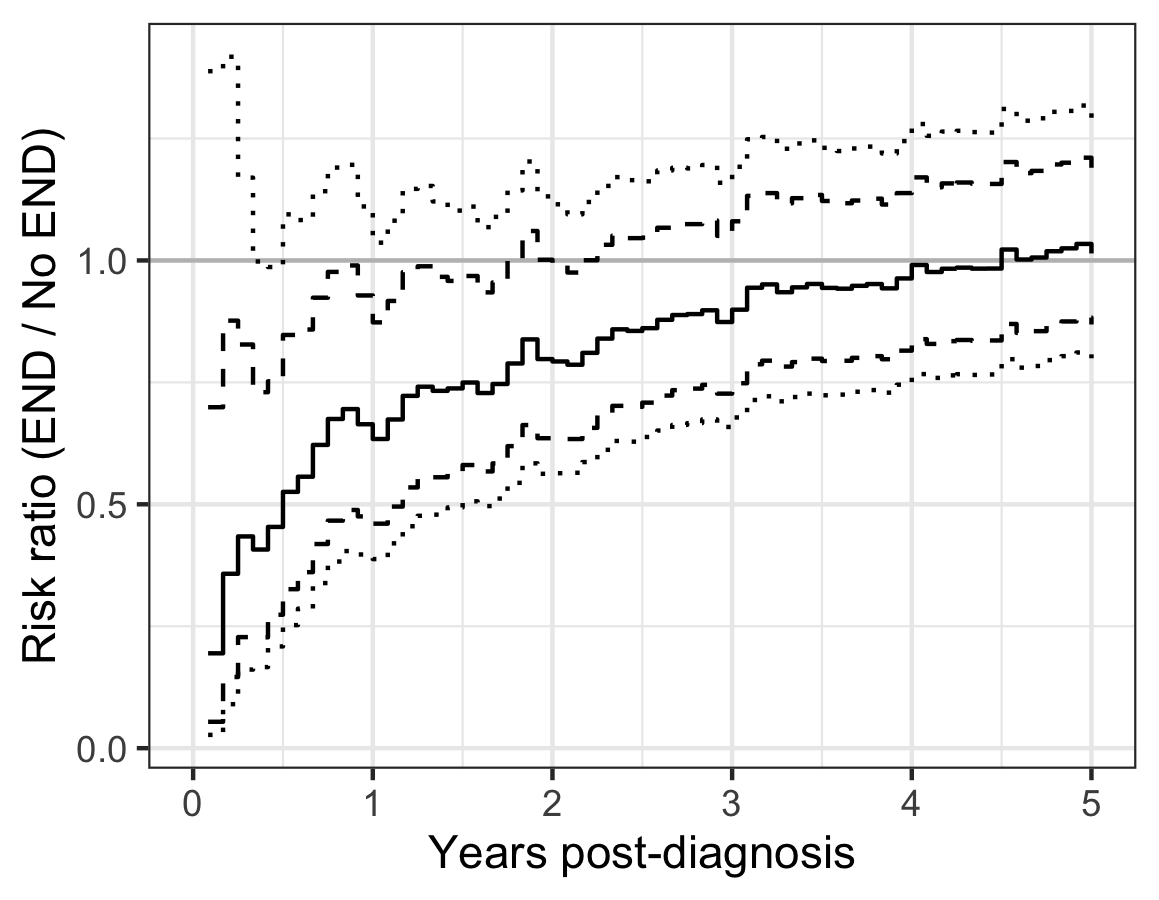}
\caption{Results of the analysis of the effect of elective neck dissection (END) on all-cause mortality. The top row shows the estimated treatment-specific survival curves were all (right panel) and no (left panel) patients to receive END. The bottom row shows the estimated survival difference (left panel) and risk ratio (right panel) functions. In all figures, pointwise 95\% confidence intervals are shown as dashed lines, and uniform 95\% confidence bands are shown as dotted lines.}
\label{fig:parotid}
\end{figure}

Figure~\ref{fig:parotid} displays the results of the analysis. The top row displays the estimated counterfactual survival functions corresponding to receiving END (left) versus not receiving END (right) along with pointwise and uniform confidence regions. We estimate that 53.9\% (95\% CI: 50.1--57.5) of patients would be alive five years post-diagnosis if undergoing END, while 54.5\% (48.3--60.6) would be alive if not undergoing END. The bottom row displays the estimated survival difference and risk ratio functions. The estimated survival difference was positive, and the estimated risk ratio was less than 1 between 0 and 4 years post-diagnosis, suggesting that END possibly improves short-term survival. However, both confidence bands included the null effect throughout this time period, and the p-value of the test of the null hypothesis that $\theta_0(t, 0) = \theta_0(t,1)$ for all $t \in [0,5]$ is 0.12. Thus, we cannot reject the null hypothesis that END does not impact overall survival through five years. The estimated survival ratio function was very similar in form to the estimated survival difference function. We estimate the restricted mean survival time through five years  to be 3.62 years (95\% CI: 3.42--3.81) under no END and  3.76 years (95\% CI: 3.65--3.87) under END, with an estimated difference of 0.14 years (95\%CI: -0.07--0.36). Therefore, after adjusting for baseline confounding, the data no longer provide evidence of an effect of END on survival.

\section{Concluding remarks}\label{conclusion}

In this article, we proposed a doubly-robust estimator of the treatment-specific survival curve in the presence of baseline confounders that permits the use of data-adaptive estimators of nuisance functions. In addition, we proposed an ensemble learner for combining multiple candidate estimators of the conditional event and censoring survival functions. We provided general sufficient conditions for consistency and asymptotic linearity, both pointwise and uniformly, of the proposed estimator, and used these results to construct confidence intervals, confidence bands and tests. The proposed methods permit event and censoring distributions that may be continuous, discrete, or mixed continuous-discrete. This is important because in many applications the event and/or censoring distributions may have either a continuous or mixed continuous-discrete support, whereas most existing methods for counterfactual survival estimation are tailored either to the fully continuous or fully discrete setting.

The methods discussed here can also be used for analyzing data from randomized trials with time-to-event outcomes. In such settings, in view of randomization, the treatment-outcome and treatment-censoring relationships are unconfounded, but the outcome-censoring relationship may still be confounded. Adjusting for baseline covariates can reduce bias due to such dependent censoring, and our methods provide a way to do so without assuming any particular form for the conditional survival and censoring functions. Comparison of our methods to other standard approaches in the context of randomized trials is an interesting and important topic of future research.

Many of the results presented here extend in a straightforward manner to the situation in which the exposure of interest is time-varying but the covariates remain fixed at baseline. However, when the exposure varies over time rather than being fixed, it is typically necessary to adjust for time-varying confounders in order to recover causal parameters, since the change in exposure status may be related to underlying changes in patients characteristics, such as health status, that are also related to the outcome or censoring mechanism. We are unaware of an extension of the continuous-time identification result we used in Theorem~\ref{thm:ident} to the setting with time-varying confounders. Instead, in the context of discrete-time longitudinal data, the nested G-formula provides an identification of the counterfactual survival probabilities \citep{robins1986}. It is unclear how or whether the methods proposed herein would extend to estimation of treatment-specific survival curves in continuous time with time-varying confounders. This is a topic of ongoing research.

\singlespacing
\section*{Acknowledgements}
The authors  gratefully acknowledge support from the University of Massachusetts Department of Mathematics and Statistics startup fund (TW) and NHLBI grant HL137808 (MC).

\singlespacing
\bibliographystyle{apa}
\typeout{}
\bibliography{mybib}

\clearpage

\begin{adjustwidth}{-.25in}{-.25in}

\section*{Supplementary Material}\vspace{.1in}

\subsection*{Additional details regarding numerical studies}\vspace{.1in}

Given $A = 1$ and $W = w$, we simulated $T$ from the survival function $t\mapsto S_0(t \,|\, 1, w) := S_0(\phi(t, w) \,|\, 0, w)$, where $t\mapsto S_0(t \,|\, 0, w)$ is the conditional survival of $T$ given $A = 0$ and $W = w$ defined in the main text, $\phi(t, w)$ is defined piecewise as
\[\begin{cases} \ t - \frac{t^2}{2r}[1-\gamma(w)]&:\ 0 \leq  t \leq r \\
                                  \  \frac{r}{2}[1 + \gamma(w)] + (t - r) \gamma(w)&:\ r \leq  t  \leq r + \iota(w)\\
                                   \ t + \frac{r}{2} [1 + \gamma(w)] + \iota(w) \gamma(w)-  [r+ \iota(w)] [2-\gamma(w)] + \frac{1}{t}[1-\gamma(w)][r+\iota(w)]^2&:\ r + \iota(w) \leq t\ , \end{cases}\]                                  
and we have set \begin{align*}
\gamma(w ) &:= \n{expit}\left[\beta_1 + \frac{1}{2}\log\left\{1 + \exp\left(\frac{w_1 - 55}{5}\right)\right\} +\frac{1}{4}\log\left\{1 + \exp\left(\frac{w_2 - 30}{3}\right)\right\}\right]\\
\iota(w) &:= \exp\left[2 - \frac{1}{2}\log\left\{1 + \exp\left(\frac{w_1 - 55}{5}\right)\right\} -\frac{1}{10}\log\left\{1 + \exp\left(\frac{w_2 - 30}{3}\right)\right\}\right].\\
\end{align*}

Table~\ref{tab:sim_surv_library} displays the candidate learners used in the SuperLearner library for estimating the conditional survival functions in the numerical studies. Table~\ref{tab:sim_prop_library} displays the candidate learners used in the SuperLearner library for estimating the propensity score in the numerical studies. 

\begin{table}[H]
\centering
\begin{tabular}{p{0.35\linewidth} | p{0.6\linewidth}}
Algorithm name & Algorithm description \\
\hline
\texttt{survSL.km} & Kaplan-Meier estimator \\
\texttt{survSL.expreg} & Survival regression assuming the event and censoring times follow exponential distributions conditional on covariates\\
\texttt{survSL.expreg.int} & Same as \texttt{survSL.expreg}, but also including interactions between treatment and each of the covariates\\
\texttt{survSL.weibreg} & Survival regression assuming the event and censoring times follow Weibull distributions conditional on covariates\\
\texttt{survSL.loglogreg} & Survival regression assuming the event and censoring times follow log-logistic distributions conditional on covariates\\
\texttt{survSL.coxph} & Main-terms Cox proportional hazards estimator with Nelson-Aalen estimator of the baseline cumulative hazard\\
\texttt{survSL.coxph.int} & Same as \texttt{survSL.coxph}, but also including interactions between treatment and each of the covariates\\
\texttt{survSL.gam} & Main-terms generalized additive Cox proportional hazards estimator as implemented in the \texttt{mgcv} package \\
\end{tabular}
\caption{Algorithms used for estimation of the conditional survival functions of event and censoring in the numerical studies.}
\label{tab:sim_surv_library}
\end{table}

\begin{table}[H]
\centering
\begin{tabular}{l | l}
Algorithm name & Algorithm description \\
\hline
\texttt{SL.mean} & Marginal mean \\
\texttt{SL.glm} & Main-terms logistic regression\\
\texttt{SL.gam} & Main-terms generalized additive model\\
\texttt{SL.earth} & Multivariate adaptive regression splines\\
\texttt{SL.xgboost} & Extreme gradient boosting
\end{tabular}
\caption{Algorithms used for estimation of the propensity score in the numerical studies.}
\label{tab:sim_prop_library}
\end{table}\vspace{.1in}

\subsection*{Additional details regarding application}\vspace{.1in}

Table~\ref{tab:app_prop_library} displays the candidate learners used in the SuperLearner library for estimating the propensity score in the analysis of the effect of elective neck dissection (END) on mortality. We used two custom screening algorithms for inputting variables into all algorithms: marginal screening (i.e.\ estimation of marginal logistic regressions of the exposure on each potential confounder) with p-value cutoffs of 0.05 and 0.10. For \texttt{SL.glm} and \texttt{SL.step}, we also input all variables.

Table~\ref{tab:app_surv_library} displays the candidate learners used in the SuperLearner library for estimating the conditional survival functions in the analysis of the effect of END on mortality.  For the \texttt{survSL.km}, \texttt{survSL.pchSL}, \texttt{survSL.coxph}, \texttt{survSL.expreg}, \texttt{survSL.weibreg}, and \texttt{survSL.loglogreg} algorithms, we included all covariates. For the \texttt{survSL.coxph}, \texttt{survSL.expreg}, \texttt{survSL.weibreg}, \texttt{survSL.loglogreg}, \texttt{survSL.gam}, and \texttt{survSL.rfsrc} algorithms, we also used two screening algorithms for inputting variables into algorithms: marginal screening, in which only variables with p-values less than 0.10 in a marginal Cox regression are included, and screening based on a penalized Cox proportional hazards model as implemented in the \texttt{glmnet} package.

For the conditional survival function of the event time, \texttt{survSL.weibreg} with all covariates received an average (across the five cross-fitting folds) of 72\% of the SuperLearner weight, while \texttt{survSL.weibreg} with the covariates selected by the penalized Cox model received an average of 28\% of the weight. For the conditional survival function of the censoring time, \texttt{survSL.km} received an average of 49\% of the SuperLearner weight, \texttt{survSL.weibreg} with all covariates selected received an average of 15\% of the weight, \texttt{survSL.weibreg} with the covariates selected by the penalized Cox model received an average of 21\% of the weight, and \texttt{survSL.coxph} with all covariates received an average of 10\% of the weight.\\

\begin{table}[H]
\centering
\begin{tabular}{p{0.35\linewidth} | p{0.6\linewidth}}
Algorithm name & Algorithm description \\
\hline
\texttt{survSL.km} & Kaplan-Meier estimator \\
\texttt{survSL.expreg} & Survival regression assuming the event and censoring times follow exponential distributions conditional on covariates\\
\texttt{survSL.weibreg} & Survival regression assuming the event and censoring times follow Weibull distributions conditional on covariates\\
\texttt{survSL.loglogreg} & Survival regression assuming the event and censoring times follow log-logistic distributions conditional on covariates\\
\texttt{survSL.coxph} & Main-terms Cox proportional hazards estimator with Nelson-Aalen estimator of the baseline cumulative hazard\\
\texttt{survSL.gam} & Main-terms generalized additive Cox proportional hazards estimator as implemented in the \texttt{mgcv} package \\
\texttt{survSL.rfsrc} & Survival random forest as implemented in the \texttt{randomForestSRC} package \\
\texttt{survSL.pchSL1}, \texttt{survSL.pchSL2}, ..., \texttt{survSL.pchSL5}& Piecewise constant hazard model with $k =1, \dotsc, 5$ bins. In these models, the conditional hazard function is assumed to be piecewise constant, and the conditional hazard in each bin is estimated using a standard SuperLearner for a binary outcome using the same library as used for the propensity score (see Table~\ref{tab:app_prop_library})
\end{tabular}
\caption{Algorithms used for estimation of the conditional survival functions of event and censoring in the parotid cancer application.}
\label{tab:app_surv_library}
\end{table}

\begin{table}[H]
\centering
\begin{tabular}{l | l}
Algorithm name & Algorithm description \\
\hline
\texttt{SL.mean} & Marginal mean \\
\texttt{SL.glm} & Main-terms logistic regression\\
\texttt{SL.step} & Forward/backwards stepwise main-terms logistic regression\\
\texttt{SL.ranger} & Random forest with 500 trees\\
\texttt{SL.gam} & Main-terms generalized additive model\\
\texttt{SL.earth} & Multivariate adaptive regression splines\\
\texttt{SL.xgboost} & Extreme gradient boosting
\end{tabular}
\caption{Algorithms used for estimation of the propensity score in the parotid cancer application.}
\label{tab:app_prop_library}
\end{table}

\newpage 
\subsection*{Proof of Theorems}\vspace{.1in}

Below, to avoid possible confusion, we use $a_0$ to denote the fixed exposure level of interest, and reserve $a$ to represent a possible realization of the exposure random variable $A$. Because this convention was not as critical in the main text as in this technical supplement, it was not used thoroughly in the main text to simplify the notation there.\vspace{.1in}

\begin{proof}[\bfseries Proof of Theorem~\ref{thm:ident}]
Conditions \ref{itm:t_exchang} and \ref{itm:a_pos} imply that 
\[P_{0,F}(T(a_0) > t \,|\, W = w) =  P_{0,F}(T(a_0) > t \,|\, A = a_0, W = w) = P_{0,F}(T > t \,|\, A = a_0, W = w) \]
for all $t \in (0,\tau]$ and $P_0$-almost every $w$, since $I(T(a_0) > t)$ is a measurable function of $T(a_0) I(T(a_0) \leq \tau)$ for $t \leq \tau$. Therefore,  $\theta_{0,F}(t,a_0) = P_{0,F}(T(a_0) > t) = E_0 \left[ P_{0,F}(T > t \,|\, A = a_0, W) \right]$ by the tower property. Let $S_{0,F}(t \,|\, a_0, w) :=  P_{0,F}(T > t \,|\, A = a_0, W = w)$. Since $T$ is a positive random variable, by Theorem~11 of \cite{gill1990product} we can then write
\[ S_{0,F}(t \,|\, a_0, w) = P_{0,F}(T > t \,|\, A = a_0, W = w)  = \Prodi_{(0,t]}\left\{ 1- \Lambda_{0,F}(du \,|\, a_0, w) \right\}\]
for $\Lambda_{0,F}(t\,|\, a_0, w) = -\int_{(0,t]} \frac{S_{0,F}(du \,|\, a_0, w)}{S_{0,F}(u\m \,|\, a_0, w)}$. Now, by definition of $Y$, $T$, and $C$, we have
\begin{align*}
R_0(t \,|\, a_0, w) &= P_0(Y \geq t \,|\, A = a_0, W = w) = P_{0,F}(T \geq t, C \geq t \,|\, A = a_0, W = w) \\
&= P_{0,F}(T(a_0) \geq t, C(a_0) \geq t \,|\, A = a_0, W = w)
\end{align*}
for all $t$. By~\ref{itm:tc_exchang}, we thus have
\begin{align*}
R_0(t \,|\, a_0, w) &=  P_{0,F}(T(a_0) \geq t \,|\, A = a_0, W = w)P_{0,F}(C(a_0) \geq t \,|\, A = a_0, W = w) \\
&= S_{0,F}(t\m \,|\, a_0, w) G_{0,F}(t\m \,|\, a_0, w)
\end{align*}
for each $t \in (0, \tau]$, where $G_{0,F}(t \,|\, a_0, w) := P_{0,F}(C(a_0) > t \,|\, A = a_0, W = w)$. We also have
\begin{align*}
F_{0,1}(t \,|\, a_0, w) &= P_0(Y \leq t, \Delta = 1 \,|\, A = a_0, W = w) \\
&= P_{0,F}(\min\{T, C\} \leq t, T \leq C \,|\, A = a_0, W = w) \\
&= P_{0,F}(T \leq t, T \leq C \,|\, A = a_0, W = w) \\
&  =\textstyle \int_{u \in (0, t]} \int_{v \geq u} P_{0,F}( du, dv \,|\, A = a_0, W = w)\ ,
\end{align*}
where here $(u,v)\mapsto P_{0,F}(u, v \,|\, A = a_0, W = w) := P_{0,F}(T \leq u, C \leq v \,|\, A = a_0, W = w)$ is the joint distribution function of $T$ and $C$ given $A =a_0$ and $W = w$. By~\ref{itm:tc_exchang}, we have
\begin{align*}
P_{0,F}(u, v \,|\, A = a_0, W = w) &= P_{0,F}(T \leq u, C \leq v \,|\, A = a_0, W = w)\\
& = P_{0,F}(T(a_0) \leq u, C(a_0) \leq v \,|\, A = a_0, W = w) \\
&= P_{0,F}(T(a_0) \leq u \,|\, A =a_0, W = w) P_{0,F}(C(a_0) \leq v \,|\, A =a_0, W = w) \\
&= \left[1 - S_{0,F}(u \,|\, a_0,w)\right] \left[1 - G_{0,F}(v \,|\, a_0,w)\right]
\end{align*}
for each $u, v\in (0,\tau]$. It follows that 
\begin{align*}
F_{0,1}(t \,|\, a_0, w)  &= -\int_{u \in (0, t]}  G_{0,F}(u\m \,|\, a_0,w) S_{0,F}(du \,|\, a_0,w)
\end{align*}
for each $t \in (0,\tau]$. As a result, we have that $F_{0,1}(dt \,|\, a_0, w) = G_{0,F}(t\m \,|\, a_0,w) S_{0,F}(dt \,|\, a_0,w)$ for each $t \in (0,\tau]$. Now, we note that~\ref{itm:c_exchang} and~\ref{itm:c_pos} together imply that $G_{0,F}(t\m \,|\, a_0, w) > 0$ for $P_0$-almost every $w$ and all $t \in [0, \tau]$. Therefore,
\begin{align*}
\Lambda_{0,F}(t\,|\, a_0, w) &= -\int_{(0,t]} \frac{S_{0,F}(du \,|\, a_0, w)}{S_{0,F}(u\m\,|\, a_0, w)}\\
&=  -\int_{(0,t]} \frac{G_{0,F}(u\m \,|\, a_0, w)S_{0,F}(du \,|\, a_0, w) }{G_{0,F}(u\m \,|\, a_0, w)S_{0,F}(u\m \,|\, a_0, w)}\\
&= \int_{(0,t]} \frac{F_{0,1}(du \,|\, a_0, w)}{R_0(u \,|\, a_0, w) }
\end{align*}
for each $t \in (0, \tau]$, which completes the proof.
\end{proof}

\begin{proof}[\bfseries Proof of Theorem~\ref{thm:eif}]

Let $\{P_{\epsilon} : |\epsilon| \leq \delta\}$ be a suitably smooth and bounded Hellinger differentiable path with $P_{\epsilon = 0} = P_0$ and score function $\dot\ell_0$ at $\epsilon = 0$. For a distribution $P$ of $(W, A, Y, \Delta)$, we let $Q$ be the marginal distribution of $W$ as implied by $P$. We then have under appropriate boundedness conditions that
\begin{align*}
\left.\frac{\partial}{\partial\epsilon}\theta_{\epsilon}(t, a_0) \right|_{\epsilon = 0} &= \left.\frac{\partial}{\partial\epsilon} \int S_{\epsilon}(t \,|\, a_0, w) dQ_{\epsilon}(w) \right|_{\epsilon = 0}\\
& =\int \left.\frac{\partial}{\partial\epsilon} S_{\epsilon}(t \,|\, a_0, w) \right|_{\epsilon = 0} dQ_{0}(w)+ \int S_{0}(t \,|\, a_0, w) \dot\ell_0(w) dQ_{0}(w)\ .
\end{align*}
The second term contributes $S_{0}(t \,|\, a_0, w)$ to the efficient influence function. 

By definition, the integrand in the first term is
\[ \left.\frac{\partial}{\partial\epsilon} \Prodi_{(0,t]} \left\{1 - \Lambda_\epsilon(du \,|\, a_0, w) \right\}  \right|_{\epsilon = 0}.\]
By Theorem~8 of \cite{gill1990product}, the product integral map $H \mapsto S_H(t) := \prodi_{(0,t]} \{1 + H(du)\}$ is Hadamard differentiable relative to the supremum norm with derivative  \[\alpha \mapsto S_H(t) \int_0^t \frac{S_H(u\m)}{S_H(u)} \alpha(du)\] at $H$. Therefore, by the chain rule, we have
\begin{align*}
\left.\frac{\partial}{\partial\epsilon} \Prodi_{(0,t]} \left\{1 - \Lambda_\epsilon(du \,|\, a_0, w) \right\}  \right|_{\epsilon = 0} = -S_0(t \,|\, a_0, w) \int_0^t \frac{S_0(u\m \,|\, a_0, w)}{S_0(u \,|\, a_0, w)} \left.\frac{\partial}{\partial\epsilon} \Lambda_\epsilon(du \,|\, a_0, w)  \right|_{\epsilon = 0}.
\end{align*}
Now, because we can write
\begin{align*}
&\left. \frac{\partial}{\partial\epsilon}  \Lambda_{\epsilon}(t \,|\, a_0, w) \right|_{\epsilon = 0} = \left. \frac{\partial}{\partial\epsilon}  \int_{(0, t]} R_{\epsilon}(u \,|\, a_0, w)^{-1} \, F_{\epsilon,1}(du \,|\,  a_0, w) \right|_{\epsilon = 0}\\
 &= \int_{(0, t]}  R_{0}(u \,|\, a_0, w)^{-1} \left. \frac{\partial}{\partial\epsilon} F_{\epsilon,1 }(du \,|\,  a_0, w)  \right|_{\epsilon = 0}-\int_{(0, t]} \left. \frac{\partial}{\partial\epsilon} R_{\epsilon}(u \,|\,  a_0, w) \right|_{\epsilon = 0}R_{0}(u \,|\,  a_0, w)^{-2} \, F_{0}(du \,|\,  a_0, w)\ ,
 \end{align*}
 we have
\begin{align*}
 \left.\frac{\partial}{\partial\epsilon} \Lambda_\epsilon(du \,|\, a_0, w)  \right|_{\epsilon = 0}&= \frac{ \left. \frac{\partial}{\partial\epsilon} F_{\epsilon,1 }(du \,|\,  a_0, w)  \right|_{\epsilon = 0}}{R_{0}(u \,|\,  a_0, w)} -\frac{\left. \frac{\partial}{\partial\epsilon} R_{\epsilon}(u \,|\,  a_0, w) \right|_{\epsilon = 0}\, F_{0}(du \,|\,  a_0, w)}{R_{0}(u \,|\,  a_0, w)^{2}}\ .
 \end{align*}
In addition,
\begin{align*}
\left. \frac{\partial}{\partial\epsilon} F_{\epsilon,1 }(u \,|\,  a_0, w)  \right|_{\epsilon = 0} &= \left. \frac{\partial}{\partial\epsilon} P_{\epsilon }( Y \leq u, \Delta = 1\,|\, A = a_0, W =w)  \right|_{\epsilon = 0}\\
&= \left. \frac{\partial}{\partial\epsilon} \iint I(y \leq u, \delta = 1) P_{\epsilon}(dy, d\delta \,|\, a_0, w)  \right|_{\epsilon = 0}\\
&=  \iint I(y \leq u, \delta = 1) \dot\ell_0(y, \delta \,|\, a_0, w) P_{0}(dy, d\delta \,|\, a_0, w)\ ,
\end{align*}
so that $ \left. \frac{\partial}{\partial\epsilon} F_{\epsilon,1 }(du \,|\,  a_0, w)  \right|_{\epsilon = 0} = \int_{\delta}I(\delta = 1) \dot\ell_0(u, \delta \,|\, a_0, w) P_{0}(du, d\delta \,|\, a_0, w)$.
In a similar manner, we find
$
 \left. \frac{\partial}{\partial\epsilon} R_{\epsilon}(u \,|\,  a_0, w)  \right|_{\epsilon = 0} =\iint I(y\geq u) \dot\ell_0(y, \delta \,|\, a_0, w) \,P_0(dy, d\delta \,|\, a_0, w)$.
Therefore,
\begin{align*}
& \left.\frac{\partial}{\partial\epsilon} \iint\prodi_{(0,t]} \left\{1 - \Lambda_\epsilon(du \,|\, a_0, w) \right\}  \, dQ_0(w) \right|_{\epsilon = 0}  \\
 &=  \iiint - I(y \leq t, \delta = 1) \frac{S_0(t \,|\, a_0, w) S_0(y\m \,|\, a_0, w)}{S_0(y \,|\, a_0, w)R_0(y \,|\, a_0, w)} \dot\ell_0(y, \delta \,|\, a_0, w) P_{0}(dy, d\delta \,|\, a_0, w) \, dQ_0(w) \\
 &\quad + \iiiint I(u \leq t, u \leq y) \frac{S_0(t \,|\, a_0, w) S_0(u\m \,|\, a_0, w)}{S_0(u \,|\, a_0, w)R_0(u \,|\, a_0, w)^2} \dot\ell_0(y, \delta \,|\, a_0, w)P_0(dy, d\delta \,|\, a_0, w) F_{0}(du \,|\, a_0, w) \, dQ_0(w) \\
 &=  \iiint - I(y \leq t, \delta = 1) \frac{S_0(t \,|\, a_0, w) S_0(y\m \,|\, a_0, w)}{S_0(y \,|\, a_0, w)R_0(y \,|\, a_0, w)} \dot\ell_0(y, \delta \,|\, a_0, w) P_{0}(dy, d\delta \,|\, a_0, w) \, dQ_0(w) \\
 &\quad + \iiint S_0(t \,|\, a_0, w) \int_0^{t\wedge y} \frac{ S_0(u\m \,|\, a_0, w)}{S_0(u \,|\, a_0, w)R_0(u \,|\, a_0, w)^2} F_{0}(du \,|\, a_0, w)\dot\ell_0(y, \delta \,|\, a_0, w)P_0(dy, d\delta \,|\, a_0, w) \, dQ_0(w) \\
 &= E_0\left[S_0(t \,|\, A, W)  \frac{I(A = a_0)}{\pi_0(a_0 \,|\, W)}\left\{   H_0(t \wedge Y, A, W)-\frac{I(Y \leq t, \Delta = 1) S_0(Y\m \,|\, A, W)}{S_0(Y \,|\, A, W)R_0(Y \,|\, A, W)} \right\}\dot\ell_0(Y, \Delta \,|\, A, W) \right],
 \end{align*}
 where $H_0(u, a, w) :=  \int_0^{u} \frac{ S_0(u\m \,|\, a, w)F_{0}(du \,|\, a, w)}{S_0(u \,|\, a, w)R_0(u \,|\, a, w)^2}$. Now, we note that 
 \begin{align*}
 E_0 \left[ \frac{I(Y \leq t, \Delta = 1) S_0(Y\m \,|\, A, W)}{S_0(Y \,|\, A, W)R_0(Y \,|\, A, W)} \,\middle|\, A  = a, W = w\right] &= \int_0^t \frac{S_0(y\m \,|\, a, w)F_0(dy \,|\, a, w)}{S_0(y \,|\, a, w)R_0(y \,|\, a, w) }
 \end{align*} 
 and $ E_0 \left[H_0(t \wedge Y, A, W) \,|\, A  = a, W = w\right]$ equals
  \begin{align*}
 & \iint_{u=0}^t I(u \leq y) \frac{S_0(u\m \,|\, a, w)F_0(du \,|\, a, w)}{S_0(u \,|\, a, w)R_0(u \,|\, a, w)^2 } \, P_0(dy \,|\, a, w) \\
 &= \int_0^t P_0(Y \geq u \,|\, A = a, W=w) \frac{S_0(u\m \,|\, a, w)F_0(du \,|\, a, w)}{S_0(u \,|\, a, w)R_0(u \,|\, a, w)^2 } \, P_0(dy \,|\, a, w) \\
 &= \int_0^t \frac{S_0(u\m \,|\, a, w)F_0(du \,|\, a, w)}{S_0(u \,|\, a, w)R_0(u \,|\, a, w) }
 \end{align*} 
 since $P_0(Y \geq u \,|\, A = a, W=w) = R_0(u \,|\, a, w)$ by definition. Therefore,
 \[E_0 \left[ H_0(t \wedge Y, A, W)-\frac{I(Y \leq t, \Delta = 1) S_0(Y\m \,|\, A, W)}{S_0(Y \,|\, A, W)R_0(Y \,|\, A, W)} \,\middle|\, A , W \right] = 0\]$P_0$-almost surely.
This implies by properties of score functions and the tower property that 
\begin{align*}
& \left.\frac{\partial}{\partial\epsilon} \int\prodi_{(0,t]} \left\{1 - \Lambda_\epsilon(u \,|\, a_0, w) \right\}  \, dQ_0(w) \right|_{\epsilon = 0}  \\
&= E_0\left[S_0(t \,|\, A, W)  \frac{I(A = a_0)}{\pi_0(a_0 \,|\, W)}\left\{H_0(t \wedge Y, A, W)-\frac{I(Y \leq t, \Delta = 1) S_0(Y\m \,|\, A, W)}{S_0(Y \,|\, A, W)R_0(Y \,|\, A, W)}  \right\} \dot\ell_0(Y, \Delta, A, W) \right].
\end{align*}Combining these results, we find that the uncentered influence function is 
\begin{align*}
&o\mapsto S_0(t \,|\,  a_0, w) \left[  1 -  \frac{I(a = a_0)}{\pi_0(a_0 \,|\, w)} \left\{ \frac{I(y\leq t, \delta = 1) S_0(y\m \,|\, a_0 , w)}{S_0(y \,|\, a_0, w)R_0(y \,|\, a_0, w)} + \int_0^{t \wedge y} \frac{ S_0(u\m \,|\, a_0, w)F_{0}(du \,|\, a_0,w)}{S_0(u \,|\, a_0, w)R_0(u \,|\, a_0,w)^2} \right\} \right].
\end{align*}
By our calculation above, the mean of the term in curly brackets is zero, and so, the mean of the entire expression is $E_0\left[S_0(t \,|\, a_0, W)\right] = \theta_0(t,a_0)$. We note that $F_{0}(du \,|\, a_0,w) / R_0(u \,|\, a_0,w) = \Lambda_0(du \,|\, a_0, w)$ and that $R_0(u \,|\, a_0, w) = S_0(u\m \,|\, a_0, w) G_0(u \,|\, a_0, w)$, so that, as claimed, the above is equal to
\begin{align*}
&S_0(t \,|\,  a_0, w) \left[ 1 -  \frac{I(a = a_0)}{\pi_0(a_0 \,|\, W)} \left\{ \frac{I(y\leq t, \delta = 1) S_0(y\m \,|\, a_0 , w)}{S_0(y \,|\, a_0, w)S_0(y\m \,|\, a_0, w)G_0(y \,|\, a_0, w)} \right.\right.\\
&\hspace{2.5in} \left.\left.+ \int_0^{t \wedge y} \frac{ S_0(u\m \,|\, a_0, w)\Lambda_{0}(du \,|\, a_0,w)}{S_0(u \,|\, a_0, w)S_0(u\m \,|\, a_0,w) G_0(u \,|\, a_0, w)} \right\} \right]  \\
&= S_0(t \,|\,  a_0, w) \left[  1 -  \frac{I(a = a_0)}{\pi_0(a_0 \,|\, w)} \left\{ \frac{I(y\leq t, \delta = 1)}{S_0(y \,|\, a_0, w)G_0(y \,|\, a_0, w)}+ \int_0^{t \wedge y} \frac{ \Lambda_{0}(du \,|\, a_0,w)}{S_0(u \,|\, a_0, w) G_0(u \,|\, a_0, w)} \right\} \right] .
\end{align*}\end{proof}
We denote by $\phi_{\infty, t}^* = \phi_{\infty, t} - \theta_0(t)$ the influence function with the limits $S_{\infty}$, $G_{\infty}$, $\Lambda_{\infty}$ and $\pi_{\infty}$ substituted for the respective nuisance parameters. We also denote by $\d{P}_{n}^k$ the empirical distribution corresponding to the $k$th validation set $\{O_i : i \in \s{V}_{n,k}\}$ and $\d{G}_{n}^k := n_k^{1/2} (\d{P}_{n}^k - P_0)$ the corresponding empirical process.

Before proving Theorems~\ref{thm:consistency} and~\ref{thm:asymptotic_linearity}, we introduce several supporting lemmas. For  nuisance functions $S$, $\pi$, $G$ and $\Lambda$ the conditional cumulative hazard corresponding to $S$, we define $\phi_{S, \pi, G, t,a_0}(w, a, \delta, y)$ as 
\[S(t \,|\, a_0, w)\left[ 1 - \frac{I(a = a_0)}{ \pi(a_0 \,|\, w)}\left\{ \frac{I(y \leq t, \delta = 1)}{S( y \,|\, a_0, w) G(y \,|\, a_0,  w)} - \int_{0}^{t \wedge y}\frac{\Lambda(du \,|\, a_0, w)}{S(u \,|\, a_0, w)G(u \,|\, a_0, w)} \right\} \right].\]
Our first result provides a useful representation of $P_0 \phi_{S, G, g, t,a_0} - \theta_0(t,a_0)$.
\begin{lemma}\label{lemma:if_mean}
For any conditional survival function $S$ and corresponding cumulative hazard $\Lambda$, any conditional censoring function $G$, and any propensity function $\pi$, $P_0 \phi_{S, G, \pi, t,a_0} - \theta_0(t,a_0)$ equals
\[E_{0} \left[ S(t \,|\, a_0, W) \int_0^t \frac{S_0(u\m \,|\, a_0, W)}{S(u \,|\, a_0, W)}\left\{\frac{\pi_0(a_0 \,|\, W) G_0(u \,|\, a_0, W)}{\pi(a_0 \,|\, W)G(u \,|\, a_0, W)} - 1\right\} (\Lambda - \Lambda_0)(du \,|\, a_0, W)   \right].\]
\end{lemma}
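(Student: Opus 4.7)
The plan is to expand $P_0 \phi_{S,G,\pi,t,a_0}$ by linearity and then compute the inner conditional expectation given $(A,W)$ using the tower property, converting the presence of $I(A=a_0)/\pi(a_0|W)$ into an extra factor $\pi_0(a_0|W)/\pi(a_0|W)$. Under the observed-data distribution, conditional on $A=a_0,W=w$, the sub-distribution for $\{Y\leq t,\Delta=1\}$ satisfies $F_{0,1}(du\,|\,a_0,w)=S_0(u\m\,|\,a_0,w)G_0(u\,|\,a_0,w)\Lambda_0(du\,|\,a_0,w)$, and the at-risk probability is $P_0(Y\geq u\,|\,a_0,w)=R_0(u\,|\,a_0,w)=S_0(u\m\,|\,a_0,w)G_0(u\,|\,a_0,w)$. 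These two identities are exactly what is needed to evaluate the two pieces inside the curly braces.

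Specifically, I would compute
\[
E_0\!\left[\frac{I(Y\leq t,\Delta=1)}{S(Y\,|\,a_0,W)G(Y\,|\,a_0,W)}\,\Big|\,A=a_0,W=w\right]=\int_0^t\frac{S_0(u\m\,|\,a_0,w)G_0(u\,|\,a_0,w)}{S(u\,|\,a_0,w)G(u\,|\,a_0,w)}\Lambda_0(du\,|\,a_0,w)
\]
and, by Fubini applied to $\int_0^{t\wedge Y}=\int_0^t I(u\leq Y)$,
\[
E_0\!\left[\int_0^{t\wedge Y}\frac{\Lambda(du\,|\,a_0,W)}{S(u\,|\,a_0,W)G(u\,|\,a_0,W)}\,\Big|\,A=a_0,W=w\right]=\int_0^t\frac{S_0(u\m\,|\,a_0,w)G_0(u\,|\,a_0,w)}{S(u\,|\,a_0,w)G(u\,|\,a_0,w)}\Lambda(du\,|\,a_0,w).
\]
Subtracting, the difference of these two conditional expectations telescopes to $-\int_0^t \frac{S_0(u\m\,|\,a_0,w)G_0(u\,|\,a_0,w)}{S(u\,|\,a_0,w)G(u\,|\,a_0,w)}(\Lambda-\Lambda_0)(du\,|\,a_0,w)$, so after reattaching the factors $S(t\,|\,a_0,W)$ and $\pi_0(a_0\,|\,W)/\pi(a_0\,|\,W)$ we obtain
\[
P_0\phi_{S,G,\pi,t,a_0}=E_0[S(t\,|\,a_0,W)]+E_0\!\left[S(t\,|\,a_0,W)\int_0^t\frac{S_0(u\m\,|\,a_0,W)}{S(u\,|\,a_0,W)}\cdot\frac{\pi_0(a_0\,|\,W)G_0(u\,|\,a_0,W)}{\pi(a_0\,|\,W)G(u\,|\,a_0,W)}(\Lambda-\Lambda_0)(du\,|\,a_0,W)\right].
\]

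The final step is to rewrite $E_0[S(t\,|\,a_0,W)]-\theta_0(t,a_0)=E_0[S(t\,|\,a_0,W)-S_0(t\,|\,a_0,W)]$ as an integral of $\Lambda-\Lambda_0$ using the Duhamel (product-integral) identity $S_0(t)-S(t)=-S(t)\int_0^t\{S_0(u\m)/S(u)\}(\Lambda_0-\Lambda)(du)$, which is valid for arbitrary (including mixed discrete/continuous) cumulative hazards and follows from Theorem~6 of \cite{gill1990product}. Substituting and combining the two integrals into a single one with integrand weighted by $\{\pi_0 G_0/(\pi G)-1\}$ yields exactly the claimed expression.

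The main obstacle is bookkeeping: one must treat $G_0$ as left-continuous and $S_0$ as right-continuous so that the product $S_0(u\m)G_0(u)$ coincides with $R_0(u)$, and must ensure that the Stieltjes integrals $\int_0^t\cdots\,\Lambda(du\,|\,a_0,w)$ and $\int_0^{t\wedge Y}\cdots$ are interpreted consistently over $(0,t]$ and $(0,t\wedge Y]$ respectively. Apart from this notational care and the invocation of Duhamel's formula, the argument is algebraic manipulation.
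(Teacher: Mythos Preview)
Your proposal is correct and follows essentially the same route as the paper: compute the conditional expectation of each piece of the curly-brace term given $A=a_0,W$ using $F_{0,1}(du\,|\,a_0,w)=S_0(u\m\,|\,a_0,w)G_0(u\,|\,a_0,w)\Lambda_0(du\,|\,a_0,w)$ and $R_0(u\,|\,a_0,w)=S_0(u\m\,|\,a_0,w)G_0(u\,|\,a_0,w)$, subtract, and then apply the Duhamel identity to $S(t\,|\,a_0,W)-S_0(t\,|\,a_0,W)$ to merge the two pieces into the single integral with weight $\pi_0 G_0/(\pi G)-1$. The paper's proof is organized identically, including the invocation of Theorem~6 of \cite{gill1990product} for the Duhamel step.
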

\begin{proof}[\bfseries{Proof of Lemma~\ref{lemma:if_mean}}]
We first write 
\[\phi_{S,G, \pi, t,a_0}(y, \delta, a, w) = S(t \,|\, a_0, w)\left\{ 1 - \frac{I(a = a_0)}{\pi(a_0 \,|\, w)} H_{S, G, t, a_0}(y, \delta, w)\right\},\]
 where we define
\[ H_{S,G,t,a_0}(y, \delta, w) := \frac{I(y \leq t, \Delta = 1)}{S(y \,|\, a_0, w) G(y \,|\, a_0, w)}- \int_0^{t \wedge y} \frac{  \Lambda(du \,|\, a_0, w)} {S(u \,|\, a_0, w) G(u \,|\, a_0, w)} \ . \]
We note that $E_{0}[ H_{S,G, t,a_0}(Y, \Delta, W) \,|\, W = w]$ equals
\begin{align*}
&\int_0^t \frac{S_0(y\m \,|\, a_0, w) G_0(y \,|\, a_0, w) }{S(y \,|\, a_0, w) G(y \,|\, a_0, w)} \Lambda_0(dy \,|\, a_0, w)- \int_0^t \frac{S_0(u\m\,|\, a_0, W) G_0(u \,|\, a_0, W)}{S(u \,|\, a_0, w)G(u \,|\, a_0, w)} \Lambda(du \,|\, a_0, w) \\
&=  -\int_0^t \frac{S_0(y\m \,|\, a_0, w) G_0(y \,|\, a_0, w) }{S(y \,|\, a_0, w) G(y \,|\, a_0, w)} (\Lambda - \Lambda_0)(dy \,|\, a_0, w)\ .
\end{align*}
Therefore, $P_0 \phi_{S, G,\pi, t, a_0} - \theta_0(t, a_0)$ equals
\begin{align*}
&E_{0} \left[ S(t \,|\, a_0, W) \left\{ 1 - \frac{I(A = a_0)}{\pi(a_0 \,|\, W)} H_{S,G, t, a_0}(Y, \Delta, W) \right\}  - S_0(t \,|\, a_0, W)\right]  \\
&= E_0\left[\frac{\pi_0(a_0 \,|\, W)}{\pi(a_0 \,|\, W)} S(t \,|\, a_0, W)  \int_0^t\frac{S_0(y\m \,|\, a_0, W) G_0(y \,|\, a_0, W) }{S(y \,|\, a_0, W) G(y \,|\, a_0, W)} (\Lambda - \Lambda_0)(dy \,|\, a_0, W) \right]\\
&\quad +E_{0} \left[ S(t \,|\, a_0, W) - S_0(t\,|\, a_0, W)\right].
\end{align*}
Now, in view of the Duhamel equation (Theorem 6 of \citealp{gill1990product}) we have
\[ S(t \,|\, a_0, w) - S_0(t\,|\, a_0, w) =  -S(t \,|\, a_0, w) \int_0^t \frac{S_0(y\m \,|\, a_0, w)}{S(y \,|\, a_0, w)} (\Lambda- \Lambda_0)(dy \,|\, a_0, w) \]
for each $(t, a_0, w)$. Therefore, combining the two terms above yields that $P_0 \phi_{S,G,\pi, t, a_0} - \theta_0(t, a_0)$ equals
\begin{align*}
 E_{0} \left[ S(t \,|\, a_0, W) \int_0^t \frac{S_0(u\m \,|\, a_0, W)}{S(u \,|\, a_0, W)}\left\{\frac{\pi_0(a_0 \,|\, W) G_0(u \,|\, a_0, W)}{\pi(a_0 \,|\, W)G(u \,|\, a_0, W)} - 1\right\} (\Lambda - \Lambda_0)(du \,|\, a_0, W)   \right].
\end{align*}
\end{proof}Next, we establish a first-order expansion of the estimator that we will make use of below.
\begin{lemma}\label{lemma:expansion}
If \ref{itm:dr}  holds, then $P_0 \phi_{\infty, t,a_0} = \theta_0(t,a_0)$, so that $\theta_{n}(t,a_0) -\theta_0(t,a_0)$ can be expressed as\begin{align*}
\d{P}_n \phi_{\infty,t,a_0}^* + \frac{1}{K} \sum_{k=1}^K \frac{K n_k^{1/2}}{n} \d{G}_{n}^k \left(\phi_{n,k,t,a_0} - \phi_{\infty,t,a_0}\right)+ \frac{1}{K} \sum_{k=1}^K \frac{K n_k}{n} \left[P_0 \phi_{n,k,t,a_0} - \theta_0(t,a_0)\right].
\end{align*}
\end{lemma}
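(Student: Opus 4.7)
The plan is to establish the two assertions in sequence. First I would show that $P_0 \phi_{\infty,t,a_0} = \theta_0(t,a_0)$ under~\ref{itm:dr} by invoking Lemma~\ref{lemma:if_mean} with $(S,\Lambda,G,\pi)$ taken to be $(S_\infty,\Lambda_\infty,G_\infty,\pi_\infty)$. That lemma expresses $P_0\phi_{\infty,t,a_0} - \theta_0(t,a_0)$ as the expectation of an integral with respect to the signed measure $(\Lambda_\infty - \Lambda_0)(du \,|\, a_0,W)$ against the integrand
\[ \frac{S_\infty(u\m \mid a_0, W)\,S_0(u\m \mid a_0, W)}{S_\infty(u \mid a_0,W)} \left\{ \frac{\pi_0(a_0\mid W)\,G_0(u\mid a_0, W)}{\pi_\infty(a_0 \mid W)\,G_\infty(u\mid a_0,W)} - 1 \right\}. \]
For $P_0$-almost every $w$, I would split the integral over $[0,t]$ into the pieces over $\s{S}_w$ and $[0,t]\setminus\s{S}_w \subseteq \s{G}_w$. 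On $\s{S}_w$ the signed measure $(\Lambda_\infty-\Lambda_0)(du\,|\,a_0,w)$ vanishes, so the contribution is zero. On $[0,t]\setminus\s{S}_w$, which is nonempty only when $\s{S}_w$ is a strict subset of $[0,t]$, condition~\ref{itm:dr} gives both $G_\infty = G_0$ (since the piece lies in $\s{G}_w$) and $\pi_\infty(a_0\,|\,w) = \pi_0(a_0\,|\,w)$, so the curly bracket factor equals zero. The integral is therefore zero $P_0$-almost surely, giving $P_0\phi_{\infty,t,a_0} = \theta_0(t,a_0)$.

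Next I would derive the expansion by an add-and-subtract decomposition that is standard for cross-fitted estimators. Writing $\theta_n(t,a_0) = n^{-1}\sum_k \sum_{i\in\s{V}_{n,k}} \phi_{n,k,t,a_0}(O_i) = n^{-1}\sum_k n_k \d{P}_n^k \phi_{n,k,t,a_0}$ and subtracting $\theta_0(t,a_0)$, I would add and subtract $P_0\phi_{n,k,t,a_0}$ fold-by-fold to obtain
\begin{align*}
\theta_n(t,a_0) - \theta_0(t,a_0) = \frac{1}{n}\sum_{k=1}^K n_k^{1/2} \d{G}_n^k \phi_{n,k,t,a_0} + \frac{1}{n}\sum_{k=1}^K n_k \bigl[P_0\phi_{n,k,t,a_0} - \theta_0(t,a_0)\bigr],
\end{align*}
using $n_k(\d{P}_n^k - P_0) = n_k^{1/2}\d{G}_n^k$. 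In the empirical process term I would further insert $\phi_{\infty,t,a_0}$ by adding and subtracting, yielding
\[ \frac{1}{n}\sum_{k=1}^K n_k^{1/2} \d{G}_n^k \phi_{n,k,t,a_0} = \frac{1}{n}\sum_{k=1}^K n_k^{1/2}\d{G}_n^k (\phi_{n,k,t,a_0} - \phi_{\infty,t,a_0}) + \frac{1}{n}\sum_{k=1}^K n_k(\d{P}_n^k - P_0)\phi_{\infty,t,a_0}. \]
Because $\phi_{\infty,t,a_0}$ is a fixed function not depending on the fold, $\sum_k n_k \d{P}_n^k \phi_{\infty,t,a_0} = n\d{P}_n \phi_{\infty,t,a_0}$ and $\sum_k n_k = n$, so the final term collapses to $(\d{P}_n - P_0)\phi_{\infty,t,a_0} = \d{P}_n\phi_{\infty,t,a_0}^* + [P_0\phi_{\infty,t,a_0} - \theta_0(t,a_0)]$. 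The bracketed quantity vanishes by the first part of the proof. Rewriting the coefficient $1/n$ as $(1/K)(K/n)$ in the two sum expressions delivers exactly the claimed identity.

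The main obstacle is the measure-theoretic verification in the first step: one must ensure that the pointwise decomposition $\s{S}_w \cup \s{G}_w = [0,t]$ together with the conditional equality $\pi_\infty(a_0\,|\,w) = \pi_0(a_0\,|\,w)$ whenever $\s{S}_w$ is a strict subset of $[0,t]$, is enough to force the integrand in Lemma~\ref{lemma:if_mean} to vanish $(\Lambda_\infty-\Lambda_0)(du\,|\,a_0,w)$-almost everywhere. Once this is in hand the second (algebraic) step of recombining empirical process averages into a single $\d{P}_n\phi_{\infty,t,a_0}^*$ term is routine.
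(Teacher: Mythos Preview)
Your proposal is correct and follows essentially the same approach as the paper: invoke Lemma~\ref{lemma:if_mean} and use the $\s{S}_w/\s{G}_w$ decomposition from~\ref{itm:dr} to kill the remainder, then carry out an add-and-subtract decomposition of the cross-fitted average (your order of insertions differs slightly from the paper's, but the algebra is equivalent). One small slip: in restating Lemma~\ref{lemma:if_mean} you wrote the integrand with a factor $S_\infty(u\m\,|\,a_0,W)$ where it should be the constant-in-$u$ prefactor $S_\infty(t\,|\,a_0,W)$; this does not affect your argument, since you only use that the signed measure vanishes on $\s{S}_w$ and the curly bracket vanishes on its complement.
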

\begin{proof}[\bfseries{Proof of Lemma~\ref{lemma:expansion}}]
By Lemma~\ref{lemma:if_mean}, $P_0 \phi_{\infty, t, a_0} - \theta_0(t, a_0)$ equals
\begin{align*}
 &E_{0} \left[ S_{\infty}(t \,|\, a_0, W) \int_0^t \frac{S_0(u\m \,|\, a_0, W)}{S_\infty(u \,|\, a_0, W)}\left\{\frac{\pi_0(a_0 \,|\, W) G_0(u \,|\, a_0, W)}{\pi_{\infty}(a_0 \,|\, W)G_\infty(u \,|\, a_0, W)} - 1\right\}  (\Lambda_{\infty} - \Lambda_0)(du \,|\, a_0, W)   \right].
\end{align*}
Now, we use \ref{itm:dr} to decompose the interval $[0,t]$ as $\s{S}_w \cup \s{S}_w^c$ for each possible value $w$ of $W$. By assumption, for $u \in \s{S}_w$, $\Lambda_0(u  \,|\, a_0, w)= \Lambda_\infty(u \,|\, a_0, w)$ so that $(\Lambda_{\infty} - \Lambda_0)(du \,|\, a_0, w) = 0$, and therefore the integral over $\s{S}_w$ is zero. If $\s{S}_w^c$ is not empty, then it is contained in $\s{G}_w$ by assumption, and for $u \in \s{G}_w$, $G_0(u \,|\, a_0, w) = G_\infty(u \,|\, a_0, w)$, and in this case $\pi_0(a_0 \,|\, w) = \pi_\infty(a_0 \,|\, w)$ by assumption as well, so that
\begin{align*}
&E_{0} \left[ S_{\infty}(t \,|\, a_0, W) \int_{\s{G}_W} \frac{S_0(u\m \,|\, a_0, W)}{S_\infty(u \,|\, a_0, W)}\left\{\frac{\pi_0(a_0 \,|\, W) G_0(u \,|\, a_0, W)}{\pi_{\infty}(a_0 \,|\, W)G_\infty(u \,|\, a_0, W)} - 1\right\} (\Lambda_{\infty} - \Lambda_0)(du \,|\, a_0, W)   \right]\\
&= E_{0} \left[ S_{\infty}(t \,|\, a_0, W) \int_{\s{S}_W} \frac{S_0(u\m \,|\, a_0, W)}{S_\infty(u \,|\, a_0, W)}\left\{\frac{\pi_0(a_0 \,|\, W) G_0(u \,|\, a_0, W)}{\pi_{0}(a_0 \,|\, W)G_0(u \,|\, a_0, W)} - 1\right\}(\Lambda_{\infty} - \Lambda_0)(du \,|\, a_0, W)   \right] = 0\ .
\end{align*}
Hence, decomposing $\int_0^t$ as $\int_{\s{S}_W}+\int_{\s{S}_W^c}$, we find that $P_0 \phi_{\infty, t, a_0}  =  \theta_0(t, a_0)$ since both integrals are zero.

To establish  the second part of the claim, we observe that $\theta_n(t,a_0) - \theta_0(t,a_0)$ can be expressed as
\begin{align*}
&\frac{1}{n} \sum_{k=1}^K \sum_{i \in \s{V}_{n,k}} \phi_{n,k, t,a_0}(O_i) - \theta_0(t,a_0)= \d{P}_n \phi_{\infty, t,a_0} - \theta_0(t,a_0) + \frac{1}{n} \sum_{k=1}^K \sum_{i \in \s{V}_{n,k}} \phi_{n,k, t,a_0}(O_i) - \d{P}_n \phi_{\infty, t,a_0} \\
&= \d{P}_n \phi_{\infty, t,a_0}^* + \frac{1}{n} \sum_{k=1}^K \sum_{i \in \s{V}_{n,k}} \left[ \phi_{n,k, t,a_0}(O_i) - \phi_{\infty, t,a_0}(O_i)\right] \\
&=  \d{P}_n \phi_{\infty, t,a_0}^* + \frac{1}{K} \sum_{k=1}^K \frac{K n_k}{n}  \frac{1}{n_k}\sum_{i \in \s{V}_{n,k}} \left[ \phi_{n,k, t,a_0}(O_i) - \phi_{\infty, t,a_0}(O_i)\right] \\
&=  \d{P}_n \phi_{\infty, t,a_0}^* + \frac{1}{K} \sum_{k=1}^K \frac{K n_k}{n} \d{P}_{n}^k \left(\phi_{n,k, t,a_0} - \phi_{\infty, t,a_0}\right) \\
&=  \d{P}_n \phi_{\infty, t,a_0}^* + \frac{1}{K} \sum_{k=1}^K \frac{K n_k}{n} (\d{P}_{n}^k - P_0) \left(\phi_{n,k, t,a_0} - \phi_{\infty, t,a_0}\right)+  \frac{1}{K} \sum_{k=1}^K \frac{K n_k}{n}P_0  \left(\phi_{n,k, t,a_0} - \phi_{\infty, t,a_0}\right)\\
&=  \d{P}_n \phi_{\infty, t,a_0}^* + \frac{1}{K} \sum_{k=1}^K \frac{K n_k^{1/2}}{n} \d{G}_{n}^k \left(\phi_{n,k, t,a_0} - \phi_{\infty, t,a_0}\right)+  \frac{1}{K} \sum_{k=1}^K \frac{K n_k}{n}\left[ P_0  \phi_{n,k, t,a_0} -\theta_0(t,a_0)\right].
\end{align*}
\end{proof}

Next, we provide a bound on the $L_2(P_0)$ distance between the estimated influence function and the limiting influence function in terms of discrepancies on the nuisance parameters. This result is useful both for demonstrating negligibility of the empirical process and second-order remainder terms.
\begin{lemma}\label{lemma:influence_bound}
If \ref{itm:bound} holds, there exists a universal constant $C(\eta)$ such that, for each $n$, $k$, $t$ and $a_0$,  $\{P_0\left( \phi_{n, k,t,a_0} - \phi_{\infty, t,a_0}\right)^2 \}^{1/2} \leq C(\eta)(A_{1,n,k,t,a_0}+A_{2,n,k,t,a_0}+A_{3,n,k,t,a_0})$, where
\begin{align*}
 A^2_{1,n,k,t,a_0}&:= E_{0}\left[ \sup_{u \in [0,t]} \left| \frac{S_{n,k}(t \,|\, a_0, W)}{S_{n,k}(u \,|\, a_0, W)} - \frac{S_{\infty}(t \,|\, a_0, W)}{S_{\infty}(u \,|\, a_0, W)} \right|  \right]^2\\
A^2_{2,n,k,t,a_0}&:= E_{0}  \left[ \frac{1}{\pi_{n,k}(a_0 \,|\, W)} - \frac{1}{\pi_{\infty}(a_0 \,|\, W)}\right]^2 \\
A^2_{3,n,k,t,a_0}&:= E_{0}\left[ \sup_{u \in [0,t]} \left| \frac{1}{G_{n,k}(u \,|\, a_0, W)} - \frac{1}{G_{\infty}(u \,|\, a_0, W)}\right| \right]^2,
\end{align*}
and $[P_0\{\sup_{u\in[0,t]}\left| \phi_{n, k,u,a_0} - \phi_{\infty, u,a_0}\right|\}^2 ]^{1/2} \leq C(\eta)(A^*_{1,n,k,t,a_0}+A_{2,n,k,t,a_0}+A_{3,n,k,t,a_0})$, where
\begin{align*}
 A^{*2}_{1,n,k,t,a_0}&:= E_{0}\left[ \sup_{u \in [0,t]}\sup_{v\in[0,u]} \left| \frac{S_{n,k}(u \,|\, a_0, W)}{S_{n,k}(v \,|\, a_0, W)} - \frac{S_{\infty}(u \,|\, a_0, W)}{S_{\infty}(v \,|\, a_0, W)} \right|  \right]^2\\
\end{align*}
\end{lemma}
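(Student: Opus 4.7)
The plan is to decompose $\phi_{n,k,t,a_0}-\phi_{\infty,t,a_0}$ via a telescoping sum that isolates the contributions of $\pi$, $G$, and $S$ (with its induced cumulative hazard $\Lambda$):
\begin{align*}
\phi_{n,k,t,a_0}-\phi_{\infty,t,a_0} &= \bigl(\phi_{S_{n,k},\pi_{n,k},G_{n,k},t,a_0}-\phi_{S_{n,k},\pi_\infty,G_{n,k},t,a_0}\bigr) \\
&\quad + \bigl(\phi_{S_{n,k},\pi_\infty,G_{n,k},t,a_0}-\phi_{S_{n,k},\pi_\infty,G_\infty,t,a_0}\bigr) \\
&\quad + \bigl(\phi_{S_{n,k},\pi_\infty,G_\infty,t,a_0}-\phi_{S_\infty,\pi_\infty,G_\infty,t,a_0}\bigr),
\end{align*}
bound each of the three summands $\Delta_\pi$, $\Delta_G$, $\Delta_S$ in $L^2(P_0)$ by $C(\eta)$ times the corresponding $A_j$ using condition~\ref{itm:bound}, and conclude via Minkowski's inequality. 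Central to all three bounds are the ratio $\rho_S(u,t\,|\, a_0,w):= S(t\,|\, a_0,w)/S(u\,|\, a_0,w)$ and the Lebesgue--Stieltjes identity $\rho_S(u,t\,|\, a_0,w)\,\Lambda(du\,|\, a_0,w)=d_u\rho_S(u,t\,|\, a_0,w)$ on $(0,t]$, a direct consequence of the product-integral calculus of \cite{gill1990product}. Integrating yields $\int_0^{t\wedge y}\rho_S(u,t\,|\, a_0,w)\,\Lambda(du\,|\, a_0,w)=\rho_S(t\wedge y,t\,|\, a_0,w)-S(t\,|\, a_0,w)$, which takes values in $[-1,1]$.

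For the $\pi$-only difference, only the factor $I(a=a_0)/\pi$ changes, so the difference factors as the product of $-I(a=a_0)\{1/\pi_{n,k}(a_0\,|\, w)-1/\pi_\infty(a_0\,|\, w)\}$ and the quantity
$$S_{n,k}(t\,|\, a_0,w)\left\{\frac{I(y\leq t,\delta=1)}{S_{n,k}(y\,|\, a_0,w)G_{n,k}(y\,|\, a_0,w)}-\int_0^{t\wedge y}\frac{\Lambda_{n,k}(du\,|\, a_0,w)}{S_{n,k}(u\,|\, a_0,w)G_{n,k}(u\,|\, a_0,w)}\right\}.$$
The curly-bracket expression is bounded in magnitude by $2\eta$: its boundary piece equals $\rho_{S_{n,k}}(y,t)/G_{n,k}(y)\leq\eta$ (since $\rho_{S_{n,k}}\leq 1$ when $y\leq t$ and $1/G_{n,k}\leq\eta$), while its integral piece equals $\int_0^{t\wedge y}\rho_{S_{n,k}}(u,t)\Lambda_{n,k}(du)/G_{n,k}(u)\leq\eta$ by the identity above. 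This delivers the $C(\eta)A_{2,n,k,t,a_0}$ bound on $\|\Delta_\pi\|_{L^2(P_0)}$. For $\Delta_G$, both components of the difference carry the factor $\rho_{S_{n,k}}(\cdot,t)\leq 1$, and the same identity caps the total mass of $\rho_{S_{n,k}}(\cdot,t)\Lambda_{n,k}(d\cdot)$, so $|\Delta_G|\leq C(\eta)\sup_{u\in[0,t]}|1/G_{n,k}(u\,|\, a_0,w)-1/G_\infty(u\,|\, a_0,w)|$, yielding the $A_3$ bound.

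The $S$-only difference is the main obstacle, since perturbing $S$ simultaneously perturbs $\rho_S$ and the driving measure $\Lambda$. After rewriting the two relevant integrals via $\rho_S\Lambda(du)=d_u\rho_S$, their difference reduces to
$$\int_0^{t\wedge y}\frac{d_u\bigl[\rho_{S_{n,k}}(u,t\,|\, a_0,w)-\rho_{S_\infty}(u,t\,|\, a_0,w)\bigr]}{G_\infty(u\,|\, a_0,w)}.$$
Lebesgue--Stieltjes integration by parts rewrites this as boundary contributions at $u=t\wedge y$ and $u=0$ minus $\int_0^{t\wedge y}[\rho_{S_{n,k}}-\rho_{S_\infty}](u,t)\,d_u\{1/G_\infty(u\,|\, a_0,w)\}$. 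Since $G_\infty$ is non-increasing with $G_\infty(0\,|\, a_0,w)=1$ and bounded below by $1/\eta$ on $[0,t]$, the total variation of $1/G_\infty$ there is at most $\eta-1$, so all contributions are dominated by a constant multiple of $\sup_{u\in[0,t]}|\rho_{S_{n,k}}(u,t\,|\, a_0,w)-\rho_{S_\infty}(u,t\,|\, a_0,w)|$. Combined with the remaining outer term $|S_{n,k}(t\,|\, a_0,w)-S_\infty(t\,|\, a_0,w)|=|\rho_{S_{n,k}}(0,t)-\rho_{S_\infty}(0,t)|$ from the leading $S(t)$ factor in $\phi$ and the bound $1/\pi_\infty\leq\eta$, this yields the $A_1$ bound.

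For the uniform statement, I would take $\sup_{t'\in[0,t]}$ pointwise of each of the three inequalities before passing to the $L^2(P_0)$ norm. The bounds for $\Delta_\pi$ and $\Delta_G$ already contain suprema over $u\in[0,t']$, and since $\sup_{t'\leq t}\sup_{u\leq t'}=\sup_{u\leq t}$, $A_2$ and $A_3$ re-appear unchanged. The $S$-only bound, however, becomes $C(\eta)\sup_{t'\leq t}\sup_{u\leq t'}|\rho_{S_{n,k}}(u,t'\,|\, a_0,w)-\rho_{S_\infty}(u,t'\,|\, a_0,w)|$, which is exactly the integrand defining $A^*_{1,n,k,t,a_0}$. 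The subtlest technical point throughout is handling possible atoms of $\Lambda$, $S$, and $G$ in the integration-by-parts step; this is accommodated cleanly by working with the Lebesgue--Stieltjes formulation and the product-integral identities of \cite{gill1990product}, ensuring that the argument applies to the mixed continuous--discrete distributions that the paper considers.
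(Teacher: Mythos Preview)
Your proposal is correct and follows essentially the same strategy as the paper's proof: both rely on the backwards-equation identity $\rho_S(u,t)\,\Lambda(du)=d_u\rho_S(u,t)$ and Lebesgue--Stieltjes integration by parts to control the $S$-contribution, and both use the bound $1/G\leq\eta$, $1/\pi\leq\eta$ from~\ref{itm:bound} together with $\int_0^{t\wedge y}\rho_S(u,t)\,\Lambda(du)\leq 1$ to control the remaining pieces.

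The only notable difference is organizational. The paper splits $\phi_{n,k,t,a_0}-\phi_{\infty,t,a_0}$ into six pieces $U_1,\dotsc,U_6$ (separating the indicator and integral contributions and mixing the order in which nuisances are switched), whereas you use a cleaner three-step telescoping $\Delta_\pi+\Delta_G+\Delta_S$. Your decomposition is more symmetric and slightly shorter, since each summand isolates exactly one nuisance; the paper's finer decomposition avoids having to bound the indicator and integral parts of $\Delta_G$ and $\Delta_S$ simultaneously, but at the cost of more terms. Either route yields the same constants up to absolute factors, and the uniform statement follows identically in both cases by taking $\sup_{t'\in[0,t]}$ before passing to $L^2(P_0)$.
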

\begin{proof}[\bfseries{Proof of Lemma~\ref{lemma:influence_bound}}]
First, we decompose $\phi_{n, k,t,a_0} - \phi_{\infty, t,a_0} = \sum_{j=1}^6 U_{j,n,k,t,a_0}$, where we define pointwise
\begin{align*}
U_{1,n,k,t,a_0}(o) :=&\ S_{n,k}(t \,|\, a_0, w) - S_{\infty}(t \,|\, a_0, w) \\
U_{2,n,k,t,a_0}(o) :=&\  -I(a = a_0) \left\{ \frac{1}{\pi_{n,k}(a_0 \,|\, w)} - \frac{1}{\pi_{\infty}(a_0 \,|\, w)}\right\}\left\{ \frac{I(y \leq t, \Delta =1) S_\infty(t \,|\, a_0, w)}{S_\infty(y \,|\,  a_0, w) G_\infty(y\,|\,  a_0, w)} \right.\\
&\qquad\qquad\quad\left. - \int_0^{t \wedge y} \frac{S_\infty(t \,|\, a_0, w)\Lambda_{\infty}(du \,|\,  a_0,w)}{S_\infty(u \,|\, a_0, w) G_\infty(u\,|\, a_0, w)} \right\}\\
U_{3,n,k,t,a_0}(o) :=&\ - \frac{I(a = a_0, y \leq t, \delta = 1)}{\pi_{n,k}(a_0\,|\, w)G_{\infty}(y \,|\, a_0, w)} \left\{ \frac{S_{n,k}(t \,|\, a_0, w)}{S_{n,k}(y \,|\, a_0, w)} -  \frac{S_\infty(t \,|\, a_0, w)}{S_\infty(y \,|\, a_0, w)}\right\}  \\
U_{4,n,k,t,a_0}(o) :=&\ -\frac{I(a = a_0, y \leq t, \delta = 1)S_{n,k}(t \,|\, a_0,w)}{\pi_{n,k}(a_0\,|\, w)S_{n,k}(y \,|\, a_0, w)} \left\{ \frac{1}{G_{n,k}(y \,|\, a_0, w)} - \frac{1}{G_{\infty}(y \,|\, a_0, w)}\right\} \\
U_{5,n,k,t,a_0}(o) :=&\ \frac{I(a = a_0)}{\pi_{n,k}(a_0 \,|\, W)}  \int_0^{t \wedge Y} \left\{\frac{1}{G_{n,k}(u \,|\, a_0, W)} -   \frac{1}{G_{\infty}(u \,|\, a_0, W)}  \right\} \frac{S_\infty(t \,|\, a_0, w)}{S_\infty(u \,|\, a_0, w)}\Lambda_{\infty}(du \,|\, a_0, w) \\
U_{6,n,k,t,a_0}(o) :=&\ \frac{I(a = a_0)}{\pi_{n,k}(a_0\,|\, w)} \int_0^{t \wedge y} \frac{1}{G_n(u\,|\, a_0, w)}\left\{ \frac{S_{n,k}(t \,|\, a_0, w)}{S_{n,k}(u \,|\, a_0, w)}  \Lambda_{n,k}(du \,|\, a_0, w) \right. \\
&\quad\qquad\qquad \left.- \frac{S_{\infty}(t \,|\, a_0, w)}{S_{\infty}(u \,|\, a_0, w)}  \Lambda_{\infty}(du \,|\, a_0, w)\right\}.
\end{align*}
By the triangle inequality, we have  $P_0 \left( \phi_{n, k,t,a_0} - \phi_{\infty, t,a_0}\right)^2 \leq \{\sum_{j=1}^6(P_0 U_{j,n, k, t, a_0} ^2)^{1/2}\}^2$.
We bound each term $P_0 U_{j,n, k, t, a_0} ^2$ individually.
First, since $S_{n,k}(0 \,|\, a_0, w) = S_{\infty}(0 \,|\, a_0, w) = 1$ for all $(a_0,w)$, we have
\begin{align*}
P_0U_{1,n,k,t,a_0}^2 &= E_{0} \left| S_{n,k}(t \,|\, a_0, W) - S_{\infty}(t \,|\, a_0, W) \right |^2 = E_{0} \left| \frac{S_{n,k}(t \,|\, a_0, W)}{S_{n,k}(0 \,|\, a_0, W)} - \frac{S_{\infty}(t \,|\, a_0, W)}{S_\infty(0 \,|\, a_0, W)} \right |^2\\
&\leq E_{0}\left[ \sup_{u \in [0,t]} \left|\frac{S_{n,k}(t \,|\, a_0, W)}{S_{n,k}(u \,|\, a_0, W)} - \frac{S_{\infty}(t \,|\, a_0, W)}{S_{\infty}(u \,|\, a_0, W)} \right | \right]^2.
\end{align*}
Next, noting that $y \leq t$ implies that $S_\infty(t \,|\, a_0, w) \leq S_\infty(y \,|\, a_0, w)$, and that the backwards equation (Theorem 5 of \citealp{gill1990product}) implies that $\int_0^t \frac{S(t)}{S(u)} \Lambda(du) = 1-S(t)$ for any survival function $S$, we have
\begin{align*}
P_0U_{2,n,k,t,a_0}^2 &= E_{0}\left[ I(A = a_0) \left\{ \frac{1}{\pi_{n,k}(a_0 \,|\, W)} - \frac{1}{\pi_{\infty}(a_0 \,|\, W)}\right\}^2\left\{ \frac{I(Y \leq t, \Delta =1) S_\infty(t \,|\, a_0, W)}{S_\infty(Y \,|\,  a_0, W) G_\infty(Y\,|\,  a_0, W)} \right. \right.\\
&\quad\qquad\qquad\left.\left. - \int_0^{t \wedge Y} \frac{S_\infty(t \,|\, a_0, W)\Lambda_{\infty}(du \,|\,  a_0,W)}{S_\infty(u \,|\, a_0, W) G_\infty(u\,|\, a_0, W)}\right\}^2\right] \\
&\leq \eta^2 E_{0}\left[  \left| \frac{1}{\pi_{n,k}(a_0 \,|\, W)} - \frac{1}{\pi_{\infty}(a_0 \,|\, W)}\right| \left\{ 1 + S_\infty(t \wedge Y \,|\, a_0, W) \right\}\right]^2 \\
&\leq 4\eta^2 E_{0}\left| \frac{1}{\pi_{n,k}( a_0\,|\, W)} - \frac{1}{\pi_{\infty}(a_0 \,|\, W)}\right|^2.
\end{align*} 
For the next term, we have
\begin{align*}
P_0U_{3,n,k,t,a_0}^2 &= E_{0}\left[ \frac{I(A = a_0, Y \leq t, \Delta = 1)}{\pi_{n,k}(a_0\,|\, W)^2G_{\infty}(Y \,|\, a_0, W)^2} \left\{ \frac{S_{n,k}(t \,|\, a_0, W)}{S_{n,k}(Y \,|\, a_0, W)} -  \frac{S_\infty(t \,|\, a_0, W)}{S_\infty(Y \,|\, a_0, W)}\right\}^2  \right] \\
&\leq \eta^4 E_{0}\left[ \sup_{u \in [0, t]} \left| \frac{S_{n,k}(t \,|\, a_0, W)}{S_{n,k}(u \,|\, a_0, W)} -  \frac{S_\infty(t \,|\, a_0, W)}{S_\infty(u \,|\, a_0, W)}\right| \right]^2.
\end{align*}
Similarly,
\begin{align*}
P_0U_{4,n,k,t,a_0}^2 &= E_{0}\left[  \frac{I(A = a_0, Y \leq t, \Delta = 1)S_{n,k}(t \,|\, a_0, W)^2}{\pi_{n,k}(a_0\,|\, W)^2S_{n,k}(Y \,|\, a_0, W)^2} \left\{ \frac{1}{G_{n,k}(Y \,|\, a_0, W)} -\frac{1}{G_{\infty}(Y \,|\, a_0, W)}\right\}^2 \right] \\
&\leq \eta^2E_{0}\left[ \sup_{u \in [0,t]} \left| \frac{1}{G_{n,k}(u \,|\, a_0, W)} - \frac{1}{G_{\infty}(u \,|\, a_0, W)}\right| \right]^2.
\end{align*}
 Next,
\begin{align*}
P_0U_{5,n,k,t,a_0}^2 &= E_{0}\left[    \frac{I(A=a_0)}{\pi_{n,k}(a_0\,|\, W)}  \int_0^{t \wedge Y} \left\{\frac{1}{G_{n,k}(u \,|\, a_0, W)} - \frac{1}{G_{\infty}(u \,|\, a_0, W)}  \right\}\frac{S_\infty(t \,|\, a_0, W)}{S_\infty(u \,|\, a_0, W)}\Lambda_{\infty}(du \,|\, a_0, W) \right]^2 \\
&\leq \eta^2 E_{0}\left[ \sup_{u \in [0,t]} \left| \frac{1}{G_{n,k}(u \,|\, a_0, W)} - \frac{1}{G_{\infty}(u \,|\, a_0, W)}\right|  \int_0^{t \wedge Y} \frac{S_\infty(t \,|\, a_0, W)}{S_\infty(u \,|\, a_0, W)}\Lambda_{\infty}(du \,|\, a_0, W)\right]^2 \\
&= \eta^2 E_{0}\left[ \sup_{u \in [0,t]} \left|\frac{1}{G_{n,k}(u \,|\, a_0, W)} - \frac{1}{G_{\infty}(u \,|\, a_0, W)}\right|\left| 1- S_\infty(t \wedge Y \,|\, a_0, W)  \right| \right]^2 \\
&\leq \eta^2 E_{0}\left[ \sup_{u \in [0,t]} \left| \frac{1}{G_{n,k}(u \,|\, a_0, W)} - \frac{1}{G_{\infty}(u \,|\, a_0, W)}\right| \right]^2.
\end{align*}

For the final term, we define  $B_{n,k,t}(u \,|\, a_0, w) := S_{n,k}(t \,|\, a_0, w) / S_{n,k}(u\,|\, a_0, w)$ and $B_{\infty, t}(u \,|\, a_0, w) := S_{\infty}(t \,|\, a_0, w) / S_{\infty}(u\,|\, a_0, w)$, and we note that  $B_{n,k, t}(du \,|\, a_0, w) = S_{n,k }(t \,|\, a_0, w) \Lambda_n(du \,|\, a_0, w) / S_n(u\,|\, a_0, w)$ and $B_{\infty, t}(du \,|\, a_0, w) = S_{\infty}(t \,|\, a_0, w) \Lambda_n(du \,|\, a_0, w) / S_\infty(u\,|\, a_0, w)$ by the backwards equation (Theorem 5 of \citealp{gill1990product}). Thus,
\begin{align*}
P_0U_{6,n,k,t,a_0}^2 &= E_{0}\left[   \frac{I(A = a_0)}{\pi_{n,k}(a_0 \,|\, W)^2}  \left[\int_0^{t \wedge Y} \frac{1}{G_{n,k}(u\,|\, a_0, W)}\left\{ \frac{S_{n,k}(t \,|\, a_0, W)}{S_{n,k}(u \,|\, a_0,W)}  \Lambda_{n,k}(du \,|\, a_0,W) \right.\right. \right.\\
&\quad\qquad\qquad\left.\left.\left.- \frac{S_{\infty}(t \,|\, a_0,W)}{S_{\infty}(u \,|\, a_0,W)}  \Lambda_{\infty}(du \,|\, a_0,W)\right\} \right]^2 \right] \\
&\leq \eta^2 E_{0} \left[\int_0^{t \wedge Y} \frac{1}{G_{n,k}(u\,|\, a_0, W)}\left\{ B_{n,k,t}(du \,|\, a_0,W)- B_{\infty, t}(du \,|\, a_0, W)\right\} \right]^2 .
\end{align*}
Using integration by parts, this upper bound can be re-expressed as
\begin{align*}
&\eta^2 E_{0}\left[ \frac{1}{G_n(t \wedge Y\,|\, a_0, W)}\left\{  B_{n,k,t}(t \wedge Y \,|\, a_0, W)-  B_{\infty,t}(t \wedge Y\,|\, a_0, W)\right\}  - B_{n,k,t}(0 \,|\, a_0, W)-  B_{\infty,t}(0 \,|\, a_0, W)\right.\\
&\quad\qquad\qquad\left.- \int_0^{t \wedge Y} \frac{1}{G_{n,k}(u\,|\, a_0, W)^2}\left\{ B_{n,k,t}(u \,|\, a_0, W) - B_{\infty, t}(u \,|\, a_0, W)\right\}G_{n,k}(du\,|\, a_0, W)  \right]^2 \\
&\leq  \eta^4 E_{0} \left[ \left|  \frac{S_{n,k}(t \,|\, a_0, W)}{S_{n,k}(t \wedge Y \,|\, a_0, W)}-  \frac{S_{\infty}(t \,|\, a_0, W)}{S_{\infty}(t \wedge Y\,|\, a_0, W)}\right| + \left|S_{n,k}(t \,|\, a_0, W) -  S_{\infty}(t \,|\, a_0, W)\right|\right.\\
&\quad\qquad\qquad\left.+  \sup_{u \in [0, t]} \left| \frac{S_{n,k}(t \,|\, a_0, W)}{S_{n,k}(u \,|\, a_0, W)} -  \frac{S_\infty(t \,|\, a_0, W)}{S_\infty(u \,|\, a_0, W)}\right| \frac{1}{G_{n,k}(u\,|\, a_0, W)}  \right]^2  \\
&\leq 3\eta^6 E_{0}\left[ \sup_{u \in [0, t]} \left|\frac{S_{n,k}(t \,|\, a_0, W)}{S_{n,k}(u \,|\, a_0, W)} -  \frac{S_\infty(t \,|\, a_0, W)}{S_\infty(u \,|\, a_0, W)}\right| \right]^2.
\end{align*}
The pointwise result follows. The uniform result follows from analogous calculations.
\end{proof}

Through the next several results, we demonstrate that the empirical process term $\d{G}_{n}^k \left(\phi_{n,k,t,a_0} - \phi_{\infty,t,a_0}\right)$ is $\fasterthan(n^{-1/2})$, which we use for all results that are uniform over $t \in [0, \tau]$. For this, we will need the notion of \emph{covering} and   \emph{bracketing numbers}. Given a class of functions $\s{F}$ on a sample space $\s{X}$, a norm $\| \cdot \|$,  and  $\varepsilon > 0$, the covering number $N(\varepsilon, \s{F}, \|\cdot\|)$ is the minimal number of $ \|\cdot\|$-balls of radius $\varepsilon$ needed to cover $\s{F}$. The centers of these balls need not be in $\s{F}$. An $(\varepsilon, \|\cdot\|)$ bracket is a set of the form $\{f \in \s{F}  : \ell(x) \leq f(x) \leq u(x)$ for all $x \in \s{X}\}$ such that $\|u - \ell\| \leq \varepsilon$ and is denoted $[\ell, u]$. Here, $\ell$ and $u$ need not be elements of $\s{F}$. The bracketing number $N_{[]}(\varepsilon, \s{F}, \|\cdot\|)$ is then defined as the minimal number of $(\varepsilon, \|\cdot\|)$ brackets needed to cover $\s{F}$. It is well known that $N(\varepsilon, \s{F}, \|\cdot\|) \leq N_{[]}(2\varepsilon, \s{F}, \|\cdot\|)$. Additional details regarding bracketing numbers and their relevance to empirical process theory may be found in Chapter~2 of \cite{van1996weak}.

\begin{lemma}\label{lemma:monotone_bracket}
Let $\s{F} := \{x \mapsto f_t(x): t \in [0,\tau]\}$ be a class of functions on a sample space $\s{X}$ such that $f_s(x) \leq f_t(x)$ for all $0 \leq s \leq t \leq \tau$ and $x \in \s{X}$, and such that an envelope $F$ for $\s{F}$ satisfies $\|F \|_{P, 2} < \infty$. Then $N_{[]}(\varepsilon\|F\|_{P,2}, \s{F}, L_2(P)) \leq 4 / \varepsilon^{2}$ for all $\varepsilon \in (0,1]$. If $\s{F}$ is uniformly bounded by a constant $M$, then $N(\varepsilon M, \s{F}, L_2(Q))\leq 1 / \varepsilon^2$ for each probability distribution $Q$ on $\s{X}$.
\end{lemma}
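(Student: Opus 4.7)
My plan is to construct explicit brackets by partitioning $[0,\tau]$ according to the growth of the auxiliary function $G(t) := P[F \cdot (f_t - f_0)]$. By the monotonicity hypothesis, $f_t - f_0 \geq 0$ pointwise, so $G$ is nondecreasing with $G(0) = 0$, and the pointwise bound $0 \leq f_\tau - f_0 \leq 2F$ (using $|f_\tau|, |f_0| \leq F$) gives $G(\tau) \leq P[F \cdot 2F] = 2\|F\|_{P,2}^2$.

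The key pointwise inequality is $(f_t - f_s)^2 \leq 2F(f_t - f_s)$ for $s \leq t$, obtained by multiplying $0 \leq f_t - f_s \leq 2F$ through by $f_t - f_s$. Integrating against $P$ yields $\|f_t - f_s\|_{P,2}^2 \leq 2[G(t) - G(s)]$. Setting $\delta := \varepsilon^2\|F\|_{P,2}^2/2$, I would construct a partition recursively by $t_0 := 0$ and $t_i := \inf\{t > t_{i-1} : G(t) > G(t_{i-1}) + \delta\}$ (with $\inf \emptyset := \tau$), stopping when $t_m = \tau$. By definition of the infimum, $G(t_i-) - G(t_{i-1}) \leq \delta$, while each completed interval consumes at least $\delta$ of the total $G$-budget $G(\tau) \leq 2\|F\|_{P,2}^2$, yielding $m \leq 4/\varepsilon^2$. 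Taking brackets $[f_{t_{i-1}}, f_{t_i-}]$, where $f_{t_i-}(x) := \lim_{s \uparrow t_i} f_s(x)$ denotes the pointwise left-limit (well-defined and measurable by monotonicity, and not required to lie in $\s{F}$), the monotonicity ensures $f_t \in [f_{t_{i-1}}, f_{t_i-}]$ for all $t \in [t_{i-1}, t_i)$, with the boundary points absorbed into degenerate brackets. The size bound $\|f_{t_i-} - f_{t_{i-1}}\|_{P,2}^2 \leq 2[G(t_i-) - G(t_{i-1})] \leq 2\delta = \varepsilon^2\|F\|_{P,2}^2$ then delivers the claimed bracketing count.

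For the covering-number claim in the uniformly bounded case, I would combine the bracketing bound (applied with constant envelope $F \equiv M$, so $\|F\|_{Q,2} = M$) with the standard midpoint trick: any $L_2(Q)$-bracket $[l, u]$ with $\|u-l\|_{Q,2} \leq \varepsilon' M$ is contained in the $L_2(Q)$-ball of radius $\varepsilon' M/2$ about the midpoint $(l+u)/2$, because $|f - (l+u)/2| \leq (u - l)/2$ pointwise for any $f \in [l, u]$. Applying the first bound with $\varepsilon'$ replaced by $2\varepsilon$ gives $N(\varepsilon M, \s{F}, L_2(Q)) \leq N_{[]}(2\varepsilon M, \s{F}, L_2(Q)) \leq 4/(2\varepsilon)^2 = 1/\varepsilon^2$.

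The main technical point requiring care is the construction of the partition in the presence of possible jumps of $G$ or of $t \mapsto f_t(x)$; using strict inequality in the defining infimum together with left-limit upper-ends of brackets preserves both coverage of $\s{F}$ and the size bound without inflating the count beyond $4/\varepsilon^2$.
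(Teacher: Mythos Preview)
Your approach mirrors the paper's: partition $[0,\tau]$ according to the growth of an auxiliary monotone function, use the monotonicity of $t\mapsto f_t$ to form brackets over each piece, and bound the number of pieces by comparing the increment size to the total range. You work with $G(t)=P[F(f_t-f_0)]$ and the pointwise inequality $(f_t-f_s)^2\le 2F(f_t-f_s)$; the paper instead uses $h(t)=P(f_t-f_0)^2$ together with $(v-u)^2\le v^2-u^2$ for $0\le u\le v$. Both routes are valid and yield the same count $4/\varepsilon^2$, and your covering-number reduction via the midpoint of a bracket is exactly the paper's.

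There is, however, a real gap in your handling of jumps. Your recursion $t_i:=\inf\{t>t_{i-1}:G(t)>G(t_{i-1})+\delta\}$ can stall at a right-discontinuity of $G$. For instance, take $f_t=I(t>1/2)$ and $F\equiv 1$, so $G(t)=I(t>1/2)$. Then $t_1=\inf\{t>0:G(t)>\delta\}=1/2$ with $G(t_1)=0$, and the next step gives $t_2=\inf\{t>1/2:G(t)>\delta\}=1/2=t_1$: the sequence never advances. The same example shows your budget claim can fail, since $G(t_1)-G(t_0)=0$ even though $G(t_1+)-G(t_0)\ge\delta$. Strict inequality in the infimum and left-limit bracket tops do not, by themselves, repair this; a degenerate bracket at $t_1$ covers $f_{t_1}$ but does nothing to restart the recursion. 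The paper's remedy is a careful case analysis: each interval $I_k$ is declared left/right open or closed depending on whether the relevant one-sided limits of $h$ exceed the threshold, and the baseline for step $k+1$ is $h(t_k+)$ when $I_k$ is right-closed and $h(t_k)$ when $I_k$ is right-open. The telescoping sum then involves the matching one-sided limits at each endpoint, so every step genuinely consumes at least $\eta^2$ of the budget. Your construction needs the same bookkeeping to be complete.
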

\begin{proof}[\bfseries{Proof of Lemma~\ref{lemma:monotone_bracket}}]
We note that this Lemma appears as Problem~2.7.3 in \cite{van1996weak}. The proof only relies on the fact that $t \in [0,\tau]$ through the fact that the bounds of the interval are finite, so  we can take $\tau =1$ without loss of generality. The finite second moment of the envelope function implies that $P f_0^2$ and $Pf_1^2$ are finite. Define $h:s\mapsto P( f_s - f_0)^2$. Then, $h$ is non-decreasing on $[0,1]$ with $h(0) = 0$ and $h(1) = P(f_1 - f_0)^2 \leq 4 \|F\|_{P,2}^2$.  Since $0 \leq u \leq v$ implies $(v - u)^2 \leq v^2 - u^2$, for all $0 \leq s \leq t \leq 1$, we have $P(f_t - f_s)^2  \leq P(f_t - f_0)^2 - P(f_s- f_0)^2 = h(t) - h(s)$.

Given $\varepsilon \in (0,1]$ and defining $\eta := \varepsilon \|F\|_{P,2}$, we will now produce a set of $(\eta, L_2(P))$-brackets $\{[\ell_k, u_k] : k=1,2, \dotsc, K\}$ covering $\s{F}$, where $K \leq 4/\varepsilon^2$. We do this in the following recursive manner. We start by defining $t_0 := 0$ and $\ell_1 := f_0$. If $h(0+) - h(0) \geq \eta^2$, then we set $t_1 := 0$, $I_1 := [t_0, t_1] = [0,0]$, and  $u_1 := f_0$. The bracket $[\ell_1, u_1] = [f_0, f_0]$ then has size $0 < \eta$. If $h(0+) - h(0) < \eta^2$, then the set $\{ t \leq 1 : h(t) -h(0) \leq \eta^2\}$ is necessarily non-empty, and we define $t_1 := \sup\{ t \leq 1 : h(t) -h(0) \leq \eta^2\}$. We then have $h(t_1-) - h(t_0) \leq \eta^2$ and $h(t_1+) - h(t_0) \geq \eta^2$. If $h(t_1) - h(t_0) \leq \eta^2$, we then define $I_1 := [t_0, t_1]$ and $u_1 := f_{t_1}$. Then the bracket $[\ell_1, u_1]$ satisfies $P(u_1 - \ell_1)^2 = P(f_{t_1} -f_0) \leq h(t_1) - h(0) \leq \eta^2$. If $h(t_1) - h(t_0) > \eta^2$, we define $I_1 := [t_0, t_1)$ and $u_1$ pointwise as $u_1 : x \mapsto \sup_{t < t_1} f_t(x)$. Then the bracket $[\ell_1, u_1]$ satisfies $P(u_1 - \ell_1)^2 \leq h(t_1-) - h(t_0) \leq \eta^2$ by the monotone convergence theorem.  Furthermore, in any of the cases above, the bracket $[\ell_1, u_1]$ covers $\{f_t : t \in I_1\}$ because $f_t \in [f_s, f_u]$ for any $0 \leq s \leq t \leq u \leq 1$ by the monotonicity of $t \mapsto f_t(x)$ for each $x$.

We now suppose that $I_{k}$ as an interval from $t_{k-1}$ to $t_k$ has been defined. If $I_k$ is right-closed and $t_k =  1$, then the process terminates. If $t_k < 1$ and $I_{k}$ is right-closed, we set $\ell_{k+1} := \inf_{t > t_{k}} f_t$ and $t_{k+1} := \sup\{ t \leq 1 : h(t) -h(t_k+) \leq \eta^2\}$, so that $h(t_{k+1}-) - h(t_k + ) \leq \eta^2$ and $h(t_{k+1}+) - h(t_k + )\geq \eta^2$. If $h(t_{k+1}) - h(t_k+) \leq \eta^2$, we define $I_{k+1} := (t_k, t_{k+1}]$ and $u_{k+1} := f_{t_{k+1}}$, so that $P(u_{k+1} - \ell_{k+1})^2  \leq h(t_{k+1}) - h(t_k+) \leq \eta^2$. If $h(t_{k+1}) - h(t_k+) > \eta^2$, we define $I_{k+1} := (t_{k}, t_{k+1})$ and $u_{k+1} :=  \sup_{t < t_1} f_t$, so that $P(u_{k+1} - \ell_{k+1})^2 \leq h(t_{k+1}-) - h(t_k+) \leq \eta^2$. 

If $I_k$ is right-open, we set $\ell_{k+1} := f_{t_k}$. If $h(t_k+) - h(t_k) \geq \eta^2$, then we set $t_{k+1} := t_k$, $I_{k+1} := [t_k, t_{k}]$, and $u_{k+1} := f_{t_k}$, so that the bracket $[\ell_{k+1}, u_{k+1}] = [f_{t_k}, f_{t_k}]$ has size 0. Otherwise, we set $t_{k+1} := \sup\{ t \leq 1 : h(t) -h(t_k) \leq \eta^2\}$, so that $h(t_{k+1}-) - h(t_k ) \leq \eta^2$ and $h(t_{k+1}+) - h(t_k )\geq \eta^2$. If $h(t_{k+1}) - h(t_k) \leq \eta^2$, we define $I_{k+1} := [t_k, t_{k+1}]$ and $u_{k+1} := f_{t_{k+1}}$, so that $P(u_{k+1} - \ell_{k+1})^2  \leq h(t_{k+1}) - h(t_k) \leq \eta^2$. If $h(t_{k+1}) - h(t_k) > \eta^2$, we define $I_{k+1} := [t_{k}, t_{k+1})$ and $u_{k+1} :=  \sup_{t < t_1} f_t$, so that $P(u_{k+1} - \ell_{k+1})^2 \leq h(t_{k+1}-) - h(t_k) \leq \eta^2$. 

In all of the above cases, we have that $[\ell_{k+1}, u_{k+1}]$ is an $(\eta, L_2(P))$-bracket that covers $\{f_t : t \in I_{k+1}\}$. We now define the sequence $\{r_k : k =1, 2, \dotsc\}$ depending on the form of each $I_k$ as follows. If $I_k = (t_{k-1}, t_{k})$, we define $r_k := h(t_{k}) - h(t_{k-1}+)$. If $I_k = (t_k, t_{k+1}]$, we define $r_k := h(t_{k}+) - h(t_{k-1}+)$. If $I_k = [t_{k-1}, t_{k})$, we define $r_k := h(t_{k}) - h(t_{k-1})$. Finally, if $I_k = [t_k, t_{k+1}]$, we define $r_k := h(t_{k}+) - h(t_{k-1})$. Using the definitions of $t_k$ and $I_k$ above, we then have $r_k \geq \eta^2$ for all $k$, so that $\sum_{j=1}^k r_j \geq k \eta^2$ for all $k \geq 1$. However, by a telescoping argument, we also have that for each $k \geq 1$,  $\sum_{j=1}^k r_j = h(t_k) - h(0)$ if $I_k$ is right-open and  $\sum_{j=1}^k r_j = h(t_k+) - h(0)$ if $I_k$ is right-closed. In either case, we have $k\eta^2 \leq \sum_{j=1}^k r_j \leq h(1) - h(0) \leq 4\eta^2 / \varepsilon^2$ by the definition of $\eta$. Therefore, $k \leq 4 / \varepsilon^2$, which implies that the process must terminate in a finite number of steps $K$, and that $K \leq 4/\varepsilon^2$. Furthermore, $t_{k-1} = t_k < 1$ implies that $t_{k} < t_{k+1}$, so  the process can only terminate if $I_k$ is right-closed with $t_k = 1$, which implies that $\cup_{k=1}^K I_k = [0,1]$. Therefore, $\{ [\ell_k, u_k] : k =1,\dotsc, K\}$ forms a set of no more than $4/\varepsilon^2$ $(\eta, L_2(P))$-brackets, and the union of these brackets covers $\cup_{k=1}^K\{f_t : t \in I_{k}\} = \{f_t : t \in [0,1]\} = \s{F}$. This completes the proof.

If $\s{F}$ is uniformly bounded by $M$, then we have that $\|F\|_{Q,2} = M$ for any $Q$. Then, the above implies that $N_{[]}(\varepsilon M, \s{F}, L_2(Q)) \leq 4 / \varepsilon^{2}$. Hence, in view of the basic relationship between bracketing and entropy numbers, we find that
$N(\varepsilon M, \s{F}, L_2(Q)) \leq N_{[]}( 2\varepsilon M , \s{F}, L_2(Q)) \leq 4 / (2\varepsilon)^2 = 1/\varepsilon^2$.
\end{proof}

In the next result, we use Lemma~\ref{lemma:monotone_bracket} to establish a polynomial bound on the uniform entropy numbers for the class of influence functions indexed by $t$.
\begin{lemma}\label{lemma:if_entropy}
Let $S$, $g$ and $G$ be fixed, where $t\mapsto S(t \,|\, a, w)$ is assumed to be non-increasing for each $(a,w)$, and where $G(t_0 \,|\, a_0, w) \geq 1/\eta$  and $\pi(a_0 \,|\, w) \geq 1/\eta$ for $P_0$-almost every $w$ and some $\eta \in (0, \infty)$. Then, the class of influence functions $\s{F}_{S, \pi, G, t_0,a_0} := \{ \phi_{S, \pi, G, t,a_0}  : t \in [0, t_0]\}$ satisfies
\[ \sup_Q N\left(\varepsilon \|F\|_{Q,2}, \s{F}_{S, \pi, G, t_0,a_0} , L_2(Q)\right) \leq  32/\varepsilon^{10}\]
for any $\varepsilon \in (0,1]$, where $F := 1+ 2\eta^2$ is an envelope of $\s{F}_{S, \pi, G, t_0,a_0}$, and the supremum is taken over all  distributions $Q$ on the sample space of the observed data.
\end{lemma}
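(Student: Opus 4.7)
The plan is to decompose $\phi_{S,\pi,G,t,a_0}$ into five summands each of which is monotone in $t$ for every fixed $o$, then apply Lemma~\ref{lemma:monotone_bracket} component-wise and combine via the triangle inequality for $L_2(Q)$-covers.

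I would first verify that $F = 1+2\eta^2$ is a pointwise envelope, bounding each of the three additive terms in $\phi_{S,\pi,G,t,a_0}$ separately. The term $S(t\,|\,a_0,w) \leq 1$. The observed-event term is bounded by $\eta^2$ using that the indicator forces $y \leq t$, so $S(t)/S(y) \leq 1$ by monotonicity of $S(\cdot\,|\,a_0,w)$, and that $1/[\pi(a_0\,|\,w)G(y\,|\,a_0,w)] \leq \eta^2$ by assumption. The integral term is also bounded by $\eta^2$ via the backwards equation $\int_0^v S(v)/S(u)\,\Lambda(du) = 1 - S(v)$ applied with $v = t \wedge y$, together with $S(t) \leq S(t \wedge y)$ and $1/G(u) \leq \eta$ for $u \leq t \leq t_0$.

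The heart of the proof is the decomposition of $\phi_{S,\pi,G,t,a_0}$ into monotone-in-$t$ pieces. The survival term $S(t\,|\,a_0,w)$ is already monotone non-increasing. The observed-event term can be decomposed via the identity
\begin{equation*}
I(y \leq t)\frac{S(t\,|\,a_0,w)}{S(y\,|\,a_0,w)} = I(y \leq t) - I(y \leq t)\left[1 - \frac{S(t\,|\,a_0,w)}{S(y\,|\,a_0,w)}\right],
\end{equation*}
which writes its $t$-dependent factor as a difference of two functions each non-decreasing in $t$ for every fixed $(y,w)$ with values in $[0,1]$; multiplying by the bounded $t$-independent prefactor $I(a=a_0,\delta=1)/[\pi(a_0\,|\,w)G(y\,|\,a_0,w)]$ yields two monotone components. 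For the integral term, writing $\tau(t,y,w) := S(t\,|\,a_0,w)\int_0^{t\wedge y}\Lambda(du\,|\,a_0,w)/[S(u\,|\,a_0,w)G(u\,|\,a_0,w)]$, a Gronwall-type argument using the identity $\Lambda(du)/S(u) = d(1/S)(u)$ shows that $\tau(u,y,w) < 1/G(u\,|\,a_0,w)$ and hence $\tau$ is non-decreasing in $t$ on $[0,y]$; and on $[y,t_0]$, $\tau(t,y,w) = S(t\,|\,a_0,w) \cdot \tau(y,y,w)/S(y\,|\,a_0,w)$ is non-increasing in $t$. This yields a Jordan-type decomposition $\tau = \tau_a - \tau_b$ with $\tau_a(t) := \tau(t \wedge y,y,w)$ and $\tau_b(t) := \tau_a(t) - \tau(t,y,w)$, both non-decreasing in $t$ and bounded by $\eta$; multiplying by $I(a=a_0)/\pi(a_0\,|\,w)$ produces the final two monotone components.

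With $\phi_{S,\pi,G,t,a_0}$ written as a sum of five monotone-in-$t$ components with envelopes $1, \eta^2, \eta^2, \eta^2, \eta^2$, Lemma~\ref{lemma:monotone_bracket} bounds the $L_2(Q)$-covering number of each component by $1/\varepsilon^2$ at the appropriate scale. Taking products of $\varepsilon\|F\|_{Q,2}/5$-covers of the five component classes yields the desired $\varepsilon\|F\|_{Q,2}$-cover of $\s{F}_{S,\pi,G,t_0,a_0}$, and arithmetic on the constants $5$ together with the envelopes produces the final bound $32/\varepsilon^{10}$. The main obstacle will be carrying out the monotonicity argument for $\tau$ on $[0,y]$ in full Lebesgue--Stieltjes generality, permitting atoms in $S$ and $G$, where the continuous-case ODE argument must be replaced by a difference-inequality argument carefully distinguishing between $S(u)$ and $S(u-)$.
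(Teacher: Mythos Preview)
Your strategy is sound but genuinely different from the paper's. The paper decomposes $\phi_{S,\pi,G,t,a_0}$ into only \emph{three} classes: the survival term $\{w\mapsto S(t\,|\,a_0,w)\}$, the event term $\{f_t\}$, and the integral term $\{h_t\}$. It does not monotonize the latter two. Instead, it writes $f_t$ as a \emph{product} of two monotone single-index classes ($\{I(y\leq t)\}$ and $\{S(t\,|\,a_0,w)\}$) with a fixed singleton factor, and invokes the product covering-number rule (Lemma~5.1 of van der Vaart, 2006) to obtain $4/\varepsilon^4$. For $h_t$ it introduces a dominating measure $\mu^*$ on $[0,t_0]$, writes $h_t(o)=\int m_t(u,o)\,\mu^*(du)$ with $m_t$ again in a product class, and uses Jensen's inequality $\|h_t-h_s\|_{L_2(Q)}\leq\|m_t-m_s\|_{L_2(\mu^*\times Q)}$ to transfer the covering bound. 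Multiplying the three bounds $(2/\varepsilon^2)(4/\varepsilon^4)(4/\varepsilon^4)$ yields the stated $32/\varepsilon^{10}$. Your five-piece monotone decomposition avoids both the product lemma and the dominating-measure device, which is a cleaner and more elementary route; your monotonicity argument for $\tau$ on $[0,y]$ is correct (the identity $1/G(t)-\tau(t)=S(t)\{1/G(0)+\int_{(0,t]} S(u)^{-1}\,d(1/G)(u)\}\geq 0$ does the job and survives in the Lebesgue--Stieltjes setting).

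There is, however, a gap in your final step: the constant cannot be $32$. Your five component envelopes sum to $1+4\eta^2$, strictly larger than $F=1+2\eta^2$. Allocating cover radii proportionally to component envelopes and applying Lemma~\ref{lemma:monotone_bracket} yields $N(\varepsilon F,\s{F}_{S,\pi,G,t_0,a_0},L_2(Q))\leq\{(1+4\eta^2)/(1+2\eta^2)\}^{10}/\varepsilon^{10}$, which is below $2^{10}/\varepsilon^{10}$ but not below $32/\varepsilon^{10}$. Five applications of a $1/\varepsilon^2$ bound simply cannot manufacture a $2^5$ constant. Your approach therefore proves a bound $C/\varepsilon^{10}$ with $C\leq 1024$, which is perfectly adequate for every downstream use in the paper (Donskerity, bounded entropy integrals), but it does not establish the lemma exactly as stated. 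To recover the constant $32$ you would have to collapse back to three classes via products as the paper does.
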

\begin{proof}[\bfseries{Proof of Lemma~\ref{lemma:if_entropy}}]
We note that $\s{F}_{S, \pi, G, t_0,a_0}$ is uniformly bounded by $1 + 2\eta^2$ due to the upper bounds on $1/G$ and $1/\pi$. Therefore, we can take as our envelope function $F := 1 + 2 \eta^2$. 

We define the functions $f_t$ and $h_t$ pointwise as
\begin{align*}
f_t(w, a, \delta, y) &:=  \frac{I(a = a_0, y \leq t, \delta = 1)S(t\,|\, a_0, w)}{ \pi(a \,|\, w)S( y \,|\, a, w) G(y \,|\, a,  w)}; \\
h_t(w,a,y) &:= \int \frac{I(a = a_0, u \leq y, u \leq t)S(t\,|\, a_0, w)}{ \pi(a_0 \,|\, w)S( u \,|\, a_0, w) G(u \,|\, a_0,  w)} \Lambda(du \,|\, a_0, w).
\end{align*}
We then define the function classes $\s{F}_{1, S, t_0,a_0} := \{ w \mapsto S(t \,|\, a_0, w) : t \in [0,t_0] \}$, 
$\s{F}_{2, S, \pi, G, t_0,a_0} := \{ (w, a, \delta, y) \mapsto f_t(w,a,\delta,y)  : t \in [0,t_0]\}$ and
$\s{F}_{3, S, \pi, G, t_0,a_0} := \{ (w, a, y) \mapsto  h_t(w,a,y) : t \in [0,t_0]\}$. We can then write 
\[\s{F}_{S, \pi, G, t_0,a_0} \subseteq \left\{f_1 - f_2 + f_3 : f_1 \in\s{F}_{1, S, t_0,a_0} , f_2 \in   \s{F}_{2, S, \pi, G, t_0,a_0}, f_3 \in \s{F}_{3, S, \pi, G, t_0,a_0}\right\}. \]
Since $u\mapsto S(u \,|\, a, w)$ is non-increasing for all $(a,w)$ and uniformly bounded by 1, Lemma~\ref{lemma:monotone_bracket} implies that $\sup_Q N(\varepsilon, \s{F}_{1, S, t_0,a_0}, L_2(Q)) \leq 2/\varepsilon^2$. The next two classes are slightly more complicated to study.

We note that $\s{F}_{2, S, \pi, G, t_0,a_0}$ is contained in the product of the classes $\{y \mapsto I(y \leq t) : t \in [0, t_0]\}$, $\s{F}_{1, S, t_0, a_0}$,  and the singleton class $\{ (w, a, \delta, y) \mapsto I(a = a_0, \delta = 1, y \leq t_0) /[ \pi(a_0 \,|\, w)S( y \,|\, a_0, w) G(y \,|\, a_0,  w)]\}$. The first two classes both have covering numbers $\sup_Q N(\varepsilon, \cdot, L_2(Q)) \leq 2 / \varepsilon^2$ by Lemma~\ref{lemma:monotone_bracket}. The third class has uniform covering number 1 for all $\varepsilon$ because it can be covered with a single ball of any positive radius. In addition, $\s{F}_{2, S, \pi, G, t_0,a_0}$ is uniformly bounded by $\eta^2$. Therefore, Lemma~5.1 of \cite{vaart2006survival} (see also Theorem 2.10.20 in \citealp{van1996weak}) implies that $\sup_Q N(\varepsilon\eta^2,\s{F}_{2, S, \pi, G, t_0,a_0}, L_2(Q)) \leq 4 / \varepsilon^4$.

Finally, we turn to $\s{F}_{3, S, \pi, G, t_0,a_0}$. Define the distribution function $\mu^*:t\mapsto 1-E_0[S(t \,|\, a_0, W)]$ and note that the probability measure defined by the distribution function $t\mapsto 1-S(t \,|\, a_0, w)$ is dominated by $\mu^*$ for $P_0$-almost all $w$ because if $\s{U}$ is a set such that $\mu^*(\s{U}) = 0$ then
\[ 0 = \int_{\s{U}} \mu^*(du)  = -\int_{\s{U}} E_0[S(du \,|\, a_0, W)] = -E_0 \left[ \int_{\s{U}}  S(du \,|\, a_0, W) \right] \]
by Fubini's theorem, which implies that $\int_{\s{U}}  S(du \,|\, a_0, w) = 0$ for $P_0$-almost every $w$ since $\int_{\s{U}}  S(du \,|\, a_0, w) \leq 0$ for all $w$. This implies that $\Lambda( \cdot \,|\, a_0, w)$ is also dominated by $\mu^*$ for $P_0$-almost all $w$. Hence, we can write
\[ h_t(w,a,y) = \int \frac{I(a = a_0, u \leq y, u \leq t)S(t\,|\, a_0, w)}{ \pi(a_0 \,|\, w)S( u \,|\, a_0, w) G(u \,|\, a_0,  w)} \lambda^*(u \,|\, a_0, w) \mu^*(du),\]
where $\lambda^*(du \,|\, a_0, w) := \Lambda(du \,|\, a_0, w) / \mu^*(du)$ is the Radon-Nikodym derivative of $\Lambda(\cdot \,|\, a_0, w)$ with respect to $\mu^*$. Furthermore, $\mu^*$ and $\lambda^*$ are fixed since they only depend on $P_0$ and $S$, which are fixed by assumption. We can then write $\s{F}_{3, S, \pi, G, t_0,a_0} := \{ (w, a, y) \mapsto \int m_t(u, w, a, y) \mu^*(du) : t \in [0, t_0] \}$, where we have defined
\[ m_t(u, w, a, y) := \frac{I(a = a_0, u \leq y, u \leq t)S(t\,|\, a_0, w)}{ \pi(a_0 \,|\, w)S( u \,|\, a_0, w) G(u \,|\, a_0,  w)} \lambda^*(u \,|\, a_0, w).\]
The class of functions $\s{M}_{t_0} := \{m_t : t \in [0, t_0]\}$ is then contained in the product of the singleton class $\{(w, a, y, u) \mapsto  I(a = a_0, u \leq y) \lambda^*(u \,|\, a_0, w)  /[ \pi(a_0 \,|\, w)S( u \,|\, a_0, w) G(u \,|\, a_0,  w)]\}$ and the classes $\{u \mapsto I(u \leq t) : t \in [0,t_0]\}$ and $\{ w \mapsto S(t\,|\, a_0, w) : t \in [0, t_0]\}$, which as discussed above both have $L_2(Q)$ covering number bounded by $2/\varepsilon^2$ for any probability measure $Q$. Therefore, by an analogous argument to that above, $N(\varepsilon \eta^2,\s{M}_{t_0}, L_2(Q)) \leq 4 / \varepsilon^4$ for every probability measure $Q$.  We next note that by Jensen's inequality, $\|h_t - h_s\|_{L_2(Q)} \leq \|m_t - m_s\|_{L_2(\mu^* \times Q)}$ for any probability measure $Q$, which implies that $N(\varepsilon \eta^2, \s{F}_{3, S, \pi, G, t_0,a_0}, L_2(Q)) \leq N(\varepsilon \eta^2, \s{M}_t, L_2(\mu^* \times Q)) \leq 4 / \varepsilon^4$ for all $Q$.

We have shown that $\s{F}_{S, \pi, G, t_0,a_0}$ is a sum of three classes with uniform covering numbers bounded by $2/\varepsilon^2$, $4 / \varepsilon^4$ and $4 / \varepsilon^4$, respectively. Therefore, by Lemma~5.1 of \cite{vaart2006survival}, 
\[\sup_Q N(\varepsilon (1 + 2\eta^2), \s{F}_{S, \pi, G, t_0,a_0}, L_2(Q)) \leq 32/\varepsilon^{10},\]
as claimed.

\end{proof}

\begin{lemma}\label{lemma:emp_process}
If \ref{itm:bound}--\ref{itm:conv} hold, then 
$\frac{1}{K} \sum_{k=1}^K \frac{Kn_k^{1/2}}{n}  \d{G}_{n}^k \left(\phi_{n,k,t,a_0} - \phi_{\infty,t,a_0}\right)  = \fasterthan(n^{-1/2})$.
If \ref{itm:unif_conv} also holds, then
\[\frac{1}{K} \sum_{k=1}^K \frac{Kn_k^{1/2}}{n} \sup_{u \in [0,t]} \left| \d{G}_{n}^k \left(\phi_{n,k,u,a_0} - \phi_{\infty,u,a_0}\right) \right| = \fasterthan(n^{-1/2}).\]
\end{lemma}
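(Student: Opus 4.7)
The central observation is that, conditionally on the training set $\s{T}_{n,k}$, the function $\phi_{n,k,u,a_0}$ is a fixed (non-random) function and the observations indexed by $\s{V}_{n,k}$ are i.i.d. from $P_0$, so standard empirical-process tools apply conditionally. I will establish both assertions by first obtaining a conditional-in-probability bound on $\d{G}_n^k(\phi_{n,k,\cdot,a_0} - \phi_{\infty,\cdot,a_0})$ (pointwise or uniformly in $u$) and then transferring it to an unconditional $\fasterthan(1)$ statement via the standard device that $P(|Y_n|>\varepsilon) \le P(E[|Y_n|\mid \s{T}_{n,k}] > \delta) + \delta/\varepsilon$. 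Since $K$ is bounded and $Kn_k^{1/2}/n = O(n^{-1/2})$, summing over folds then yields the stated $\fasterthan(n^{-1/2})$ rate.

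For the pointwise statement, conditioning on $\s{T}_{n,k}$ gives
\[
E\bigl[\left|\d{G}_n^k(\phi_{n,k,t,a_0} - \phi_{\infty,t,a_0})\right|^2\,\bigm|\, \s{T}_{n,k}\bigr] \;\le\; P_0(\phi_{n,k,t,a_0} - \phi_{\infty,t,a_0})^2.
\]
By Lemma~\ref{lemma:influence_bound} the right-hand side is bounded by a constant (depending only on $\eta$) times the sum $A_{1,n,k,t,a_0}+A_{2,n,k,t,a_0}+A_{3,n,k,t,a_0}$, which tends to zero in probability under \ref{itm:bound} and \ref{itm:conv}. The conditional-Markov argument above then yields $\d{G}_n^k(\phi_{n,k,t,a_0} - \phi_{\infty,t,a_0}) = \fasterthan(1)$ for each $k$, and the first claim follows.

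For the uniform statement, I will apply a maximal inequality (e.g., Theorem~2.14.1 of \citealp{van1996weak}) to the class $\s{H}_{n,k} := \{\phi_{n,k,u,a_0} - \phi_{\infty,u,a_0} : u \in [0,t]\}$ conditionally on $\s{T}_{n,k}$. The envelope $H_{n,k} := \sup_{u \in [0,t]}|\phi_{n,k,u,a_0} - \phi_{\infty,u,a_0}|$ satisfies, by the uniform part of Lemma~\ref{lemma:influence_bound}, $\|H_{n,k}\|_{P_0,2} \lesssim A^*_{1,n,k,t,a_0} + A_{2,n,k,t,a_0} + A_{3,n,k,t,a_0}$, which tends to zero in probability under \ref{itm:bound}, \ref{itm:conv}, and \ref{itm:unif_conv}. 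On the other hand, applying Lemma~\ref{lemma:if_entropy} separately to $\{\phi_{n,k,u,a_0} : u \in [0,t]\}$ and $\{\phi_{\infty,u,a_0} : u \in [0,t]\}$ (which are valid since monotonicity of the estimated and limiting conditional survival functions holds and the uniform lower bound $1/\eta$ from \ref{itm:bound} applies to both) gives a uniform covering bound of the form $\sup_Q N(\varepsilon \|H_{n,k}\|_{Q,2}, \s{H}_{n,k}, L_2(Q)) \le C/\varepsilon^{20}$ by the subadditivity of uniform covering numbers under sums. This bound is independent of the training sample, so the uniform entropy integral $J(1, \s{H}_{n,k})$ is bounded by a deterministic constant $C(\eta)$. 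The maximal inequality then yields
\[
E\left[\sup_{u \in [0,t]}\left|\d{G}_n^k(\phi_{n,k,u,a_0} - \phi_{\infty,u,a_0})\right| \,\bigm|\, \s{T}_{n,k}\right] \;\lesssim\; J(1,\s{H}_{n,k}) \,\|H_{n,k}\|_{P_0,2},
\]
whose right-hand side vanishes in probability. Transferring to an unconditional $\fasterthan(1)$ statement as before, and multiplying by $Kn_k^{1/2}/n$ and summing over $k$, gives the second claim.

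The main technical subtlety will be verifying that the entropy bound from Lemma~\ref{lemma:if_entropy} can be applied uniformly in $n$ and $k$, in particular checking that the estimators $S_{n,k}$ are (with probability tending to one) genuine survival functions so that monotonicity holds, and that the product-class manipulations used to combine two VC-type classes under subtraction preserve the polynomial covering bound with a universal constant. Modulo these bookkeeping points, the proof is standard application of cross-fitting together with the empirical-process control already assembled in Lemmas~\ref{lemma:influence_bound} and~\ref{lemma:if_entropy}.
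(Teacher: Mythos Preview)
Your proposal is correct and follows essentially the same approach as the paper: condition on the training set $\s{T}_{n,k}$, control the empirical process via a maximal inequality with envelope $\|\phi_{n,k,\cdot,a_0}-\phi_{\infty,\cdot,a_0}\|_{P_0,2}$ bounded by Lemma~\ref{lemma:influence_bound}, and use Lemma~\ref{lemma:if_entropy} (applied to each of the two classes and combined under sums) to obtain a polynomial covering bound with constant independent of $n$ and $k$. The only cosmetic difference is that the paper takes the unconditional expectation directly via the tower property and Jensen's inequality, whereas you phrase the transfer from conditional to unconditional through a Markov-type inequality; both routes are equivalent here.
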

\begin{proof}[\bfseries{Proof of Lemma~\ref{lemma:emp_process}}]
We first note that
\[ \frac{Kn_k^{1/2}}{n} \leq \frac{K\left( \left|n_k - n/K\right| +n/K \right) ^{1/2}}{n} \leq \frac{K \left|n_k - n/K\right|^{1/2} + K\left(n/K \right) ^{1/2}}{n}  \leq \frac{K}{n} + \left( \frac{K}{n}\right)^{1/2}\]
for all $k$ since $|n_k - n/K| \leq 1$  by assumption. Therefore, we have that
\[\left|\frac{1}{K} \sum_{k=1}^K \frac{Kn_k^{1/2}}{n}  \d{G}_{n}^k \left(\phi_{n,k,t,a_0} - \phi_{\infty,t,a_0}\right)  \right|\leq \boundeddet(n^{-1/2}) \frac{1}{K} \sum_{k=1}^K   \left| \d{G}_{n}^k \left(\phi_{n,k,t,a_0} - \phi_{\infty,t,a_0}\right)  \right|\]
since $K = \boundeddet(1)$. For the first claim, it suffices to show that
$\frac{1}{K} \sum_{k=1}^K   \left| \d{G}_{n}^k \left(\phi_{n,k,t,a_0} - \phi_{\infty,t,a_0}\right)  \right| =\fasterthan(1)$.
Using the law of iterated expectation, we write
\begin{align*}
 E_0  \left| \d{G}_{n}^k \left(\phi_{n,k,t,a_0} - \phi_{\infty,t,a_0}\right)  \right| &=  E_0\left[ E_0 \left[   \left| \d{G}_{n}^k \left(\phi_{n,k,t,a_0} - \phi_{\infty,t,a_0}\right)  \right| \,\middle|\, \s{T}_{n,k} \right] \right]= E_0\left[ E_0 \left[ \sup_{f \in \s{F}_{n,k,t,a}}  | \d{G}_{n}^k f | \,\middle|\, \s{T}_{n,k} \right] \right]
 \end{align*}
 with $ \s{F}_{n,k,t,a} $ denoting the class of functions containing $\phi_{n,k,t,a_0} - \phi_{\infty,t,a_0}$, which is a singleton class because $\phi_{n,k,t,a}$ is a fixed function when conditioning on the training set $\s{T}_{n,k}$. We will apply Theorem~2.14.1 of \cite{van1996weak} to bound the inner expectation. The covering number of this class is 1 for all $\epsilon$, so the uniform entropy integral $J(1,  \s{F}_{n,k,t,a_0})$ is 1 relative to the natural envelope $|\phi_{n,k,t,a_0} - \phi_{\infty,t,a_0}|$. Therefore, there is a universal constant $C'$ such that
 \begin{align*}
 E_0\left[E_0 \left[ \sup_{f \in \s{F}_{n,k,t,a}}  | \d{G}_{n}^k f  | \,\middle|\, \s{T}_{n,k} \right] \right]&\leq C' E_0\left[ E_0 \left[ \left\{\phi_{n,k,t,a_0}(O) - \phi_{\infty,t,a_0}(O)\right\}^2 \,\middle|\, \s{T}_{n,k} \right]^{1/2} \right].
\end{align*}
By Jensen's inequality and another application of the tower property, this is bounded by
\begin{align*}
 C' \left\{ E_0\left[ E_0 \left[ \left\{\phi_{n,k,t,a_0}(O) - \phi_{\infty,t,a_0}(O)\right\}^2 \,\middle|\, \s{T}_{n,k} \right] \right] \right\}^{1/2} &= C'  \left\{ E_0\left[ \left\{\phi_{n,k,t,a_0}(O) - \phi_{\infty,t,a_0}(O)\right\}^2  \right] \right\}^{1/2}.
 \end{align*}
By Lemma~\ref{lemma:influence_bound}, we therefore have that $\frac{1}{K} \sum_{k=1}^K   \left| \d{G}_{n}^k \left(\phi_{n,k,t,a_0} - \phi_{\infty,t,a_0}\right)  \right|$ is bounded above by $CC'\max_k C_k$, where $C_k:=C_{1k}+C_{2k}+C_{3k}$ with 
\begin{align*}
C^2_{1k}&:=E_{0}\left[ \sup_{u \in [0,t]} \left| \frac{S_{n,k}(t \,|\, a_0, W)}{S_{n,k}(u \,|\, a_0, W)} - \frac{S_{\infty}(t \,|\, a_0, W)}{S_{\infty}(u \,|\, a_0, W)} \right |  \right]^2\\
C^2_{2k}&:=E_{0}  \left| \frac{1}{\pi_{n,k}(a_0 \,|\, W)} - \frac{1}{\pi_{\infty}(a_0 \,|\, W)}\right|^2 \\
C^2_{3k}&:=E_{0}\left[ \sup_{u \in [0,t]} \left| \frac{1}{G_{n,k}(u \,|\, a_0, W)} - \frac{1}{G_{\infty}(u \,|\, a_0, W)}\right| \right]^2.
\end{align*}
By~\ref{itm:conv}, this upper bound tends  to zero in probability.
 
 For the uniform statement, we need to show that
 \[\frac{1}{K} \sum_{k=1}^K  \sup_{u \in [0,t]} \left| \d{G}_{n}^k \left(\phi_{n,k,u,a_0} - \phi_{\infty,u,a_0}\right)  \right| =\fasterthan(1).\]
 The basic argument is the same. Using the law of iterated expectation, we write
\begin{align*}
 E_0\left[  \sup_{u \in [0,t]} \left| \d{G}_{n}^k \left(\phi_{n,k,u,a_0} -\phi_{\infty,u,a_0}\right)  \right| \right] &=  E_0\left[ E_0 \left[  \sup_{u \in [0,t]} \left| \d{G}_{n}^k \left(\phi_{n,k,u,a_0} - \phi_{\infty,u,a_0}\right)  \right| \,\middle|\, \s{T}_{n,k} \right] \right]\\
 & = E_0\left[ E_0 \left[ \sup_{g \in \s{G}_{n,k,t,a_0}}  | \d{G}_{n}^k g  | \,\middle|\, \s{T}_{n,k} \right] \right],
 \end{align*}
 where $ \s{G}_{n,k,t,a_0} := \{\phi_{n,k,u,a_0} - \phi_{\infty,u,a_0} : u \in [0,t]\}$. When conditioning on the training set $\s{T}_{n,k}$, the functions $S_{n,k}$, $G_{n,k}$, $\pi_{n,k}$ and $\Lambda_{n,k}$ are fixed, so Lemma~\ref{lemma:if_entropy} implies that
 \[ \log \sup_Q N(\varepsilon \|\bar{G}_{n,k,t,a_0}\|_{Q,2},  \s{G}_{n,k,t,a_0} , L_2(Q)) \leq \tilde{C} \log \varepsilon^{-1}\]
 for some constant $\tilde{C}$ not depending on $n$, $k$, or $\varepsilon$, and where $\bar{G}_{n,k,t,a_0} := \sup_{u \in [0,t]} |\phi_{n,k,u,a_0} - \phi_{\infty,u,a_0}|$ is the natural envelope function for $\s{G}_{n,k,t,a_0}$. As a result, the uniform entropy integral 
 \[ J(1, \s{G}_{n,k,t,a_0} , L_2(P_0)) = \sup_Q \int_0^1 \left[ 1 + \log N(\varepsilon \|\bar{G}_{n,k,t,a_0}\|_{Q,2},  \s{G}_{n,k,t,a_0} , L_2(Q))\right]^{1/2} \, d\varepsilon  \]
 is bounded by a constant not depending on $n$ or $k$. By Theorem~2.14.2 of \cite{van1996weak}, there is therefore a constant $\bar{C}$  not depending on $n$ or $k$ such that
 \begin{align*}
 E_0\left[ E_0 \left[ \sup_{g \in \s{G}_{n,k,t,a_0}}  | \d{G}_{n}^k g  | \,\middle|\, \s{T}_{n,k} \right] \right]&\leq \bar{C} E_0\left[ E_0 \left[ \sup_{u \in [0,t]} \left[\phi_{n,k,t,a_0}(O) - \phi_{\infty,t,a_0}(O)\right]^2 \,\middle|\, \s{T}_{n,k} \right]^{1/2} \right] \\
 &\leq  \bar{C} \left\{ E_0\left[  \sup_{u \in [0,t]} \left[\phi_{n,k,t,a_0}(O) - \phi_{\infty,t,a_0}(O)\right\}^2 \right] \right\}^{1/2}.
\end{align*}
By Lemma~\ref{lemma:influence_bound}, we therefore have that $
\frac{1}{K} \sum_{k=1}^K   \sup_{u \in [0,t]} \left| \d{G}_{n}^k \left(\phi_{n,k,u,a_0} - \phi_{\infty,u,a_0}\right)  \right|$ is bounded above by $C\bar{C}\max_k C^*_k$, where $C^*_k:=C^*_{1k}+C_{2k}+C_{3k}$ with
\begin{align*}
C_{1k}^*:= E_{0}\left[ \sup_{u \in [0,t]}\sup_{v \in [0,u]} \left| \frac{S_{n,k}(u \,|\, a_0, W)}{S_{n,k}(v \,|\, a_0, W)} - \frac{S_{\infty}(u \,|\, a_0, W)}{S_{\infty}(v \,|\, a_0, W)} \right |  \right]^2.
\end{align*}
By~\ref{itm:conv} and~\ref{itm:unif_conv}, this tends in probability to zero.
 
\end{proof}

\begin{proof}[\bfseries{Proof of Theorem~\ref{thm:consistency}}]
By Lemma~\ref{lemma:expansion} and the triangle inequality, \ref{itm:dr} implies that $\left|\theta_{n}(t,a_0) -\theta_0(t,a_0) \right|$ is bounded above by
\begin{align*}
\left| \d{P}_n \phi_{\infty,t,a_0}^* \right|+ \left|\frac{1}{K} \sum_{k=1}^K \frac{K n_k^{1/2}}{n} \d{G}_{n}^k \left(\phi_{n,k,t,a_0} - \phi_{\infty,t,a_0}\right)\right|+ \left|\frac{1}{K} \sum_{k=1}^K \frac{K n_k}{n} P_0 \left(\phi_{n,k,t,a_0} - \phi_{\infty, t,a_0}\right) \right|.
\end{align*}
Since the first term is an empirical mean of a mean zero function by Lemma~\ref{lemma:expansion}, it is $\fasterthan(1)$ by the weak law of large numbers. By Lemma~\ref{lemma:emp_process}, \ref{itm:bound} and~\ref{itm:conv} imply that the second term is $\fasterthan(n^{-1/2})$. We note that 
\[\frac{Kn_k}{n} = \frac{K | n_k - n/K + n/K|}{n} \leq \frac{K |n_k - n/K| + n}{n} \leq K/n + 1 \leq 2\] 
since $ |n_k - n/K| \leq 1$. Therefore, by the triangle and Cauchy-Schwarz inequalities, the third term is bounded by 
\[ 2  \left[ \max_k P_0\left(  \phi_{n,k,t,a_0} - \phi_{\infty, t,a_0}\right)^2 \right]^{1/2}.\]
By Lemma~\ref{lemma:influence_bound}, \ref{itm:bound} implies that this is bounded by $
2C \left(\max_k C_{1k}\right)^{1/2}+2C\left(\max_k C_{2k}\right)^{1/2}+2C\left(\max_k C_{3k}\right)^{1/2}$
for $C$ depending only on $\eta$, and $C_{1k}$, $C_{2k}$ and $C_{3k}$ defined as in the proof of Lemma 6. By \ref{itm:conv}, this upper bound is $\fasterthan(1)$, which implies that $\left|\theta_{n}(t,a_0) -\theta_0(t,a_0) \right| = \fasterthan(1)$.

For uniform consistency, Lemma~\ref{lemma:expansion} and the triangle inequality, \ref{itm:dr} implies that
\begin{align*}
\sup_{u\in[0,t]} \left|\theta_{n}(u,a_0) -\theta_0(u,a_0) \right| &\leq \sup_{u\in[0,t]}  \left| \d{P}_n \phi_{\infty,u,a_0}^* \right|+ \sup_{u\in[0,t]}\left|\frac{1}{K} \sum_{k=1}^K \frac{K n_k^{1/2}}{n} \d{G}_{n}^k \left(\phi_{n,k,u,a_0} - \phi_{\infty,u,a_0}\right)\right| \\
&\qquad+ \sup_{u\in[0,t]}\left|\frac{1}{K} \sum_{k=1}^K \frac{K n_k}{n} P_0 \left(\phi_{n,k,u,a_0} - \phi_{\infty, u, a_0}\right) \right|.
\end{align*}
Lemma~\ref{lemma:if_entropy} implies that $\{\phi_{\infty, u, a_0} : u \in [0,t]\}$ is a $P_0$-Donsker class, so the first term on the right-hand side of the inequality above is $\bounded(n^{-1/2})$. By Lemma~\ref{lemma:emp_process}, \ref{itm:bound} and~\ref{itm:unif_conv} imply that the second term  is $\fasterthan(n^{-1/2})$. As above, the third term is $\fasterthan(1)$ by Lemma~\ref{lemma:influence_bound}. We thus find that $\sup_{u \in [0,t]}\left|\theta_{n}(u,a_0) -\theta_0(u,a_0) \right| = \fasterthan(1)$.
\end{proof}

\begin{proof}[\bfseries{Proof of Theorem~\ref{thm:asymptotic_linearity}}]
Since \ref{itm:dr} holds automatically when $S_\infty = S_0$, $G_\infty = G_0$ and $\pi_\infty = \pi_0$, Lemma~\ref{lemma:expansion} implies that 
\begin{align}
\theta_{n}(t,a_0) -\theta_0(t,a_0) &= \d{P}_n \phi_{0,t,a_0}^* + \frac{1}{K} \sum_{k=1}^K \frac{K n_k^{1/2}}{n} \d{G}_{n}^k \left(\phi_{n,k,t,a_0} - \phi_{0,t,a_0}\right) + \frac{1}{K} \sum_{k=1}^K \frac{K n_k}{n} \left[P_0 \phi_{n,k,t,a_0} - \theta_0(t,a_0)\right]\label{eq:theta_decomp_asy_lin}.
\end{align}
Since~\ref{itm:bound} and~\ref{itm:conv} hold by assumption, the second summand on the right-hand side is $\fasterthan(n^{-1/2})$ by Lemma~\ref{lemma:emp_process}, where we replace the symbol $\infty$ by $0$ throughout.  By Lemma~\ref{lemma:if_mean}, $P_0 \phi_{n,k,t,a_0} - \theta_0(t,a_0)$ equals 
\[E_{0} \left[ S_{n,k}(t \,|\, a_0, W) \int_0^t \frac{S_0(u\m \,|\, a_0, W)}{S_{n,k}(u \,|\, a_0, W)}\left\{\frac{\pi_0(a_0 \,|\, W) G_0(u \,|\, a_0, W)}{\pi_{n,k}(a_0 \,|\, W)G_{n,k}(u \,|\, a_0, W)} - 1\right\} (\Lambda_{n,k} - \Lambda_0)(du \,|\, a_0, W)   \right].\]
By the Duhamel equation (Theorem 6 of \citealp{gill1990product}), we have that
\[\frac{S_0(u\m \,|\, a_0, W)}{S_{n,k}(u \,|\, a_0, W)}(\Lambda_{n,k} - \Lambda_0)(du \,|\, a_0, W) =\left( \frac{S_0}{S_{n,k}} - 1 \right)(du \,|\, a_0, W),\]
and so the above equals
\begin{align*}
&E_{0} \left[ S_{n,k}(t \,|\, a_0, W) \int_0^t \left\{\frac{\pi_0(a_0 \,|\, W) G_0(u \,|\, a_0, W)}{\pi_{n,k}(a_0 \,|\, W)G_{n,k}(u \,|\, a_0, W)} - 1\right\} \left(\frac{S_0}{S_{n,k}} - 1 \right)(du \,|\, a_0, W)   \right] \\
&\qquad=E_{0} \left[ S_{n,k}(t \,|\, a_0, W) \int_0^t \left\{ \frac{\pi_0(a_0 \,|\, W) }{\pi_{n,k}(a_0 \,|\, W)} - 1\right\} \left(\frac{S_0}{S_{n,k}} - 1 \right)(du \,|\, a_0, W)   \right]\\
&\qquad\qquad + E_{0} \left[ S_{n,k}(t \,|\, a_0, W) \int_0^t \frac{\pi_0(a_0 \,|\, W) }{\pi_{n,k}(a_0 \,|\, W)} \left\{ \frac{G_0(u \,|\, a_0, W)}{G_{n,k}(u \,|\, a_0, W)}  - 1 \right\} \left(\frac{S_0}{S_{n,k}} - 1 \right)(du \,|\, a_0, W)   \right]\\
&\qquad=E_{0} \left[\left\{ \frac{\pi_0(a_0 \,|\, W) }{\pi_{n,k}(a_0 \,|\, W)} - 1\right\} S_{n,k}(t \,|\, a_0, W) \int_0^t\frac{S_0}{S_{n,k}}(du \,|\, a_0, W) \right] \\
&\qquad\qquad+E_0\left[\frac{\pi_0(a_0 \,|\, W) }{\pi_{n,k}(a_0 \,|\, W)}S_{n,k}(t \,|\, a_0, W) \int_0^t \left\{ \frac{G_0(u \,|\, a_0, W)}{G_{n,k}(u \,|\, a_0, W)}  - 1 \right\} \left( \frac{S_0}{S_{n,k}} - 1\right)(du \,|\, a_0, W)   \right] \\
&\qquad=E_{0} \left[ \frac{\left\{ \pi_{n,k}(a_0 \,|\, W) -  \pi_0(a_0 \,|\, W) \right\} \left\{ S_{n,k}(t \,|\, a_0, W) - S_0(t \,|\, a_0, W)\right\}}{\pi_{n,k}(a_0 \,|\, W)}\right] \\
&\qquad\qquad+E_{0} \left[ \frac{\pi_0(a_0 \,|\, W) }{\pi_{n,k}(a_0 \,|\, W)}S_{n,k}(t \,|\, a_0, W) \int_0^t \left\{ \frac{G_0(u \,|\, a_0, W)}{G_{n,k}(u \,|\, a_0, W)}  - 1 \right\} \left( \frac{S_0}{S_{n,k}} - 1\right)(du \,|\, a_0, W)   \right].
\end{align*}
In view of \ref{itm:bound}, the absolute value of the right-hand side of the last equality is bounded above by
\begin{align*}
 &\eta E_{0} \left[ \left| \pi_{n,k}(a_0 \,|\, W) -  \pi_0(a_0 \,|\, W) \right| \left| S_{n,k}(t \,|\, a_0, W) - S_0(t \,|\, a_0, W)\right|\right] \\
&\qquad+\eta E_0 \left[ \left| S_{n,k}(t \,|\, a_0, W) \int_0^t \left[ \frac{G_0(u\,|\, a_0, W)}{G_{n,k}(u \,|\, a_0, W)}  - 1 \right]  \left(\frac{S_0}{S_{n,k}} -1 \right)(du \,|\, a_0, W)  \right| \right].
\end{align*}
The maximum over $k$ of these expressions equals $\eta (r_{n,t,a,1} + r_{n,t,a,2})$. Therefore, since $\frac{1}{K} \sum_{k=1}^K \frac{K n_k}{n} \leq 2$ (as discussed in the proof of Theorem~\ref{thm:consistency} above), we have
\[ \left|\frac{1}{K} \sum_{k=1}^K \frac{K n_k}{n} \left[P_0 \phi_{n,k,t,a_0} - \theta_0(t,a_0)\right]  \right| \leq 2\eta(r_{n,t,a_0,1} + r_{n,t,a_0,2})\ ,\]which 
 is $\fasterthan(n^{-1/2})$ by \ref{itm:pointwise_rate}. This establishes that $\theta_n(t,a_0) = \theta_0(t,a_0) + \d{P}_n \phi_{0,t,a_0}^* + \fasterthan(n^{-1/2})$, as claimed in Theorem~\ref{thm:asymptotic_linearity}. Since $\phi_{0,t,a_0}^*$ is uniformly bounded, $P_0\phi^{*2}_{0,t,a_0}<\infty$, and since $P_0\phi^*_{0,t,a_0}=0$, it follows, as claimed, that 
\[n^{1/2} \d{P}_n \phi_{0,t,a_0}^* \indist N\left( 0, P_0\phi_{0,t,a_0}^{*2}\right).\]

For the uniform statements, we use the same decomposition as in~\eqref{eq:theta_decomp_asy_lin}. Since~\ref{itm:bound},~\ref{itm:conv} and~\ref{itm:unif_conv} hold by assumption,
\[\sup_{u \in [0,t]} \left|  \frac{1}{K} \sum_{k=1}^K \frac{K n_k^{1/2}}{n} \d{G}_{n}^k \left(\phi_{n,k,u,a_0} - \phi_{0,u,a_0}\right)\right| = \fasterthan(n^{-1/2})\] 
by Lemma~\ref{lemma:emp_process}, where we replace the symbol $\infty$ by $0$ throughout. Following the derivation above, we have that
\[ \sup_{u \in [0,t]} \left| \frac{1}{K} \sum_{k=1}^K \frac{K n_k}{n} \left[P_0 \phi_{n,k,u,a_0} - \theta_0(u,a_0)\right]\right| \leq \sup_{u \in [0,t]} 2\eta(r_{n,u,a_0,1} + r_{n,u,a_0,2}),\]
which is $\fasterthan(n^{-1/2})$ by~\ref{itm:unif_rate}. Therefore, $ \sup_{u \in [0,t]}\left| \theta_n(u,a_0) - \theta_0(u,a_0) - \d{P}_n \phi_{0,u,a_0}^*\right| = \fasterthan(n^{-1/2})$
as claimed. Since $\{\phi_{0,u,a_0}^* : u \in [0,t]\}$ is a uniformly bounded $P_0$-Donsker class by Lemma~\ref{lemma:if_entropy}, $\left\{n^{1/2}  \d{P}_n \phi_{0,u,a_0}^* : u \in [0, t]\right\}$ converges weakly to a tight mean-zero Gaussian process with covariance $(u,v) \mapsto P_0\left( \phi_{0,u,a_0}^*\phi_{0,v,a_0}^*\right)$.
\end{proof}

\begin{proof}[\bfseries{Proof of Theorem~\ref{thm:loss}}]
We first note that for $P_0$-almost every $(a,w)$ and all $t \leq \tau$, conditions (A1)--(A5) imply that
\begin{align*}
E_0\left[ \frac{\Delta I(Y \leq t)}{G_0(Y \,|\, a, w)}\, \middle|\, A = a, W =w\right] &= -\int_0^\infty \frac{I(y \leq t) }{G_0(y \,|\, a, w)}G_0(y  \,|\, a, w) S_0(dy \,|\, a, w) = -\int_0^{t}  S_0(dy \,|\, a, w) \\
&= 1 - S_0(t  \,|\, a, w) = P_{0,F}(T \leq t \, | \, A = a, W=w)\ .
\end{align*}
Therefore, for any such $(t, a,w)$
 \[ s \mapsto  E_0 \left[ s \left[s - 2 \left\{ 1 - \frac{\Delta I(Y \leq t)}{G_0(Y \,|\, A, W)}\right\}\right]\,\middle|\, A = a, W =w\right] =s \left\{s - 2 S_0(t \, | \, a,w)\right\} \]
 is uniquely minimized by $s = S_0(t \, | \, a,w)$. Hence, any minimizer $S^*$ of
 \[ S \mapsto E_0\left[ \int_0^\tau E_0 \left[ S(t \, | \, A,W) \left[S(t \, | \, A,W) - 2 \left\{ 1 - \frac{\Delta I(Y \leq t)}{G_0(Y \,|\, A, W)}\right\}\right]\,\middle|\, A , W\right] dt\right]\]
satisfies $S^*(t \, | \, a,w) = S_0(t \, | \, a,w)$ for $P_0$-almost every $(a,w)$ and all $t \leq \tau$. If the integral with respect to $t$ can be exchanged with the conditional expectation with respect to $(Y,\Delta)$ given $A=a$ and $W=w$, then the result follows. To justify exchanging these integrals, we use Fubini's theorem, which requires demonstrating that
 \[ \int_0^\tau E_0\left\{ \left| S(t \, | \, a,w) \left[S(t \, | \, a,w) - 2 \left\{ 1 - \frac{\Delta I(Y \leq t)}{G_0(Y \,|\, a, w)}\right\}\right] \right| \, \middle|\,A = a, W= w\right\}dt< \infty\]
 for $P_0$-almost every $(a,w)$. Since $S(t \, | \, a,w)$ and $G_0(t \, | \, a,w)$ are both contained in $[0,1]$, it holds that
 \begin{align*}
 &\left| S(t \, | \, a,w) - 2 \left\{ 1 - \frac{\delta I(y \leq t)}{G_0(y\,|\, a, w)}\right\}\right|  \\
 &=  \delta I(y\leq t) \left| S(t \, | \, a,w) - 2\left\{1 - \frac{1}{G_0(y\,|\, a, w)}\right\} \right| + [1-\delta I(y \leq t)] | S(t \, | \, a,w) - 2| \\
 &= \delta I(y\leq t) \left\{ S(t \, | \, a,w) - 2 + \frac{2}{G_0(y\,|\, a, w)}\right\} + [1-\delta I(y \leq t)] [ 2- S(t \, | \, a,w)]\leq 2\left\{ 1+  \frac{\delta I(y\leq \tau)}{G_0(y\,|\, a, w)}\right\}\ .
 \end{align*}
 Thus, we find that
 \begin{align*}
& \int_0^\tau E_0\left[ \left| S(t \, | \, a,w) \left[S(t \, | \, a,w) - 2 \left\{ 1 - \frac{\Delta I(Y \leq t)}{G_0(Y \,|\, a, w)}\right\}\right] \right| \, \middle|\, A = a, W = w\right] dt \\
&\qquad\leq 2\tau + 2 \int_0^\tau E_0\left[\frac{\Delta I(Y\leq \tau)}{ G_0(Y\,|\, a, w)} \,\middle|\,A = a, W = w\right] dt = 2\tau  - 2\tau \int_0^\tau S_0(dy \, | \, a, w) \leq 4\tau
 \end{align*}
 for $P_0$-almost every $(a,w)$. Hence, the order of integration may be exchanged.

Similarly, for any $(a, w)$ in the support of $(A,W)$ and $t \leq \tau$ such that $S_0(t\m \, | \, a, w) > 0$, we note that
\begin{align*}
E_0\left[ \frac{(1-\Delta)I(Y < t) }{S_0(Y \,|\, A, W)}\,\middle|\, A = a, W =w\right] &= -\int_0^\infty \frac{I(y < t)}{S_0(y \,|\, a, w)} S_0(y  \,|\, a, w) G_0^+(dy \,|\, a, w) \\
&= -\int_{[0,t)}  G_0^+(dy \,|\, a, w) = 1 - G_0(t  \,|\, a, w)\\
&= P_{0,F}(C < t \, | \, A = a, W=w)\ ,
\end{align*}
where $G_0^+$ is the right-continuous version of $G_0$ (which is left-continuous by definition). Therefore, for any such $(t, a,w)$, it follows that
 \[ g \mapsto  E_0 \left[ g \left[g - 2 \left\{ 1 - \frac{(1-\Delta) I(Y < t)}{S_0(Y \,|\, A, W)}\right\}\right]\,\middle|\, A = a, W =w\right] =g \left\{g - 2 G_0(t \, | \, a,w)\right\} \]
 is uniquely minimized by $g = G_0(t \, | \, a,w)$. For $t \leq \tau$ such that $S_0(t\m \, | \, a, w)= 0$, we have that
 \begin{align*}
E_0\left[ \frac{(1-\Delta)I(Y < t) }{S_0(Y \,|\, A, W)}\,\middle|\, A = a, W =w\right] &= -\int_0^\infty \frac{I(y < t)}{S_0(y \,|\, a, w)} S_0(y  \,|\, a, w)\,G_0^+(dy \,|\, a, w) \\
&= -\int_{[0,t^+(a,w))}  G_0^+(dy \,|\, a, w) = 1 - G_0(t^+(a,w)  \,|\, a, w)\ ,
\end{align*}
 where $t^+(a,w) := \inf\{t : S_0(t\, | \, a, w)= 0\}$. Hence, it follows that
 \[ G \mapsto E_0\left[ \int_0^\tau E_0 \left\{ G(t \, | \, A,W) \left[G(t \, | \, A,W) - 2 \left\{ 1 - \frac{(1-\Delta) I(Y < t)}{S_0(Y \,|\, A, W)}\right\}\right]\,\middle|\, A , W\right\} \,dt \right] \]
 is minimized by $G^*$, where $G^*(t \, | \, a,w) = G_0(t \, | \, a,w)$ if $S_0(t\m \, | \, a, w) > 0$  and $G^*(t \, | \, a,w) = G_0(t^+(a,w) \, | \, a,w)$ otherwise for $P_0$-almost every $(a,w)$. Similarly as was done above, it is straightforward to show that the inner function is integrable, thus justifying the use of Fubini's theorem for exchanging the order of integration. This completes the proof.

\end{proof}

\end{adjustwidth}

\end{document}